\theoremstyle{plain}
\newtheorem{theorem}{Theorem}[section]
\newtheorem{lemma}[theorem]{Lemma}
\newtheorem{proposition}[theorem]{Proposition}
\theoremstyle{definition}
\newtheorem{definition}[theorem]{Definition}
\newcommand {\Set}[1] {\mathbb{#1}}
\newcommand{\setR}[0]{\Set{R}}
\newcommand{\setC}[0]{\Set{C}}
\newcommand{\mediumMatrix}[0]{P}
\newcommand {\newdecomp}[0] {algebraically decomposable}
\newcommand{\cG}[0]{{\mathscr{G}}}
\newcommand{\rr}[0]{\alpha}
\newcommand{\pos}[0]{\beta}
\newcommand{\pd}[2]{\frac{\partial #1}{\partial #2}}
\newcommand{\medTensor}[0]{{\Omega^2\ \!\!_2}}
\newcommand{\medTwTensor}[0]{{\widetilde \Omega^2\ \!\!_2}}
\newcommand{\slaz}[0]{\setminus \{0\}}
\title[{Characterisation and representation of medium with a double light cone} ]
{Characterisation and representation of non-dissipative electromagnetic medium with a double light cone}
\author[Dahl]{Matias F. Dahl}
\begin{document}
\maketitle
\begin{abstract}
  We study Maxwell's equations on a $4$-manifold $N$ with a medium that is non-dissipative and has a
  linear and pointwise response. In this setting, the medium can be represented by a suitable
  $2\choose 2$-tensor on the $4$-manifold $N$.  Moreover, in each cotangent space on $N$, the medium
  defines a \emph{Fresnel surface}.  Essentially, the Fresnel surface is a tensorial analogue of the
  dispersion equation that describes the response of the medium for signals in the geometric optics
  limit. For example, in isotropic medium the Fresnel surface is at each point a Lorentz light cone.
  In a recent paper, I.~Lindell, A.~Favaro and L.~Bergamin introduced a condition that constrains
  the polarisation for plane waves.  In this paper we show (under suitable assumptions) that a
  slight strengthening of this condition gives a pointwise characterisation of all medium tensors
  for which the Fresnel surface is the union of two distinct Lorentz null cones. This is for example
  the behaviour of uniaxial medium like calcite. Moreover, using the representation formulas from
  Lindell \emph{et al.} we obtain a closed form representation formula that pointwise parameterises
  all medium tensors for which the Fresnel surface is the union of two distinct Lorentz null cones.
  Both the characterisation and the representation formula are tensorial and do not depend on local
  coordinates.
\end{abstract}


\section{Introduction}

We will study the \emph{pre-metric} Maxwell's equations, where Maxwell's equations are written on a
$4$-manifold $N$ and the electromagnetic medium is described by a suitable antisymmetric $2\choose
2$-tensor $\kappa$ on $N$ that pointwise is determined by $36$ real parameters.
%
In each cotangent space on $N$, the electromagnetic medium determines a fourth order polynomial
surface called the \emph{Fresnel surface} that can be seen as a tensorial analogue of the dispersion
equation. The Fresnel surface describes the response of the medium to signals in the geometric
optics limit \cite{ ObuFukRub:00, Rubilar2002, Obu:2003, PunziEtAl:2009, RSS:2011}.
In this work we will assume that the medium is \emph{skewon-free}. Then there are only $21$ free
parameters and such medium models non-dissipative medium. For example, under suitable assumptions
the skewon-free assumption will imply that Poynting's theorem holds \cite{Obu:2003, Dahl:2009}.
%
%
%
On an orientable manifold one can show that invertible skewon-free $2\choose 2$-tensors are in
one-to-one correspondence with \emph{area metric}. By an area metric, we here mean a $0\choose
4$-tensor on $N$ that defines a symmetric non-degenerate inner product for bivectors.  Area metrics
appear when studying the propagation of a photon in a vacuum with a first order correction from
quantum electrodynamics \cite{DruHath:1980, Schuller:2010}.  The Einstein field equations have also
been generalised into equations where the unknown field is an area metric \cite{PSW_JHEP:2007}.  For
further examples, see \cite{PunziEtAl:2009, Schuller:2010}.

We know that in isotropic medium like vacuum, the Fresnel surface is a Lorentz null cone at each
point in $N$. That is, Lorentz geometry describes the propagation of light in isotropic medium.
Conversely, it was conjectured in $1999$ by Y.~Obukhov and F.~Hehl \cite{ObukhovHehl:1999,
  ObuFukRub:00} that isotropic medium is the only (non-dissipative and axion-free) medium where the
Fresnel surface is a Lorentz null cone.
This was partially proven already in  \cite{ObuFukRub:00}. However, the full conjecture 
was only established in \cite{FavaroBergamin:2011} by A.~Favaro and L.~Bergamin.
For an alternative proof, see \cite{Dahl:2011:Closure} and for 
further discussions and related results, see \cite{ObuRub:2002, Obu:2003, LamHeh:2004, Itin:2005} 
and Section \ref{sec:nonbire} below.

Since the Fresnel surface is a $4$th order polynomial surface, the Fresnel
surface can also decompose into the union of two distinct Lorentz null cones.  For example, this is
the case in \emph{uniaxial medium} like calcite (CaCO$_3$) \cite[Section 15.3]{BornWolf:1999}.  In
such medium, the propagation properties of the medium does not only depend on direction, but also on
the polarisation of the wave. In uniaxial medium, there are two eigenpolarisations and one null cone
for each polarisation. In consequence, there is one Fermat's principle for each polarisation
\cite{PunziEtAl:2009}.  This is the the source for the physical phenomenon of double refraction.

We know that \emph{uniaxial medium} is an example of medium with two distinct null cones.  A
natural next task is to understand the structure of all medium tensors with this property.  This is
the main result in \cite{Dahl:2011:DoubleCone}, which gives the complete local description of all
non-dissipative medium tensors for which the Fresnel surface is a double light cone (up to suitable
assumptions).  The importance of this result is that it shows that are three and only three medium
classes with this behaviour. Moreover, the theorem gives explicit coordinate expressions for each
medium class.  The first medium class is a slight generalisation of uniaxial medium. The second
class seems to be a new class of mediums. The last class seems to be unphysical; heuristic arguments
and preliminary numerical tests suggest that Maxwell's equations are not hyperbolic in that class
\cite{Dahl:2011:DoubleCone}.  In the below, this result is summarised in Theorem
\ref{thm:factorMedium}.

The main contribution of this paper is Theorem \ref{thm:MainDecomp}.  Under suitable assumptions,
this theorem gives a tensorial characterisation (condition \ref{thm:MainDecomp:II} in Theorem
\ref{thm:MainDecomp}) of all non-dissipative medium tensors for which the Fresnel surface is two
distinct light cones. In a suitable limit, the condition also reduces to the \emph{closure
  condition} $\kappa^2 = -\lambda\operatorname{Id}$ for a $\lambda>0$ that characterises medium with
a single light cone \cite{Obu:2003}. Moreover, in Theorem \ref{thm:MainDecomp} we give a tensorial
representation formula (equation \eqref{eq:RepresentationOfDoubleCone}) that parameterises all
non-dissipative medium tensors with two distinct light cones. Both the characterisation and
representation formula are pointwise results.

The background and motivation for Theorem \ref{thm:MainDecomp} comes from a recent paper by
I.~Lindell, A.~Favaro and L.~Bergamin \cite{LindellBergaminFavaro:Decomp}.  In Section
\ref{eq:twoChar} we will briefly summarise some of the results from
\cite{LindellBergaminFavaro:Decomp}.  In this paper, the authors introduces a second order polynomial
condition on the medium tensor (equation \eqref{eq:eq44} in the below).
Equation \eqref{eq:eq44} is derived from a constraint on polarisation of plane waves, and in
\cite{LindellBergaminFavaro:Decomp} it is shown that whenever condition \eqref{eq:eq44} is satisfied
(plus some additional assumptions), the Fresnel surface always factorises into two second order
surfaces.
In Section \ref{sec:44Factorisability} we will further motivate that equation \eqref{eq:eq44} is in fact a
general factorisability condition for the Fresnel surface.  At first this might seem unexpected
since equation \eqref{eq:eq44} was initially derived from a constraint on polarisation, yet it is
able to constrain the behaviour of signal speed. However, the explanation is that for
electromagnetic waves, polarisation and signal speed are not independent properties but tied
together.  In Theorem \ref{thm:MainDecomp}, condition \ref{thm:MainDecomp:II} is a slight
strengthening of equation \eqref{eq:eq44}.  Also, representation formula 
\eqref{eq:RepresentationOfDoubleCone} in Theorem
\ref{thm:MainDecomp} is adapted from \cite{LindellBergaminFavaro:Decomp} and constitute a subclass
of generalised $Q$-medium introduced by I.~Lindell and H.~Wall\'en in \cite{LindellWallen:2002}.
A further technical discussion on Theorem \ref{thm:MainDecomp} is given in the end of Section
\ref{sec:DoubleCones}.

Some of the computations in the paper rely on computer algebra. For further
information about the Mathematica notebooks for these computations, please see the author's homepage.

\section{Preliminaries}
\label{sec:prelim}
By a \emph{manifold} $N$ we mean a second countable topological Hausdorff
space that is locally homeomorphic to $\setR^n$ with $C^\infty$-smooth
transition maps. All objects are assumed to be smooth where defined.  
Let $TN$ and $T^\ast N$ be the tangent and cotangent bundles, respectively. 
For $k\ge 1$, let $\Omega^k(N)$ be antisymmetric tensor fields with $k$ lower indices (that is,
$k$-forms).  Similarly, let $\Omega_k(N)$ be antisymmetric tensor fields with $k$ upper
indices. Moreover, let $\medTensor(N) = \Omega^2(N)\otimes \Omega_2(N)$.  Let also $C^\infty(N)$ be
the set of scalar functions (that is, $0\choose 0$-tensors).  The Einstein summing convention is used
throughout. When writing tensors in local coordinates we assume that the components satisfy the same
symmetries as the tensor.
 
\subsection{Twisted tensors}
If $N$ is not orientable we will also need \emph{twisted tensors} \cite[Section A.2.6]{Obu:2003}.
We will denoted these by a tilde over the tensor space.  For example, by $\widetilde \Omega^2(N)$ we
denote the space of twisted $2$-forms.  If $G\in \widetilde \Omega^2(N)$ then in each coordinate
chart $(U, x^i)$, $G$ is determined by a usual $2$-form $G\vert_U\in \Omega^2(U)$ 
and on overlapping charts $(U,x^i)$ and $(\widetilde U,
\widetilde x^i)$, forms $G\vert_U$ and $G\vert_{\widetilde U}$ satisfy the transformation
rule
\begin{eqnarray}
\label{eq:FrankelFormTransform}
   G\vert_{\widetilde U} &=&\operatorname{sgn} \det \left( \pd{  x^a}{\widetilde x^b}\right) G\vert_U,
\end{eqnarray}   
where 
$\operatorname{sgn}\colon \setR\to \setR$ is the \emph{sign function}, $\operatorname{sgn} x= x/\vert x\vert$ for
$x\neq 0$ and $\operatorname{sgn}x=0$ for $x=0$.  
If locally
\begin{eqnarray}
\label{eq:GUU}
G\vert_U &=& \frac 1 2 G_{ij} dx^i \wedge dx^j, \quad
G\vert_{\widetilde U} = \frac 1 2 \widetilde G_{ij} d\widetilde x^i \wedge d\widetilde x^j,
\end{eqnarray} 
then equation \eqref{eq:FrankelFormTransform} implies that components $G_{ij}$ and $\widetilde
G_{ij}$ transform as
\begin{eqnarray}
\label{eq:FTwistRule}
\widetilde G_{ij} &=& \operatorname{sgn} \det \left( \pd{ x^a}{\widetilde x^b}\right) 
G_{rs} \pd{x^r}{\widetilde x^i}\pd{x^s}{\widetilde x^j}. 
\end{eqnarray}
When the chart is clear from context, we will simply write $G=\frac 1 2 G_{ij} dx^i \wedge dx^j$.
Similarly, if $\kappa\in \medTwTensor(N)$ then in each chart $\kappa$ is represented by a 
$\kappa\vert_U \in \medTensor(U)$ and locally 
\begin{eqnarray}
\label{eq:kappaLocal}
\kappa &=& \frac 1 8 \kappa^{ij}_{rs} dx^r\wedge dx^s\otimes \pd{}{x^i}\wedge \pd{}{x^j}
\end{eqnarray}
for suitable components $\kappa^{ij}_{rs}$.
Moreover, 
if $\kappa^{ij}_{rs}$ and $\widetilde \kappa^{ij}_{rs}$ are components for $\kappa$ in overlapping
charts $(U, x^i)$ and $(\widetilde U, \widetilde x^i)$ then we obtain the transformation rule
\begin{eqnarray}
\label{eq:kappaTwistRule} 
\widetilde \kappa^{ij}_{rs} &=& \operatorname{sgn} \det \left( \pd{ x^a}{\widetilde x^b}\right) 
\kappa^{pq}_{uv} \pd{x^u}{\widetilde x^r}\pd{x^v}{\widetilde x^s}
\pd{\widetilde x^i}{ x^p}\pd{\widetilde x^j}{x^q}.
\end{eqnarray}
 
Compositions involving twisted tensors
are computed in the
natural way by composing local tensors. 
%
%
For example, if $\kappa, \eta\in \medTwTensor(N)$ their composition defines an element $\kappa\circ \eta
\in \medTensor(N)$ and if $\kappa, \eta$ and  $\kappa\circ \eta$ are written as in 
equation \eqref{eq:kappaLocal} then
\begin{eqnarray}
(\kappa\circ\eta)^{ij}_{rs} &=& \label{eq:kappaEtaProd}
\frac 1 2 \kappa^{ab}_{rs} \eta^{ij}_{ab}.
\end{eqnarray}
%
%
%

If $M$ is orientable, then twisted tensors coincide with their normal (or untwisted) counterparts.
For example, if $M$ is orientable, equation \eqref{eq:kappaTwistRule} implies that
$\medTwTensor(N)= \medTensor(N)$.
%
There are also other way to define twisted forms.  Equation
\eqref{eq:FrankelFormTransform} coincides with definition of a pseudo-form in
\cite{Frankel:2004}.  For a global definition of twisted forms using the orientation bundle,
see \cite[Supplement 7.2A]{AbrahamMarsdenRatiu:2001}.

\subsection{Tensor densities}
In addition to tensors and twisted tensors, we will need tensor densities and twisted
tensor densities. 
%
A $p\choose q$-\emph{tensor density of weight} $w\in \mathbb{Z}$ on a manifold $N$ is determined by
components $T^{a_1 \ldots a_p}_{b_1 \cdots b_q}$ in each chart $(U,x^i)$, and on overlapping charts
$( U, x^i)$ and $(\widetilde U, \widetilde x^i)$ we have the transformation rule
\cite{Spivak:I:1999},
\begin{eqnarray*}
\widetilde T^{a_1 \ldots  a_p}_{b_1 \cdots b_q} &=& 
\left( \det \left( \pd{x^i}{\widetilde x^j}\right) \right)^w\,\,
T^{r_1 \ldots  r_p}_{s_1 \cdots s_q} 
\pd{x^{s_1}}{\widetilde x^{b_1}} \cdots\pd{x^{s_q}}{\widetilde x^{b_q}}
\pd{\widetilde x^{a_1}}{x^{r_1}} \cdots\pd{\widetilde x^{a_p}}{x^{r_p}}.
\end{eqnarray*}

A \emph{twisted} $p\choose q$-\emph{tensor density of weight} $w\in \mathbb{Z}$ on $N$ is defined in the
same way, but with an additional $\operatorname{sgn} \det \left( \pd{\widetilde x^i}{x^j}\right)$
factor in the transformation rule as in equations \eqref{eq:FTwistRule} and
\eqref{eq:kappaTwistRule}.
 
The \emph{Levi-Civita permutation symbols} are denoted by $\varepsilon_{ijkl}$ and 
$\varepsilon^{ijkl}$.  
Even if these coincide as combinatorial functions so that $\varepsilon_{ijkl} =
\varepsilon^{ijkl}$, they are also different as they globally define different objects on a
manifold. Namely, if $\varepsilon_{ijkl}, \varepsilon^{ijkl}$ and $\widetilde \varepsilon_{ijkl},
\widetilde \varepsilon^{ijkl}$ are defined on overlapping coordinate charts $(U,x^i)$ and
$(\widetilde U,\widetilde x^i)$, respectively, then 
\begin{eqnarray}
\widetilde \varepsilon_{abcd} &=& \label{eq:epsTransA}
\operatorname{det}\left( \pd{\widetilde x^i}{ x^j}\right)  \varepsilon_{pqrs}
\pd{ x^p}{\widetilde x^a}
\pd{x^q}{\widetilde  x^b}
\pd{x^r}{\widetilde  x^c}
\pd{ x^s}{\widetilde  x^d}, \\
\widetilde  \varepsilon^{abcd} &=& \label{eq:epsTransB}
\operatorname{det}\left( \pd{ x^i}{\widetilde  x^j}\right)  \varepsilon^{pqrs}
\pd{\widetilde  x^a}{ x^p}
\pd{ \widetilde x^b}{ x^q}
\pd{ \widetilde x^c}{  x^r}
\pd{\widetilde x^d} {  x^s}.
\end{eqnarray}
That is, $\varepsilon_{ijkl}$ defines a $0\choose 4$-tensor density of weight $-1$ on $N$
and  $\varepsilon^{ijkl}$ defines a $4\choose 0$-tensor density of weight $1$.
For future reference, let us note that
\begin{eqnarray}
\label{eq:epsIdentities}
\varepsilon^{rsab} \varepsilon_{rsij} = 4 \delta^a_{[i}\delta^b_{j]}, \quad
\varepsilon^{rabc} \varepsilon_{rijk} = 3! \delta^a_{[i}\delta^b_{j}\delta^c_{k]},
\end{eqnarray}
where $\delta^i_j$ is the \emph{Kronecker delta symbol} and
brackets $[i_1 \ldots i_p]$ indicate that indices $i_1, \ldots, i_p$ are antisymmetrised
with scaling $1/p!$. 
\subsection{Maxwell's equations on a $4$-manifold}
\label{sec:MaxOn4}
On a $4$-manifold $N$, the \emph{premetric Maxwell's equations} read 
\begin{eqnarray}
\label{max4A}
dF &=& 0, \\
\label{max4B}
dG &=& J,\\
\label{FGchi}
 G &=& \kappa(F).
\end{eqnarray}
where $d$ is the exterior derivative, $F\in \Omega^2 (N)$, $G\in \widetilde \Omega^2(N)$, $J\in
\widetilde \Omega^3(N)$ and $\kappa\in \medTwTensor(N)$.  Here, $F, G$, are called the \emph{
  electromagnetic field variables}, $J$ describes the electromagnetic sources, 
tensor $\kappa$ models the electromagnetic medium and equation
\eqref{FGchi} is known as the \emph{constitutive equation}.
In local coordinates, equations \eqref{max4A}--\eqref{FGchi} reduce to the usual Maxwell's
equations.  For a systematic treatment, see \cite{Rubilar2002, Obu:2003}.

If locally $F = \frac 1 2 F_{ij} dx^i \wedge dx^j, G = \frac 1 2 G_{ij} dx^i \wedge dx^j$ and
$\kappa$ is written as in equation \eqref{eq:kappaLocal} then constitutive equation \eqref{FGchi} is
equivalent with
\begin{eqnarray}
\label{FGeq_loc}
 G_{ij} &=& \frac 1 2 \kappa_{ij}^{ab} F_{ab}.
\end{eqnarray}
Thus  equation \eqref{FGchi} models electromagnetic medium with a linear and pointwise
response.  

Suppose $\kappa\in \medTwTensor(N)$ and suppose $(U,x^i)$ is a chart.  Then the local
representation of $\kappa$ in equation \eqref{eq:kappaLocal} defines a pointwise linear map $\Omega^2(U)\to
\Omega^2(U)$.  In $U$ we can therefore represent $\kappa$ by a smoothly varying $6\times 6$ matrix.
To do this, let $O$ be the ordered set of index pairs $\{ 01, 02, 03$, $23, 31, 12\}$, and if $J\in
O$, let $dx^J = dx^{J_1}\wedge dx^{J_2}$, where  $J_1$ and $J_2$ are the individual indices for $J$. 
Say, if $J=31$ then $dx^J = dx^3\wedge dx^1$. Then a basis for $\Omega^2(U)$ is given by $\{ dx^{J}:
J\in O\}$, that is,
\begin{eqnarray}
\label{eq:2basis}
\{ dx^0\wedge dx^1, dx^0\wedge dx^2, dx^0\wedge dx^3, dx^2\wedge dx^3, dx^3\wedge dx^1, dx^1\wedge dx^2\}.
\end{eqnarray}
This choice of basis follows \cite[Section A.1.10]{Obu:2003}.
By equation \eqref{eq:kappaLocal} it follows that
\begin{eqnarray}
\label{eq:kappaMatDef}
\kappa(dx^{J}) &=& \sum_{I\in O} \kappa^J_I dx^I, \quad J \in O,
\end{eqnarray} 
where $\kappa^J_I = \kappa^{J_1 J_2}_{I_1 I_2}$. 
Let $b$ be
the natural bijection $b\colon O\to \{1,\ldots, 6\}$.  Then we
identify coefficients $\{\kappa^J_I : I,J\in O\}$ for $\kappa$ with the smoothly varying $6\times 6$
matrix $\mediumMatrix=(\kappa^J_I)_{IJ}$ defined as $\kappa^J_I = \mediumMatrix_{b(I) b(J)}$
for $I,J\in O$.

Suppose $\mediumMatrix=(\kappa_I^J)_{IJ}$ and $\widetilde \mediumMatrix=(\widetilde
\kappa_I^J)_{IJ}$ are smoothly varying $6\times 6$ matrices that represent tensor $\kappa$ in
overlapping charts $(U, x^i)$ and $(\widetilde U, \widetilde x^i)$. Then equation
\eqref{eq:kappaTwistRule} is equivalent with
\begin{eqnarray*}
\label{eq:kappaTransRule}
\widetilde \kappa^J_I &=& \operatorname{sgn} \det \left( \pd{x^i}{\widetilde x^j}\right)
\sum_{K,L\in O} 
\pd{x^K}{\widetilde x^I}
\kappa^L_K   
\pd{\widetilde x^J}{x^L}, \quad I,J\in O,
\end{eqnarray*} 
where
\begin{eqnarray}
\label{eq:antiSymJacobian}
  \pd{x^J}{\widetilde x^I} &=&
  \pd{x^{J_1}}{\widetilde x^{I_1}}\pd{ x^{J_2}}{\widetilde  x^{I_2}}-\pd{ x^{J_2}}{\widetilde x^{I_1}}\pd{ x^{J_1}}{\widetilde x^{I_2}}, \quad I,J\in O,
\end{eqnarray}
and $\pd{\widetilde x^J}{ x^I}$ is defined similarly by exchanging $x$ and $\widetilde x$. 
%
For matrices $T=(\pd{x^J}{\widetilde x^I})_{IJ}$ and $S=(\pd{\widetilde x^J}{ x^I})_{IJ}$, we have
$T=S^{-1}$, whence equation \eqref{eq:kappaTwistRule} is further equivalent with the matrix equation
\begin{eqnarray}
\label{eq:kappaTransRuleII}
\widetilde \mediumMatrix &=& \operatorname{sgn} \det \left( \pd{x^i}{\widetilde x^j}\right)
\,\, T \mediumMatrix  T^{-1}.
\end{eqnarray}

In a chart $(U, x^i)$, we define $\operatorname{trace} \kappa\colon U\to \setR$ and $\det \kappa\colon
U\to \setR$ as the trace and determinant of the pointwise linear map $\Omega^2(U)\to
\Omega^2(U)$. When $\mediumMatrix$ is as above it follows that $\operatorname{trace} \kappa =
\operatorname{trace} \mediumMatrix$ and $\operatorname{det} \kappa = \operatorname{det}
\mediumMatrix$.  When these definitions are extended into each chart on $N$ equation
\eqref{eq:kappaTransRuleII} shows that $\operatorname{trace} \kappa \in
\widetilde C^\infty(N)$ and $\operatorname{det} \kappa \in C^\infty(N)$. Moreover, if $\kappa$ is
written as in equation \eqref{eq:kappaLocal}, then
\begin{eqnarray*}
\operatorname{trace} \kappa &=& \frac 1 2 \kappa_{ij}^{ij}.
\end{eqnarray*}
At a point $p\in N$ we say that $\kappa$ is \emph{invertible} if $(\det \kappa)\vert_p\neq 0$.
If $\operatorname{Id}$ is the identity tensor $\operatorname{Id}\in \medTensor(N)$, then writing
$\operatorname{Id}$ as in equation \eqref{eq:kappaLocal} gives $\operatorname{Id}^{ij}_{rs}=
\delta^i_r\delta^j_s-\delta^i_s\delta^j_r$.  For $f\in \widetilde C^\infty(N)$ it follows that
$\operatorname{trace} f\operatorname{Id} = 6f$.

\subsection{Decomposition of electromagnetic medium}
\label{media:decomp}

\newcommand{\kappaI}[0]{^{(1)}\!\kappa}
\newcommand{\kappaII}[0]{^{(2)}\!\kappa}
\newcommand{\kappaIII}[0]{^{(3)}\!\kappa}
\newcommand{\notkappa}[0]{\!\not\!\kappa}

At each point of a $4$-manifold $N$, an element of $\medTwTensor(N)$ depends on $36$
parameters. Pointwise, such $2\choose 2$-tensors canonically decompose into three linear
subspaces. The motivation for this decomposition is that different components in the decomposition
enter in different parts of electromagnetics.  See \cite[Section D.1.3]{Obu:2003}.  
%


\begin{proposition}
\label{theorem:Decomp}
Let $N$ be a $4$-manifold, and let
\begin{eqnarray*}
Z &=& \{ \kappa \in \medTwTensor(N) : u\wedge \kappa(v) = \kappa(u)\wedge v \,\,\mbox{for all}\,\, u,v\in \Omega^2(N),\\
& & \quad\quad\quad\quad\quad\quad \operatorname{trace} \kappa = 0\},\\
W &=& \{ \kappa \in \medTwTensor(N) : u\wedge \kappa(v) = -\kappa(u)\wedge v \,\,\mbox{for all}\,\, u,v\in \Omega^2(N)\},\\
U &=& \{ f \operatorname{Id}\in \medTwTensor(N) : f\in \widetilde C^\infty(N) \}.
\end{eqnarray*}
Then
\begin{eqnarray}
\label{AdecompSet}
\medTwTensor(N) &=& Z\,\,\oplus\,\, W \,\,\oplus\,\, U,
\end{eqnarray}
and pointwise, $\dim Z = 20$,  $\dim W = 15$ and  $\dim U = 1$.
\end{proposition}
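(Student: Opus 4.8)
The plan is to establish the direct sum decomposition pointwise and then count dimensions, since both the claimed splitting and the dimension statements are pointwise assertions that can be verified in a single cotangent space (the twist factor in the transformation rule \eqref{eq:kappaTransRuleII} is irrelevant here, as it only affects global gluing, not the linear-algebraic structure at a point). So fix $p\in N$ and work in the $36$-dimensional vector space of $2\choose 2$-tensors at $p$, identified via a choice of basis \eqref{eq:2basis} with $6\times 6$ matrices acting on $\Omega^2$ at $p$.

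First I would observe that the wedge product $\Omega^2(N)\times\Omega^2(N)\to\Omega^4(N)$ is, at $p$, a non-degenerate symmetric bilinear form $\beta$ on the $6$-dimensional space $\Omega^2_p$ (non-degeneracy is classical: $u\wedge v=0$ for all $v$ forces $u=0$ in dimension $4$). Relative to $\beta$, every $\kappa$ has a unique adjoint $\kappa^{\dagger}$ defined by $\beta(\kappa u, v)=\beta(u,\kappa^{\dagger} v)$, and the map $\kappa\mapsto\kappa^{\dagger}$ is a linear involution on the $36$-dimensional space. Then $W$ is exactly the $\beta$-antisymmetric part, i.e. the $(-1)$-eigenspace $\{\kappa:\kappa^{\dagger}=-\kappa\}$, while the $\beta$-symmetric part $\{\kappa:\kappa^{\dagger}=\kappa\}$ is the complementary $(+1)$-eigenspace; this already gives a direct sum of the symmetric part and $W$. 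Since $\beta$ is a non-degenerate symmetric form on a $6$-dimensional space, the space of $\beta$-symmetric endomorphisms has dimension $\binom{6+1}{2}=21$ and the space of $\beta$-antisymmetric endomorphisms has dimension $\binom{6}{2}=15$, giving $\dim W=15$ immediately, and $15+21=36$ as a sanity check.

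Next I would split the $21$-dimensional symmetric part as $Z\oplus U$. The identity $\operatorname{Id}$ is $\beta$-symmetric and spans $U$ (at $p$, $U$ is the line $\{f\operatorname{Id}\}$, so $\dim U=1$), and from $\operatorname{trace}(f\operatorname{Id})=6f$ one sees $\operatorname{trace}$ is nonzero on $U$. Meanwhile $Z$ is the kernel of $\operatorname{trace}$ inside the symmetric part, so $Z$ is a hyperplane in the $21$-dimensional symmetric space, hence $\dim Z=20$, and since $U\not\subset Z$ we get $Z\oplus U$ equal to the whole symmetric part. Assembling, $\medTwTensor_p = Z\oplus W\oplus U$ with dimensions $20$, $15$, $1$. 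The one point that needs a small argument rather than pure formalism is that $Z$ as \emph{defined} — the trace-free $\beta$-symmetric tensors — genuinely has codimension one in the symmetric part, i.e. that $\operatorname{trace}$ does not vanish identically on $\beta$-symmetric $\kappa$; this is settled by exhibiting the single symmetric tensor $\operatorname{Id}$ with nonzero trace, which also pins down the complement as $U$.

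The main obstacle, such as it is, is purely bookkeeping: verifying that "$u\wedge\kappa(v)=\pm\kappa(u)\wedge v$ for all $u,v$" really does correspond to $\kappa^{\dagger}=\pm\kappa$ and that the dimension count of $\beta$-(anti)symmetric endomorphisms of a $6$-dimensional quadratic space is $21$ and $15$. Both are standard linear algebra, so I expect no genuine difficulty; the proof is essentially an eigenspace decomposition of the adjoint involution together with one explicit trace computation. If desired, one can instead verify the splitting and dimensions by brute-force index manipulation using \eqref{eq:epsIdentities}, but the adjoint-involution viewpoint is cleaner and coordinate-free, matching the paper's emphasis.
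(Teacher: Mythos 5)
Your proof is correct. One thing to be aware of: the paper does not actually prove this proposition in the text --- it cites \cite{Dahl:2011:Closure} for the proof --- so there is no in-paper argument to compare with line by line; but your route is precisely the one the paper's own machinery suggests. Your $\beta$-adjoint $\kappa^{\dagger}$ is exactly the Poincar\'e isomorphism $\overline{\kappa}$ of Proposition \ref{prop:DiamondLemma}, part \ref{prop:DiamondLemma:globalDef}: $W$ is its $(-1)$-eigenspace, the $(+1)$-eigenspace has dimension $21$ because self-adjoint endomorphisms of a $6$-dimensional space carrying a non-degenerate symmetric pairing correspond to symmetric bilinear forms, and the trace functional (non-zero on the symmetric part since $\operatorname{trace}\operatorname{Id}=6$) splits off $U$, giving $20+15+1=36$. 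In the basis \eqref{eq:2basis} this is the familiar splitting of the $6\times 6$ matrix $\mediumMatrix$ into $E$-symmetric and $E$-antisymmetric parts with $E$ as in \eqref{eq:eq44matrix}, so your coordinate-free argument and the usual matrix bookkeeping are the same computation. Two small points you should make explicit, though neither is a gap: \emph{(i)} the wedge pairing takes values in the one-dimensional space $\Omega^4$ at $p$ rather than in $\setR$, but the adjoint is insensitive to the choice of identification of that line with $\setR$, which is exactly why $\kappa^{\dagger}=\overline{\kappa}$ is canonically defined and the eigenspace splitting is coordinate-free (for twisted $\kappa$ both sides of the defining identity are twisted $4$-forms, so the condition is chart-independent as you say); \emph{(ii)} the proposition asserts a direct sum of spaces of (twisted) tensor \emph{fields}, so after the pointwise linear algebra you should note that the three projections, e.g.\ $\kappa\mapsto\tfrac12(\kappa-\overline{\kappa})$ and $\kappa\mapsto\tfrac16(\operatorname{trace}\kappa)\operatorname{Id}$, are natural operations taking twisted tensors smoothly to twisted tensors (here $\operatorname{trace}\kappa\in\widetilde C^\infty(N)$, matching the definition of $U$), so the pointwise decomposition really does glue to the global statement \eqref{AdecompSet}.
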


If we write a $\kappa\in \medTwTensor(N)$ as
$\kappa = \kappaI \,\,+ \,\,\kappaII\,\,+\,\,\kappaIII$
with $\kappaI\in Z$, $\kappaII\in W$, $\kappaIII\in U$, then we say that $\kappaI$ is the
\emph{principal part}, $\kappaII$ is the \emph{skewon part}, $\kappaIII$ is the \emph{axion part} of
$\kappa$ \cite{Obu:2003}.  For a proof of Proposition \ref{theorem:Decomp} as stated above, see
\cite{Dahl:2011:Closure}, and for further discussions, see \cite{Rubilar2002, Obu:2003,
  Favaro:2011}.


In $\medTwTensor(N)$ there is a canonical isomorphism $\medTwTensor(N)\to \medTwTensor(N)$ known as
the \emph{Poincar\'e isomorphism} \cite{Greub:1978, Favaro:2011}.  Let us first give a local
definition. If $\kappa\in \medTwTensor(N)$ on a $4$-manifold $N$, we define $\overline{\kappa}$ as
the element $\overline{\kappa} \in \medTwTensor(N)$ defined as
\begin{eqnarray}
\label{eq:overlineLocal}
\overline{\kappa}^{ij}_{rs} &=& \frac 1 4 \varepsilon_{rsab} \kappa^{ab}_{cd} \varepsilon^{cdij}
\end{eqnarray}
when $\kappa$ and $\overline{\kappa}$ are written as in equation \eqref{eq:kappaLocal}.
Equations \eqref{eq:epsTransA}--\eqref{eq:epsTransB} imply that this assignment defines an element
$\overline{\kappa}\in \medTwTensor(N)$. For $\kappa\in \medTensor(N)$ we define $\overline{\kappa}$
in the same way and we also have a canonical isomorphism $\medTensor(N) \to \medTensor(N)$.
 
The next proposition collects results for $\overline{\kappa}$. In particular, part
\ref{prop:DiamondLemma:globalDef} states that $\overline{\kappa}$ can be interpreted as a formal
adjoint of $\kappa$ with respect to the wedge product for $2$-forms.  In consequence, the Poincar\'e
isomorphism is closely related to the decomposition in Proposition \ref{theorem:Decomp}.
For example, $\kappa\in \widetilde \Omega^2_2(N)$ has only a principal part if and only if $\kappa =
\overline{\kappa}$ and $\operatorname{trace} \kappa = 0$.  For a further discussion, see
\cite{Favaro:2011}.

\begin{proposition} Suppose $N$ is a $4$-manifold and $\kappa\in \medTwTensor(N)$.
\label{prop:DiamondLemma}
\begin{enumerate}
\item \label{prop:DiamondLemma:globalDef}
$\overline{\kappa}$ is the unique $\overline{\kappa}\in \widetilde \Omega^2_2(N)$ such that
\begin{eqnarray}
\label{eq:diamondTranspose}
\kappa(u)\wedge v &=& u\wedge \overline{\kappa} (v) \quad \mbox{for all}\,\, u,v\in \Omega^2(N).
\end{eqnarray}
\item \label{prop:DiamondLemma:Id} $\overline{f\operatorname{Id}} = f\operatorname{Id}$ for all 
$f\in \widetilde C^\infty(N)$.
\item \label{prop:DiamondLemma:alg} 
$\overline{\overline{\kappa}}=\kappa$ and if $\eta\in \medTwTensor(N)$, 
then $\overline{\kappa\circ \eta}=\overline{\eta}\circ \overline{\kappa}$.
\item \label{prop:DiamondLemma:trace}
$\operatorname{trace}\overline \kappa = \operatorname{trace}\kappa$.
\item \label{prop:DiamondLemma:solve}
If $u\wedge \kappa(u)=0$ holds for all $u\in \Omega^2(N)$ then $\kappa+ \overline{\kappa}=0$.
\end{enumerate}
\end{proposition}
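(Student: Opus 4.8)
The plan is to prove the five parts more or less in the order stated, using throughout the local formula \eqref{eq:overlineLocal} together with the $\varepsilon$-contraction identities \eqref{eq:epsIdentities}. For part \ref{prop:DiamondLemma:globalDef}, I would first establish the wedge-product identity \eqref{eq:diamondTranspose} by a direct local computation: writing $u = \frac12 u_{ij}\,dx^i\wedge dx^j$ and $v = \frac12 v_{ij}\,dx^i\wedge dx^j$, both sides of \eqref{eq:diamondTranspose} are top forms, so comparing coefficients of $dx^0\wedge dx^1\wedge dx^2\wedge dx^3$ reduces the claim to an index identity. The left-hand side produces a factor $\varepsilon_{abcd}\kappa^{ab}_{rs}u^{rs}v^{cd}$-type contraction (up to combinatorial constants), and inserting \eqref{eq:overlineLocal} on the right-hand side and contracting one $\varepsilon$-pair via \eqref{eq:epsIdentities} yields exactly the same expression. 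Uniqueness then follows because the pairing $(\kappa,\eta)\mapsto$ ``the bilinear form $(u,v)\mapsto \kappa(u)\wedge\eta\text{-adjoint}$'' is non-degenerate: if $u\wedge\sigma(v)=0$ for all $u,v$ then, since the wedge pairing $\Omega^2\times\Omega^2\to\Omega^4$ is a perfect pairing pointwise, $\sigma(v)=0$ for all $v$, so $\sigma=0$.

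Parts \ref{prop:DiamondLemma:Id}, \ref{prop:DiamondLemma:alg} and \ref{prop:DiamondLemma:trace} I would then derive cleanly from the characterisation in part \ref{prop:DiamondLemma:globalDef} rather than from the local formula, which keeps the arguments coordinate-free. For \ref{prop:DiamondLemma:Id}: $\operatorname{Id}(u)\wedge v = u\wedge v = u\wedge \operatorname{Id}(v)$, so by uniqueness $\overline{\operatorname{Id}}=\operatorname{Id}$, and scaling by $f\in\widetilde C^\infty(N)$ is immediate. For \ref{prop:DiamondLemma:alg}: $\overline{\overline\kappa}=\kappa$ follows because \eqref{eq:diamondTranspose} is symmetric in $\kappa$ and $\overline\kappa$ (swap $u$ and $v$); and for the composition, $(\kappa\circ\eta)(u)\wedge v = \kappa(\eta(u))\wedge v = \eta(u)\wedge\overline\kappa(v) = u\wedge\overline\eta(\overline\kappa(v)) = u\wedge(\overline\eta\circ\overline\kappa)(v)$, so uniqueness gives $\overline{\kappa\circ\eta}=\overline\eta\circ\overline\kappa$. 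For \ref{prop:DiamondLemma:trace}, either contract \eqref{eq:overlineLocal} directly using the second identity in \eqref{eq:epsIdentities} to get $\operatorname{trace}\overline\kappa = \frac12\overline\kappa^{ij}_{ij} = \frac18\varepsilon_{ijab}\varepsilon^{cdij}\kappa^{ab}_{cd} = \frac12\kappa^{ab}_{ab} = \operatorname{trace}\kappa$, or argue via the matrix picture.

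Finally, part \ref{prop:DiamondLemma:solve}: assume $u\wedge\kappa(u)=0$ for all $u\in\Omega^2(N)$. Polarising, replace $u$ by $u+v$ and expand using bilinearity: $u\wedge\kappa(u) + u\wedge\kappa(v) + v\wedge\kappa(u) + v\wedge\kappa(v) = 0$, and since the first and last terms vanish we get $u\wedge\kappa(v) = -v\wedge\kappa(u)$ for all $u,v$. By part \ref{prop:DiamondLemma:globalDef}, $v\wedge\kappa(u) = \overline\kappa(v)\wedge u \cdot(\pm 1)$; more directly, $u\wedge\kappa(v) = -v\wedge\kappa(u) = -\kappa(u)\wedge v$ (wedge of two $2$-forms is symmetric) $= -u\wedge\overline\kappa(v)$, hence $u\wedge(\kappa+\overline\kappa)(v)=0$ for all $u,v$, and non-degeneracy of the wedge pairing forces $\kappa+\overline\kappa=0$.

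I expect the main obstacle to be part \ref{prop:DiamondLemma:globalDef}: getting the combinatorial constants right in the local verification of \eqref{eq:diamondTranspose} (the factors of $\tfrac18$, $\tfrac14$, $2$ and the antisymmetrisation conventions from \eqref{eq:kappaLocal} and \eqref{eq:epsIdentities}) is the one genuinely fiddly step; everything downstream is then formal manipulation with the adjoint characterisation and the perfectness of the wedge pairing $\Omega^2_p N \times \Omega^2_p N \to \Omega^4_p N$ at each point.
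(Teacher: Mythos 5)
Your proposal is correct and follows essentially the same route as the paper: part \ref{prop:DiamondLemma:globalDef} by a coordinate computation with the $\varepsilon$-identities, parts \ref{prop:DiamondLemma:Id}--\ref{prop:DiamondLemma:alg} as formal consequences of the adjoint characterisation, part \ref{prop:DiamondLemma:trace} by direct contraction, and part \ref{prop:DiamondLemma:solve} by polarisation (the paper uses the $(u+v)$, $(u-v)$ form of the same identity) together with non-degeneracy of the wedge pairing. Your explicit uniqueness argument via the perfect pairing $\Omega^2_p\times\Omega^2_p\to\Omega^4_p$ simply spells out what the paper leaves implicit.
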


\begin{proof}
Part \ref{prop:DiamondLemma:globalDef} 
follows by writing out both sides in equation 
\eqref{eq:diamondTranspose}  in coordinates.
Parts \ref{prop:DiamondLemma:Id}  and \ref{prop:DiamondLemma:alg} 
 follow by part \ref{prop:DiamondLemma:globalDef}.
Part \ref{prop:DiamondLemma:trace} is a direct computation.
For part \ref{prop:DiamondLemma:solve} we have 
\begin{eqnarray*}
u\wedge (\kappa+\overline\kappa)(v) &=& \frac 1 2 \left( (u+v)\wedge\kappa(u+v)
-(u-v)\wedge\kappa(u-v)\right)
\end{eqnarray*}
for all $u,v\in \Omega^2(N)$, and the claim follows since the right hand side vanishes.
%
\end{proof}

If $\rho$ is a twisted scalar tensor density of weight $1$ on a $4$-manifold $N$ and $A,B\in
\Omega_2(N)$ then we define $\rho\, \overline{A} \otimes B$ as the twisted tensor in
$\medTwTensor(N)$ defined as follows. If locally $A=\frac 1 2 A^{ij}\pd{}{x^i}\wedge \pd{}{x^j}$ and
$B=\frac 1 2 B^{ij}\pd{}{x^i}\wedge \pd{}{x^j}$ then
\begin{eqnarray}
\label{eq:rhoABBA}
(\rho\, \overline{A}\otimes B)^{ij}_{rs} &=& \rho \varepsilon_{rsab} A^{ab} B^{ij}
\end{eqnarray} 
when $\rho\, A\otimes \overline{B}$ is written as in equation \eqref{eq:kappaLocal}.  That $\rho\,
A\otimes \overline{B}$ transforms as an element in $\medTwTensor(N)$ follows by equation
\eqref{eq:epsTransA}. 
Similarly when $\rho$ is an untwisted scalar density we define $\rho\,\overline A\otimes B\in \medTensor(N)$
by equation \eqref{eq:rhoABBA}. For both twisted and untwisted $\rho$ we have identities
%
\begin{eqnarray}
\overline{ \rho\, \overline{A}\otimes B} &=& \label{eq:barId1}
\rho\, \overline B\otimes A, \\
(  \rho\, \overline A\otimes B)\circ \kappa &=& \label{eq:barId2}
\rho\, \overline A\otimes (B\kappa), \\
\overline \kappa\circ (\rho\,\overline A\otimes B) &=& \label{eq:barId3}\rho\, \overline{(A\kappa)} \otimes B, \\
(\rho\,\overline{A}\otimes B) \circ  (\rho\,\overline B\otimes A) &=&
\label{eq:barId4} \operatorname{trace} (\rho\,\overline B\otimes B) \,\,  (\rho\,\overline A\otimes A).
\end{eqnarray}
%

In Section \ref{sec:decompo} and in the proof of Theorem \ref{thm:MainDecomp} 
we will need the following lemma.
\begin{lemma} 
\label{lemma:ABBA=0}
Suppose $N$ is a $4$-manifold and $\kappa\in \medTwTensor(N)$ is defined as 
\begin{eqnarray}
\label{eq:kappa=rhoABBA}
  \kappa &=& \rho\, \left( \overline{A}\otimes B + \overline{B}\otimes A\right) + f \operatorname{Id},
\end{eqnarray}
where $\rho$ is a scalar tensor density of weight $1$, $A,B\in
\Omega_2(N)$ and $f\in \widetilde C^\infty(N)$. Then $\kappa\vert_p=0$ at a point $p\in N$ implies that 
$f\vert_p=0$ and $\rho\vert_p=0$ or
$A\vert_p=0$ or $B\vert_p=0$.
\end{lemma}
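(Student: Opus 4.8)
\emph{Proof sketch.} The plan is to argue entirely at the point $p$, reducing the statement to elementary linear algebra on the $6$-dimensional fibre $\Lambda^2 T^\ast_p N$. Write $T=\overline{A}\otimes B+\overline{B}\otimes A$, so that the hypothesis is $(\rho\,T+f\operatorname{Id})\vert_p=0$. From the component formula \eqref{eq:rhoABBA} one reads off that, as a pointwise linear map on $2$-forms, $\overline{A}\otimes B$ acts by $v\mapsto \langle B,v\rangle\,(\star A)$, where $\langle B,v\rangle=\tfrac12 B^{ab}v_{ab}$ is the canonical (non-degenerate) pairing of a bivector with a $2$-form and $\star A$ is the $2$-form with components $(\star A)_{rs}=\varepsilon_{rsab}A^{ab}$; in particular the image of $\overline{A}\otimes B$ lies in the line through $\star A$, and since $\varepsilon_{rsab}$ is non-degenerate on $\Lambda^2$ we have $\star A\vert_p=0$ if and only if $A\vert_p=0$. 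I will use these two facts, together with the observation that non-degeneracy of $\langle\,\cdot\,,\,\cdot\,\rangle$ makes $A\vert_p,B\vert_p$ linearly independent exactly when the functionals $\langle A,\cdot\rangle,\langle B,\cdot\rangle$ on $\Lambda^2 T^\ast_p N$ are.

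First I would isolate $f\vert_p=0$. Since $\dim\Lambda^2 T^\ast_p N=6$, the linear map $v\mapsto(\langle A,v\rangle,\langle B,v\rangle)$ into $\setR^2$ has a nontrivial kernel, so there is $v\neq 0$ with $\langle A,v\rangle=\langle B,v\rangle=0$ at $p$. Then $(Tv)\vert_p=0$, and evaluating $(\rho\,T+f\operatorname{Id})\vert_p=0$ on $v$ gives $f\vert_p\,v=0$, hence $f\vert_p=0$ and therefore $(\rho\,T)\vert_p=0$. If $\rho\vert_p=0$ the conclusion already holds, so assume $\rho\vert_p\neq 0$; then $T\vert_p=0$, i.e.
\[
\langle B,v\rangle\,\star A+\langle A,v\rangle\,\star B=0 \qquad\text{at }p,\ \text{for every }v.
\]

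It remains to show $A\vert_p=0$ or $B\vert_p=0$, which I would do by contradiction, assuming both are nonzero. If $A\vert_p,B\vert_p$ are linearly independent, then by non-degeneracy of the pairing one can choose a $2$-form $v$ with $\langle B,v\rangle=1$ and $\langle A,v\rangle=0$ at $p$; the displayed identity then forces $\star A\vert_p=0$, so $A\vert_p=0$, a contradiction. If instead $B\vert_p=\lambda\,A\vert_p$ with $\lambda\neq 0$, the identity becomes $2\lambda\langle A,v\rangle\,\star A=0$ at $p$ for all $v$, and choosing $v$ with $\langle A,v\rangle\neq 0$ again gives $\star A\vert_p=0$ and hence $A\vert_p=0$, a contradiction. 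Thus $A\vert_p=0$ or $B\vert_p=0$, which together with $f\vert_p=0$ proves the lemma.

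The only step requiring any care is the dictionary in the first paragraph: reading off from \eqref{eq:rhoABBA} that $\overline{A}\otimes B$ is the rank-one map $v\mapsto\langle B,v\rangle\,\star A$, and recording the two non-degeneracy facts; after that the argument is just evaluation on well-chosen test $2$-forms. One could alternatively handle the $f\vert_p=0$ step and the final contradiction through identity \eqref{eq:barId4}, but the direct evaluation seems cleaner — note in particular that merely taking the trace of \eqref{eq:kappa=rhoABBA}, which yields $2\rho\operatorname{trace}(\overline{A}\otimes B)+6f=0$, does not by itself separate $f$ from the rest.
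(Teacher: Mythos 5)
Your proof is correct, and it takes a genuinely different route from the paper's. The paper argues purely in components: restricting \eqref{eq:kappa=rhoABBA} to $p$ and writing it as $2\rho(A^{I}B^{J}+A^{J}B^{I})+f\varepsilon^{IJ}=0$ for index pairs $I,J\in O$, it first sums the diagonal to get $\rho\sum_{I}A^{I}B^{I}=0$, then contracts with $A^{I}B^{J}$ and with $\varepsilon^{IJ}$, eliminates $f$ from the two resulting scalar equations, and is left with $\rho\bigl(\bigl(\sum_{I}(A^{I})^{2}\bigr)\bigl(\sum_{I}(B^{I})^{2}\bigr)+\tfrac13\bigl(\sum_{I,J}\varepsilon^{IJ}A^{I}B^{J}\bigr)^{2}\bigr)=0$; the conclusion follows because the bracket is a sum of nonnegative terms, and $f\vert_p=0$ is read off at the end once the rank-one part is known to vanish. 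You instead exploit the operator structure: your dictionary $(\overline{A}\otimes B)(v)=\langle B,v\rangle\,\star A$ is exactly what \eqref{eq:rhoABBA} together with \eqref{FGeq_loc} gives, and the two nondegeneracy facts (duality pairing of bivectors with $2$-forms, and injectivity of $A\mapsto\star A$ since $\varepsilon$ is nondegenerate on $\Lambda^2$) are both right, so evaluating the vanishing operator on well-chosen test $2$-forms settles first $f\vert_p=0$ and then, by the two-case rank argument, $A\vert_p=0$ or $B\vert_p=0$ when $\rho\vert_p\neq 0$. What each approach buys: yours isolates $f$ cleanly at the start, needs no sum-of-squares positivity, and hence works verbatim over $\setC$ or any field rather than relying on $A,B$ being real; it is also more transparent about why the statement is true (a symmetric sum of two rank-one maps cannot be a nonzero multiple of $\operatorname{Id}$ on a $6$-dimensional space, and cannot vanish unless a factor does). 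The paper's contraction-and-elimination argument is shorter to write once one accepts the $6\times 6$ component picture and is stylistically of a piece with the Gr\"obner-basis manipulations used elsewhere in the paper. Your closing remark is also accurate: taking only the trace of \eqref{eq:kappa=rhoABBA} does not separate $f$, which is why either a second contraction (paper) or a test-form evaluation (yours) is needed.
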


If $\kappa$ is written as in equation \eqref{eq:kappaLocal} and $A,B$ are written as
above, then equation \eqref{eq:kappa=rhoABBA} states that
\begin{eqnarray*}
  \kappa^{ij}_{rs} &=& \rho \varepsilon_{rsab} \left( A^{ab} B^{ij} + A^{ij} B^{ab} \right) + f \operatorname{Id}^{ij}_{rs}.
\end{eqnarray*}

\begin{proof} 
By restricting the analysis to $p$ and introducing notation $A^I=A^{I_1I_2}$ and $B^I=B^{I_1I_2}$,
we obtain
\begin{eqnarray}
\label{eq:rhoABBALL}
2 \rho(A^{I} B^{J}+ A^{J} B^{I}) + f \varepsilon^{IJ}&=& 0\quad \mbox{for all}\,\, I,J\in O.
\end{eqnarray}
Setting $I=J$ and summing implies that
$\sum_{I\in O} \rho A^{I} B^{I} = 0$.
Multiplying each equation in \eqref{eq:rhoABBALL} by $A^{I} B^{J}$ and $\varepsilon^{IJ}$ and summing
$I,J$ yields two scalar equations. Eliminating $f$ from these equations gives
\begin{eqnarray*}
\rho\left( \left(\sum_{I\in O} (A^{I})^2\right) \left(\sum_{I\in O} (B^{I})^2\right) + \frac 1 3 \left( \sum_{I,J\in O} \varepsilon^{IJ} A^I B^J\right)^2\right) &=&0, 
\end{eqnarray*}
and the claim follows. %
\end{proof}

\subsection{The Fresnel surface}
\label{sec:FresnelSurface}
Let 
$\kappa\in \medTwTensor(N)$ on a $4$-manifold $N$.  If $\kappa$ is locally given by equation
\eqref{eq:kappaLocal} in coordinates $\{x^i\}$, let
\begin{eqnarray}
\label{eq:defG0ee}
\cG^{ijkl}_0 &=& \frac 1 {48} 
\kappa^{a_1 a_2}_{b_1 b_2} 
\kappa^{a_3 i}_{b_3 b_4} 
\kappa^{a_4 j}_{b_5 b_6} 
\varepsilon^{b_1 b_2 b_5 k} 
\varepsilon^{b_3 b_4 b_6 l} 
\varepsilon_{a_1 a_2 a_3 a_4}.
\end{eqnarray}
If $\{\widetilde x^i\}$ are  overlapping coordinates, then 
equations \eqref{eq:kappaTwistRule}, \eqref{eq:epsTransA} and \eqref{eq:epsTransB} 
imply that components $\cG_0^{ijkl}$ satisfy the transformation rule
\begin{eqnarray}
\label{eq:TRtrans}
\widetilde \cG_0^{ijkl}
&=& \left\vert \det \left(\pd{x^r}{\widetilde x^s}\right)\right\vert \, \cG_0^{abcd} \pd{\widetilde x^i}{x^a} \pd{\widetilde x^j}{x^b}\pd{\widetilde x^k}{x^c}\pd{\widetilde x^l}{x^d}.
\end{eqnarray}
Thus components $\cG^{ijkl}_0$ define a twisted $4\choose 0$-tensor density $\cG_0$ on $N$ of weight
$1$. The \emph{Tamm-Rubilar tensor density} \cite{Obu:2003, Rubilar2002} is the symmetric part of
$\cG_0$ and we denote this twisted tensor density by $\cG$.  In coordinates, $\cG^{ijkl} =
\cG^{(ijkl)}_0$, where parenthesis indicate that indices $ijkl$ are symmetrised with scaling $1/4!$.
If locally $\xi=\xi_i dx^i$ it follows that $\cG^{ijkl} \xi_i \xi_j \xi_k \xi_l =\cG^{ijkl}_0 \xi_i
\xi_j \xi_k \xi_l$, and we call $\cG^{ijkl} \xi_i \xi_j \xi_k \xi_l$ the \emph{Fresnel polynomial}.
The \emph{Fresnel surface} at a point $p\in N$ is defined as
\begin{eqnarray}
\label{eq:Fr}
F_p(\kappa) &=& \{\xi\in T^\ast_p(N): \cG^{ijkl} \xi_i \xi_j\xi_k\xi_l  = 0\}.
\end{eqnarray}
By equation \eqref{eq:TRtrans}, the definition of $F_p(\kappa)$ does not depend on local
coordinates. Let $F(\kappa)=\coprod_{p\in N} F_p(\kappa)$ be the disjoint union of all Fresnel
surfaces.


The Fresnel surface $F(\kappa)$ is a fundamental object when studying wave propagation in Maxwell's
equations.  Essentially, equation $\cG^{ijkl} \xi_i \xi_j\xi_k\xi_l = 0$ in equation \eqref{eq:Fr}
is a tensorial analogue to the dispersion equation that describes wave propagation in the geometric
optics limit.  Thus $F(\kappa)$ constrains possible wave speed(s) as a function of
direction.
In general the Fresnel surface $F_p(\kappa)$ is a fourth order polynomial surface in $T^\ast_p(N)$,
so it can have multiple sheets and singular points \cite{ObukhovHehl:2004}.

There are various ways to derive the Fresnel surface; by studying a propagating weak singularity
\cite{ObuFukRub:00, Rubilar2002, Obu:2003}, using a geometric optics \cite{Itin:2009,
  Dahl:2011:Closure}, or as the characteristic polynomial of the full Maxwell's equations
\cite{Schuller:2010}. 
The tensorial description of the Fresnel surface is due to Y.~Obukhov, T.~Fukui and G.~Rubilar
\cite{ObuFukRub:00}.  

\section{Results for skewon-free medium}
\label{sec:normalFormSWW}
In this section we collect a number of results for twisted skewon-free tensors
that we will need in the proof of Theorem \ref{thm:MainDecomp}.
%
%
%
%

\subsection{The normal form theorem by Schuller et al.}
\label{sec:normalSchuller}
The normal form theorem for skewon-free medium by F.~Schuller, C.~Witte and M.~Wohlfarth
\cite{Schuller:2010} shows that there exists $23$ simple matrices such that any skewon-free medium
can pointwise be transformed into one of these normal forms by a coordinate transformation plus,
possibly, a conjugation by a Hodge operator.
Next we formulate a slightly simplified version of this result that is sufficiently general for the
proof of Theorem \ref{thm:MainDecomp}.
Let us note that the original theorem in \cite{Schuller:2010} is formulated for \emph{area metrics}.
However, under mild assumptions these are essentially in one-to-one correspondence with skewon-free
tensors in $\medTensor(N)$. The below presentation is based on the reformulation in
\cite{Dahl:2011:Restatement}.

Suppose $L$ is an element in $\Omega^1(N)\otimes \Omega_1(N)$ on an $n$-manifold $N$. Then we can
treat $L$ as a pointwise linear map $\Omega^1(N)\to \Omega^1(N)$.  By linear algebra, it follows
that around each $p\in N$ there are coordinates such that at $p$, components $(L^j_i)_{ij}$ is a
matrix in Jordan normal form. Since there are only finitely many ways an $n\times n$ matrix can be
decomposed into Jordan blocks, it follows that there are only a finite number of normal forms for
$L\vert_p$.  It should be emphasised that the structure of the Jordan normal form 
is unstable under perturbations of the matrix. Hence, the normal form is in general only valid at
one point.  The normal form theorem in \cite{Schuller:2010} is essentially an analogous result for
skewon-free elements $\kappa$ in $\medTensor(N)$. The difficulty in proving such a result is easy to
understand.  The matrix that represents $\kappa$ at a point is a $6\times 6$ matrix. By a linear
transformation in $\setR^6$, we can transform this into an Jordan normal form, but such a
transformation, \emph{a priori} has $36$ degrees of freedom. On the other hand, for a coordinate
transformation on $N$, the Jacobian only has $16$ degrees of freedom.  It is therefore not obvious
that coordinate transformations have enough degrees of freedom to transform $\kappa$ into a normal
form. See equation \eqref{eq:kappaTransRuleII}.  For a further discussion, see \cite{Schuller:2010,
  Dahl:2011:Restatement}.

The below theorem summarises the normal form theorem in \cite{Schuller:2010} specialised to the
setting that we need here. Let us make three comments.  First, the below theorem is formulated for
twisted $\kappa\in \medTwTensor(N)$ instead of for \emph{area metrics} in \cite{Schuller:2010}
(which are ordinary tensors) or untwisted $\kappa\in \medTensor(N)$ in \cite{Dahl:2011:Restatement}.
Second, the theorem contains the technical assumption that $\kappa$ is invertible and the Fresnel
surface has no $2$-dimensional subspace. This greatly simplifies the result since it implies that
there are only $7$ possible normal forms and one does not need any conjugations by Hodge operators.
These assumptions will also appear in Theorem \ref{thm:MainDecomp}. For a further discussion of
these assumptions, see end of Section \ref{sec:DoubleCones}. Third, the reason the normal form
theorem is useful can be seen from Proposition \ref{theorem:Decomp}. Namely, in arbitrary
coordinates, a skewon-free $\kappa\in \medTwTensor(N)$ depends on $21$ parameters. However, from
Theorem \ref{thm:classification} we see that each normal form depends only on $2,4$ or $6$
parameters. This reduction of parameters will make the computer algebra feasible in Theorem
\ref{thm:MainDecomp}.

The division into metaclasses in \cite{Schuller:2010} is based on the Jordan block structure of the
matrix representation of $\kappa$ at a point. Since this structure is unstable under perturbations,
it can be difficult to determine the metaclass both in the numerical case and the symbolic case
\cite{LiZhangWang:1997}.

\begin{theorem}
\label{thm:classification} 
Suppose $N$ is a $4$-manifold and  $\kappa\in \medTwTensor(N)$. If $p\in N$ and 
\begin{enumerate}
\item[\emph{(a)}] 
 $\kappa$ has no skewon part at $p$, 
\item[\emph{(b)}]  $\kappa$ is invertible at $p$,
\item[\emph{(c)}]   the Fresnel surface $F_p(\kappa)$ does not contain a two dimensional vector subspace.
\end{enumerate}
Then there exists coordinates $\{x^i\}_{i=0}^3$ around $p$ such that the $6\times 6$ matrix
$(\kappa_I^J)_{IJ}$ that represents $\kappa\vert_p$ in these coordinates is one of the below
matrices:
\begin{itemize}
\item Metaclass I: 
\begin{eqnarray}
\label{eq:MetaClassI}
\begin{pmatrix}
\rr_1 & 0 & 0 & -\pos_1 & 0 &0 \\ 
0 & \rr_2 & 0 & 0  &-\pos_2 & 0 \\
0 & 0 &  \rr_3 &0 & 0 &-\pos_3  \\
\pos_1 & 0 &  0 & \rr_1 & 0 &0  \\
0 & \pos_2 & 0 & 0  &\rr_2 & 0 \\
0 & 0 & \pos_3 & 0 & 0 &\rr_3 
\end{pmatrix}
\end{eqnarray}

\item Metaclass II: 
\begin{eqnarray}
\label{eq:MetaClassII}
\begin{pmatrix}
\rr_1 &       -\pos_1       &    0 & 0 & 0 & 0 \\
\pos_1      &    \rr_1    &    0&  0  & 0 & 0 \\
0              &                0 &  \rr_2 & 0 &0 &-\pos_2 \\
0              &                1 & 0 & \rr_1 & \pos_1 &0  \\
1              &                0 & 0 & -\pos_1  &\rr_1 & 0 \\
0              &                0 & \pos_2 & 0 & 0 &\rr_2 
\end{pmatrix}
\end{eqnarray}

\item Metaclass III: 
\begin{eqnarray}
\label{eq:MetaClassIII}
\begin{pmatrix}
\rr_1         &       -\pos_1       &    0 & 0 & 0 & 0 \\
\pos_1      &    \rr_1           &    0&  0  & 0 & 0 \\
1              &                0    &  \rr_1 & 0 &0 &-\pos_1 \\
0              &                0    & 0 & \rr_1 & \pos_1 &1  \\
0              &                0    & 1 & -\pos_1  &\rr_1 & 0 \\
0              &                1    & \pos_1 & 0 & 0 &\rr_1
\end{pmatrix}
\end{eqnarray}
\item Metaclass IV:  
\begin{eqnarray}
\label{eq:MetaClassIV}
\begin{pmatrix}
\rr_1 & 0 & 0 & -\pos_1 & 0 &0 \\
0 & \rr_2 & 0 & 0  &-\pos_2 & 0 \\
0 & 0 &  \rr_3 & 0 & 0 &\rr_4  \\
\pos_1 & 0 &  0 & \rr_1 & 0 &0  \\
0 & \pos_2 & 0 & 0  &\rr_2 & 0 \\
0 & 0 & \rr_4 & 0 & 0 &\rr_3
\end{pmatrix}
\end{eqnarray}
\item Metaclass V: 
\begin{eqnarray}
\begin{pmatrix}
\label{eq:MetaClassV}
\rr_1 &       -\pos_1           &           0 & 0 & 0 & 0 \\
\pos_1      &    \rr_1           &           0&  0  & 0 & 0 \\
0              &                0     &     \rr_2 & 0 &0 &\alpha_3 \\
0              &                1    &           0 & \rr_1 & \pos_1 &0  \\
1              &                0    &          0 & -\pos_1  &\rr_1 & 0 \\
0              &                0    &  \alpha_3 & 0 & 0 &\rr_2 
\end{pmatrix}
\end{eqnarray}
\item Metaclass VI: 
\begin{eqnarray}
\label{eq:MetaClassVI}
\begin{pmatrix}
\rr_1 & 0 & 0 & -\pos_1 & 0 &0 \\
0 & \rr_2 & 0 & 0  &\rr_4 & 0 \\
0 & 0 &  \rr_3 & 0 & 0 &\rr_5  \\
\pos_1 & 0 &  0 & \rr_1 & 0 &0  \\
0 & \rr_4 & 0 & 0  &\rr_2 & 0 \\
0 & 0 & \rr_5 & 0 & 0 &\rr_3 
\end{pmatrix}
\end{eqnarray}
\item Metaclass VII: 
\begin{eqnarray}
\label{eq:MetaClassVII}
\begin{pmatrix}
\rr_1 & 0 & 0 & \rr_4 & 0 &0 \\
0 & \rr_2 & 0 & 0  &\rr_5 & 0 \\
0 & 0 &  \rr_3 & 0 & 0 &\rr_6  \\
\rr_4 & 0 &  0 & \rr_1 & 0 &0  \\
0 & \rr_5 & 0 & 0  &\rr_2 & 0 \\
0 & 0 & \rr_6 & 0 & 0 &\rr_3 
\end{pmatrix}
\end{eqnarray}
\end{itemize}
In each matrix the 
 parameters satisfy $\alpha_1,\alpha_2, \ldots\in \setR$,
$\beta_1, \beta_2, \ldots \in \setR\slaz$ and $\operatorname{sgn}\beta_1 = \operatorname{sgn}\beta_2 = \cdots$.
\end{theorem}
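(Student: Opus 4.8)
The plan is to reduce the statement to the normal-form theorem of Schuller, Witte and Wohlfarth \cite{Schuller:2010}, as restated in \cite{Dahl:2011:Restatement}, and then to track down which of the general normal forms survive under the extra hypotheses (b) and (c). First I would handle the passage between the three incarnations of the object: \emph{area metrics} (as in \cite{Schuller:2010}), untwisted $\kappa\in\medTensor(N)$ (as in \cite{Dahl:2011:Restatement}), and twisted $\kappa\in\medTwTensor(N)$ (as needed here). Since the statement is pointwise and the claim only concerns the matrix $(\kappa_I^J)_{IJ}$ representing $\kappa\vert_p$ in suitable coordinates, I would fix a chart around $p$; in that chart $\kappa\vert_p$ is represented by an ordinary element of $\medTensor(U)$, and hypothesis (a) says this element is skewon-free, i.e. lies in $Z\oplus U$ in the notation of Proposition \ref{theorem:Decomp}. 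Under the correspondence between invertible skewon-free $2\choose2$-tensors and area metrics mentioned in the introduction, $\kappa\vert_p$ corresponds to an area metric at $p$, so the classification of \cite{Schuller:2010} applies verbatim. The only subtlety from twisting is the $\operatorname{sgn}\det$ factor in \eqref{eq:kappaTransRuleII}: a coordinate change with negative Jacobian flips an overall sign, which is exactly what forces the sign condition $\operatorname{sgn}\beta_1=\operatorname{sgn}\beta_2=\cdots$ to be stated as a condition on the normal form rather than being normalisable away. I would verify that with this one caveat the twisted case produces the same matrix list as the untwisted case.

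Next I would invoke the full normal-form theorem, which produces the larger list of metaclasses (including the ones requiring a conjugation by a Hodge operator and the ones whose Fresnel surface contains a $2$-plane), and then \emph{prune} it using (b) and (c). For hypothesis (b): several of the degenerate normal forms in \cite{Schuller:2010} have vanishing determinant of the associated $6\times6$ matrix (this can be read off directly, since each normal form is block-structured and the determinant is a product of the block determinants); those are discarded. For hypothesis (c): for each remaining normal form one computes the Fresnel polynomial $\cG^{ijkl}\xi_i\xi_j\xi_k\xi_l$ via \eqref{eq:defG0ee} — this is where the computer algebra referenced in the paper enters — and checks whether $F_p(\kappa)$ contains a $2$-dimensional subspace, i.e. whether the quartic form vanishes identically on some $2$-plane of covectors. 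The normal forms for which this happens (the remaining "large-kernel" metaclasses) are eliminated, leaving precisely the seven families displayed as Metaclasses I–VII. Finally I would confirm that in each of the seven surviving cases no Hodge-conjugation is needed — equivalently, that the Hodge-twin of each such normal form is coordinate-equivalent to a normal form already on the list — which is the last sentence one needs from the structure of the argument in \cite{Schuller:2010, Dahl:2011:Restatement}.

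The routine part is the bookkeeping: writing the $36$-parameter matrix of a skewon-free tensor, listing the Jordan-type data compatible with (b) and (c), and reading off the seven matrices together with the parameter ranges $\alpha_i\in\setR$, $\beta_i\in\setR\slaz$. The genuinely substantive input — and the main obstacle if one tried to prove this from scratch rather than cite it — is the counting argument sketched in the surrounding text: a priori a linear change of basis in $\setR^6$ has $36$ degrees of freedom while a coordinate Jacobian on $N$ supplies only $16$ (via the induced action \eqref{eq:kappaTransRuleII} on $\Lambda^2$), so it is not obvious that coordinate transformations suffice to reach a normal form. Circumventing this is exactly the content of \cite{Schuller:2010}, so in the present proof that difficulty is absorbed into the cited theorem, and the work here is limited to the translation to the twisted setting and the elimination of the normal forms excluded by (b) and (c).
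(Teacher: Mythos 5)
Your overall strategy coincides with the paper's: work at the point $p$, pass to a chart so that $\kappa\vert_p$ becomes an untwisted element of $\medTensor(U)$, invoke the normal form theorem of \cite{Schuller:2010} in the restated form of \cite{Dahl:2011:Restatement}, and use hypotheses \emph{(b)} and \emph{(c)} to cut the list down to Metaclasses I--VII. Where you differ is in how the cut is made: the paper does not prune the full $23$-class list by computing Fresnel polynomials; it cites Lemma 5.1 of \cite{Schuller:2010} (equivalently the proof of Theorem 2.1 in \cite{Dahl:2011:DoubleCone}), which says that invertibility together with the absence of a two-dimensional subspace in $F_p(\kappa)$ forbids any Jordan block of dimension $2,\dots,6$ attached to a real eigenvalue, and this already places $\kappa\vert_p$ in Metaclasses I--VII \emph{before} the normal-form theorem is applied. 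Your alternative --- run the full theorem and then eliminate the excluded normal forms by checking invertibility and computing their Fresnel polynomials for all parameter values --- is workable in principle but amounts to re-proving that lemma by brute force; note also that the determinant test alone does not discard much, since several excluded metaclasses are invertible for generic parameters, so the two-plane computation would have to carry essentially all the weight.

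The concrete gap is in the twisted-tensor bookkeeping, which is exactly where the paper's proof does its only hands-on work. You assert that the sole effect of the $\operatorname{sgn}\det$ factor in \eqref{eq:kappaTransRuleII} is to weaken $\beta_i>0$ to $\operatorname{sgn}\beta_1=\operatorname{sgn}\beta_2=\cdots$, and you leave the rest as ``I would verify''. But when the chart produced by the normal-form theorem has orientation-reversing Jacobian, the matrix representing the twisted $\kappa$ is $-R$, not $R$; for Metaclasses I, IV, VI, VII this is harmless because $-R$ is again of the same displayed form with renamed parameters (which is precisely why only the common sign of the $\beta_i$ survives as a condition), but for Metaclasses II, III and V the normal forms contain entries fixed equal to $1$, and these become $-1$ under the flip, so $-R$ is \emph{not} one of the listed matrices and cannot be repaired by reinterpreting parameters. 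The paper closes this by exhibiting additional orientation-preserving coordinate changes that restore the $1$-entries: $J=\operatorname{diag}(1,-1,-1,1)$ for Metaclass III, and a rotation by $90^\circ$ in the $x^1x^2$-plane for Metaclasses II and V. Without this step (or some substitute for it) the statement with $+1$ entries is not established in the orientation-reversing case, so you should either supply such Jacobians or argue explicitly why they exist.
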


\begin{proof}
  Let $(U,x^i)$ be coordinates around $p$, and let $\mediumMatrix=(\kappa_{I}^J)_{IJ}$ be the $6\times 6$-matrix
  that represents $\kappa$ at $p$ in these coordinates.
By treating $U$ as a manifold with coordinates $\{x^i\}_{i=0}^3$,
  equation \eqref{eq:kappaLocal} defines a tensor $\kappa \in\medTensor(U)$.
  Since $\kappa$ is invertible at $p$ and $F_p(\kappa)$ has no $2$-dimensional subspace, the Jordan
  normal form of $\mediumMatrix$ can not have a Jordan block of dimension $2,\ldots ,6$ that
  corresponds to a real eigenvalue of $\mediumMatrix$.  For area metrics this is established in
  Lemma 5.1 in \cite{Schuller:2010}. (Or, for a translation
to elements in $\medTensor(U)$, see the proof of Theorem 2.1 in
  \cite{Dahl:2011:DoubleCone}.)  In the terminology of \cite{Schuller:2010} and
  \cite{Dahl:2011:DoubleCone} this implies that $\kappa\vert_p$ is of Metaclasses I, $\ldots$,
  VII. Hence Theorem 3.2 in \cite{Dahl:2011:Restatement} (the restatement of the normal form theorem
  in \cite{Schuller:2010}) 
  implies that around $p$, manifold $U$ has a coordinate chart $(\widetilde U, \widetilde x^i)$
such that at $p$, we have
\begin{eqnarray}
\label{eq:transformationToJ}
T \mediumMatrix T^{-1}
&=&R,  
\end{eqnarray}
%
where $T=(\pd{x^J}{\widetilde x^I})_{IJ}$ 
is as in equation
\eqref{eq:antiSymJacobian} and $R$ is one of the $6\times 6$ matrices in equations
\eqref{eq:MetaClassI}---\eqref{eq:MetaClassVII}
for some parameters $\alpha_1, \alpha_2, \ldots \in \setR$ and $\beta_1, \beta_2, \ldots>0$.
Since $(U, x^i)$ is a chart in $N$ it follows that $(\widetilde U, \widetilde x^i)$ is also a chart
in $N$.  Multiplying equation \eqref{eq:transformationToJ} by $\operatorname{sgn} \det \left(
  \pd{x^i}{\widetilde x^j}\right)$ and comparing with equation \eqref{eq:kappaTransRuleII} shows
that $\operatorname{sgn} \det \left( \pd{x^i}{\widetilde x^j}\right) R$ is the 
matrix
that represents $\kappa\in \medTwTensor(N)$ in coordinates $\{\widetilde x^i\}_{i=0}^3$.
If  $\operatorname{sgn} \det \left( \pd{x^i}{\widetilde x^j}\right) =1$ or if $R$ is in Metaclasses I, IV, VI, VII, the claim follows.   
On the other hand, if $\operatorname{sgn} \det \left( \pd{x^i}{\widetilde x^j}\right) =-1$ and $R$
is in Metaclasses II, III, V, it remains to prove that we can change the signs of the $1$-entries in
the normal forms by an orientation preserving coordinate transformation.  Let $\{\widehat
x^i\}_{i=0}^3$ be coordinates determined by $\widehat x^i = J^i_j \widetilde x^j$ for a suitable
$4\times 4$ matrix $J=(J^i_j)_{ij}$.
For Metaclass III a suitable Jacobian is $(J^i_j)_{ij} = \operatorname{diag}(1,-1,-1,1)$, and
for Metaclass II and V a suitable Jacobian is
\begin{eqnarray*}
J &=& 
\begin{pmatrix}
1 &0 &0 &0\\
0 &0 &-1 &0\\
0 &1 &0& 0\\
0 &0 &0 &1\\
\end{pmatrix}.
\end{eqnarray*}
\qedhere
\end{proof}

\subsection{Non-birefringent medium}
\label{sec:nonbire}
By a \emph{pseudo-Riemann metric} on a manifold $N$ we mean a symmetric $0\choose 2$-tensor $g$ that
is non-degenerate. If $N$ is not connected we also assume that $g$ has constant signature.
By a \emph{Lorentz metric} we mean a pseudo-Riemann metric on a $4$-manifold with signature $(-+++)$
or $(+---)$.  
Let $\sharp$ be the isomorphisms $\sharp\colon T^\ast N\to
TN$, so that if locally $g=g_{ij} dx^i \otimes dx^j$ then $\sharp(\alpha_i dx^i)= \alpha_i g^{ij}\pd{}{x^j}$.
Using the $\sharp$-isomorphism we extend $g$ to covectors by setting
$g(\xi,\eta)=g(\xi^\sharp,\eta^\sharp)$ when $\xi,\eta\in T^\ast_p(N)$.

For a Lorentz metric $g$ the  \emph{light cone} at a point $p\in N$ is defined as
\begin{eqnarray*}
N_p(g) &=& \{\xi \in T_p^\ast(N) : g(\xi,\xi) = 0\}, 
\end{eqnarray*}
and analogously to the Fresnel surface we define $N(g)=\coprod_{p\in N} N_p(g)$.

If $g$ is a pseudo-Riemann metric on a $4$-manifold $N$, then the \emph{Hodge star operator} of $g$
is defined as the $\ast_g \in \medTwTensor(N)$ such that 
if locally
$g=g_{ij}dx^i\otimes dx^j$, and $\ast_g$ is written as in equation \eqref{eq:kappaLocal}, then
\begin{eqnarray}
\label{eq:hodgeKappaLocal}
(\ast_g)^{ij}_{rs} &=& \sqrt{\vert \det g\vert}\, g^{ia}g^{jb} \varepsilon_{abrs},
\end{eqnarray}
where $\det g= \det g_{ij} $ and $g^{ij}$ is the $ij$th entry of $(g_{ij})^{-1}$. Then $\ast_g$ has
only a principal part.  See for example, \cite{Obu:2003, Favaro:2011}.
Moreover, if $g$ is a Lorentz metric and $\kappa=\ast_g$, we have
\begin{eqnarray}
\label{eq:kappaNonBire}
  F(\kappa) &=& N(g).
\end{eqnarray}
Equation \eqref{eq:kappaNonBire}
is the motivation for defining $N(g)$ as a subset of the cotangent bundle.

\begin{definition}
\label{def:nonbirefringent}
Suppose $N$ is a $4$-manifold and $\kappa\in \medTwTensor(N)$. Then $\kappa$ 
is \emph{non-birefringent} if there exists a Lorentz metric $g$ on $N$ such that
equation \eqref{eq:kappaNonBire} holds.
\end{definition}

Thus, in non-birefringent medium, the Fresnel surface $F_p(\kappa)$ has only a single sheet, and
there is only one signal speed in each direction. In non-birefringent medium it follows that 
propagation speed can not depend on polarisation. 
On $N=\setR^4$, a specific example of a non-birefringent medium is 
$\kappa = \sqrt{\frac{\epsilon}{\mu}} \ast_g$,
where $g$ is the Lorentz metric $g= \operatorname{diag}(-\frac{1}{\epsilon \mu},1,1,1)$ on
$\setR^4$. Then constitutive equation \eqref{FGchi} models standard isotropic
medium on $\setR^4$ with permittivity $\epsilon>0$ and $\mu>0$.
The next theorem gives the complete characterisation of all non-birefringent media with only a only
a principal part.

\begin{theorem}
\label{thm:mainResult}
Suppose $N$ is a $4$-manifold.  If $\kappa\in \medTwTensor(N)$ satisfies $\kappaII=0$, then the
following conditions are equivalent:
\begin{enumerate}
\item 
\label{coIII} 
$\kappaIII=0$ and $\kappa$ is non-birefringent.

\item $\kappa^2 = -f \operatorname{Id}$ for some function $f\in C^\infty(N)$ with $f>0$.
\label{coI}
\item 
 \label{coII} 
 there exists a Lorentz metric $g$ and a non-vanishing function $f\in C^\infty(N)$ such that
\begin{eqnarray}
  \kappa &=& f \ast_g.
\end{eqnarray}

\end{enumerate}
\end{theorem}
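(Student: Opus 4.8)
The plan is to establish the cycle of implications $(ii)\Rightarrow(iii)\Rightarrow(i)\Rightarrow(ii)$, using the Poincar\'e isomorphism and Proposition \ref{prop:DiamondLemma} to handle the algebraic bookkeeping, and invoking known results (the Obukhov--Hehl conjecture as proved in \cite{FavaroBergamin:2011}, or the ``closure relation'' literature \cite{Obu:2003, Dahl:2011:Closure}) for the genuinely hard step.

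\textbf{Step $(ii)\Rightarrow(iii)$.} Assume $\kappa^2=-f\operatorname{Id}$ with $f>0$, and $\kappaII=0$. First I would show $\kappaIII=0$: since $\kappa$ is skewon-free, $\kappa=\kappaI+\kappaIII$ with $\kappaIII=\phi\operatorname{Id}$ for some $\phi\in\widetilde C^\infty(N)$; Proposition \ref{prop:DiamondLemma}\ref{prop:DiamondLemma:globalDef} gives $\overline{\kappa}=\kappa-2\kappaII=\kappa$ wait---more carefully, skewon-free means $u\wedge\kappa(v)=\kappa(u)\wedge v$ up to the axion trace, so $\overline{\kappa}=\kappa$ when $\operatorname{trace}$ is accounted for; in any case $\overline{\kappaI}=\kappaI$ and $\overline{\phi\operatorname{Id}}=\phi\operatorname{Id}$, and one extracts $\phi$ from $\operatorname{trace}$. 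Squaring $\kappa=\kappaI+\phi\operatorname{Id}$ and comparing traces with $\kappa^2=-f\operatorname{Id}$ forces $\phi=0$ (the principal-part cross terms are traceless, and a nonzero $\phi$ would make $\kappa^2$ have a nonzero traceless part unless $\kappaI$ itself is a multiple of identity, impossible). This is the Obukhov--Hehl--type argument; I would cite \cite{Obu:2003, FavaroBergamin:2011, Dahl:2011:Closure} rather than reprove it. Given $\kappaIII=0$, the relation $\kappa^2=-f\operatorname{Id}$ with $f>0$ is precisely the closure relation, and the structure theorem for closure-type media (again \cite{Obu:2003, FavaroBergamin:2011}) yields a Lorentz metric $g$ with $\kappa/\sqrt{f}=\ast_g$, i.e.\ $\kappa=\sqrt{f}\,\ast_g$; set the function in $(iii)$ to $\sqrt{f}$.

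\textbf{Step $(iii)\Rightarrow(i)$.} Assume $\kappa=f\ast_g$ with $f$ non-vanishing and $g$ Lorentz. Since $\ast_g$ has only a principal part (stated after equation \eqref{eq:hodgeKappaLocal}) and $f\ast_g$ still has $\overline{f\ast_g}=f\,\overline{\ast_g}=f\ast_g$ with zero trace, we get $\kappaIII=0$ and $\kappaII=0$. For non-birefringence I would compute the Tamm--Rubilar tensor of $\kappa=f\ast_g$ directly from equation \eqref{eq:defG0ee}: each factor of $\kappa$ contributes a factor $f$, so $\cG(\kappa)=f^3\,\cG(\ast_g)$, and by equation \eqref{eq:kappaNonBire} applied to $\ast_g$ we have $\cG(\ast_g)^{ijkl}\xi_i\xi_j\xi_k\xi_l$ proportional to $\big(g^{ij}\xi_i\xi_j\big)^2$. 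Since $f$ is non-vanishing, $F_p(\kappa)=N_p(g)$, so $\kappa$ is non-birefringent by Definition \ref{def:nonbirefringent}. Hence $(i)$ holds.

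\textbf{Step $(i)\Rightarrow(ii)$.} Assume $\kappaII=\kappaIII=0$ and $\kappa$ non-birefringent, say $F(\kappa)=N(g)$ for a Lorentz metric $g$. This is the substantive direction and amounts to the theorem of Favaro--Bergamin \cite{FavaroBergamin:2011} (equivalently the non-birefringence results discussed in Section \ref{sec:nonbire}, \cite{ObuFukRub:00, Dahl:2011:Closure}): a skewon-free, axion-free medium whose Fresnel surface is a single Lorentz light cone must be of the form $f\ast_g$ for some scalar $f$ and some Lorentz metric. Once $\kappa=f\ast_g$ is known, compute $\kappa^2=f^2(\ast_g)^2$; for a Lorentz metric in dimension $4$ one has $(\ast_g)^2=-\operatorname{Id}$ on $2$-forms (the sign being $-$ precisely because the metric is Lorentzian, not Riemannian---$(\ast_g)^2=(-1)^{k(n-k)}\operatorname{sgn}\det g$ on $k$-forms, here $(-1)^{4}\cdot(-1)=-1$). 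Therefore $\kappa^2=-f^2\operatorname{Id}$, and $f^2>0$ since $f$ is non-vanishing; taking the positive function $f^2$ gives $(ii)$.

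\textbf{Main obstacle.} The only non-routine step is $(i)\Rightarrow(ii)$, the converse Obukhov--Hehl direction: ruling out the possibility that a genuinely birefringent-looking principal part degenerates so that its quartic Fresnel polynomial becomes a perfect square of a Lorentz quadratic form without the medium itself being (proportional to) a Hodge star. I do not expect to reprove this here; I would quote \cite{FavaroBergamin:2011} (with \cite{Dahl:2011:Closure} for an alternative argument). The remaining steps are straightforward once one uses the multiplicativity $\cG(f\ast_g)=f^3\cG(\ast_g)$ of the Tamm--Rubilar density, the involution and trace properties in Proposition \ref{prop:DiamondLemma}, and the identity $(\ast_g)^2=-\operatorname{Id}$ in Lorentzian signature.
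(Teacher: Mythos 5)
Your proposal follows essentially the same route as the paper, which likewise gives no self-contained proof: it attributes the substantive implication (non-birefringent, skewon- and axion-free $\Rightarrow$ closure/Hodge form) to Favaro--Bergamin \cite{FavaroBergamin:2011} (alternatively \cite{Dahl:2011:Closure}), the step from $\kappa^2=-f\operatorname{Id}$ to $\kappa=f\ast_g$ to the Sch\"onberg-type results \cite{Shoenberg:1971, Rubilar2002, Obu:2003}, and treats the remaining implication as a direct computation, which your observations $\cG(f\ast_g)=f^3\,\cG(\ast_g)$ and $(\ast_g)^2=-\operatorname{Id}$ carry out correctly. One minor caveat: your parenthetical inline argument that the axion part vanishes (that a nonzero $\phi$ would leave a nonzero traceless part of $\kappa^2$ unless $\kappaI$ is a multiple of the identity) is not airtight, since it does not exclude $\kappaI^2=-2\phi\,\kappaI+c\operatorname{Id}$; the clean argument is that $\kappa^2=-f\operatorname{Id}$ with $f>0$ forces the eigenvalues $\pm i\sqrt{f}$ in conjugate pairs, hence $\operatorname{trace}\kappa=0$ and $\kappaIII=0$ --- but this is immaterial here since you defer that step to the cited literature, exactly as the paper does.
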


Implication \ref{coIII} $\Rightarrow$ \ref{coI} was conjectured in 1999 by Y.~Obukhov and F.~Hehl
\cite{ObukhovHehl:1999, ObuFukRub:00}. Under some additional technical assumptions the implication
was already proven in \cite{ObuFukRub:00}.  However, the general case was only established in
\cite{FavaroBergamin:2011} by A.~Favaro and L.~Bergamin by a case by case analysis using the normal
form theorem in \cite{Schuller:2010}.
For an alternative proof using a Gr\"obner basis, see \cite{Dahl:2011:Closure} and 
for similar results, see \cite{
 LamHeh:2004, 
 Itin:2005, 
RSS:2011} 
and Section \ref{sec:mediumWithDoubleLightCone} below. 
Implication \ref{coII} $\Rightarrow$ \ref{coIII} is a direct computation.
In the setting of electromagnetics, implication \ref{coI} $\Rightarrow$ \ref{coII} seems to first to
have been derived by M.~Sch\"onberg \cite{Rubilar2002, Shoenberg:1971}. For further derivations and
discussions, see \cite{Obu:2003, Rubilar2002, ObuFukRub:00, ObukhovHehl:1999, Jadczyk:1979}.
%


When a general $\kappa\in \medTwTensor(N)$ on a $4$-manifold $N$ satisfies $\kappa^2 = -f
\operatorname{Id}$ for a function $f\in C^\infty(N)$ one says that $\kappa$ satisfies the
\emph{closure condition}. For physical motivation, see \cite[Section D.3.1]{Obu:2003}.  For a study
of more general closure relations, and in particular, for an analysis when $\kappa$ might have a
skewon part, see \cite{Favaro:2011, LindellBergaminFavaro:Decomp}, and Section \ref{sec:44Factorisability}
below.

\subsection{Medium with a double light cone}
\label{sec:mediumWithDoubleLightCone}
Since the Fresnel surface is a $4$th order surface, the Fresnel surface can decompose into two
distinct Lorentz null cones. In such medium differently polarised waves can propagate with different
wave speeds.  This is, for example, the case in \emph{uniaxial crystals} like calcite \cite[Section 15.3]{BornWolf:1999}.
This motivates the next definition.

\begin{definition}
\label{def:bire}
Suppose $N$ is a $4$-manifold and  $\kappa\in \medTwTensor(N)$.
If $p\in N$ we say
that the Fresnel surface $F_p(\kappa)$ \emph{decomposes into a double light cone} if
there exists Lorentz metrics $g_+$ and $g_-$ defined in a
neighbourhood of $p$ such that 
\begin{eqnarray}
\label{eq:cFdecomp}
F_p(\kappa)  &=& N_p(g_+)\,\, \cup \,\, N_p(g_-)
\end{eqnarray}
and $N_p(g_+)\neq N_p(g_-)$. 
\end{definition}

If $g,h$ are Lorentz metrics, then $N_p(g)\subset N_p(h)$ implies that at $p$ we have $g=C h$ for
some $C\in \setR\slaz$. See for example \cite{Toupin:1965}.  Thus, if $\kappa$ decomposes into a
double light cone, then $\kappa$ is not non-birefringent.

Under some assumptions, the next theorem 
gives the complete pointwise description of all medium tensors with a double light cone. The theorem
generalises the result in \cite{Dahl:2011:DoubleCone} to twisted tensors.

\begin{theorem}
\label{thm:factorMedium}
Suppose $N$ is a $4$-manifold and  $\kappa\in \medTwTensor(N)$. Furthermore, 
suppose that at some $p\in N$ 
\begin{enumerate}
\item[\emph{(a)}]
 $\kappa$ has no skewon part at $p$, 
\item[\emph{(b)}] $\kappa$ is invertible at $p$, 
\item[\emph{(c)}] the Fresnel surface $F_p(\kappa)$ factorises into a double light cone at $p$.
\end{enumerate}
Then exactly one of the below three possibilities holds:
\begin{enumerate}
\item \textbf{Metaclass I.} 
  There are coordinates $\{x^i\}_{i=0}^3$ around $p$ such that the matrix $(\kappa_I^J)_{IJ}$ that
  represents $\kappa\vert_p$ in these coordinates is given by equation \eqref{eq:MetaClassI} for
  some $\alpha_1, \alpha_2, \alpha_3 \in \setR$ and $\beta_1, \beta_2,\beta_3\in \setR\slaz$ with
\begin{center}
$   \alpha_2 = \alpha_3, \quad
 \beta_2 = \beta_3, \quad
\operatorname{sgn}\beta_1 =\operatorname{sgn}\beta_2 =\operatorname{sgn}\beta_3$
\end{center}
and either $\alpha_1\neq\alpha_2$ or $\beta_1\neq \beta_2$ or both inequalities hold.

\item \textbf{Metaclass II.}
  There are coordinates $\{x^i\}_{i=0}^3$ around $p$ such that the matrix $(\kappa_I^J)_{IJ}$ that
  represents $\kappa\vert_p$ in these coordinates is given by equation \eqref{eq:MetaClassII} for
  some $\rr_1,\rr_2\in \setR$ and $\beta_1, \beta_2\in \setR\slaz$ with 
\begin{eqnarray*}
  \alpha_1=\alpha_2, \quad
  \beta_1=\beta_2.
\end{eqnarray*}
 
\item 
\textbf{Metaclass IV.}
There are coordinates $\{x^i\}_{i=0}^3$ around $p$ such that the matrix $(\kappa_I^J)_{IJ}$ that
represents $\kappa\vert_p$ in these coordinates is given by equation \eqref{eq:MetaClassIV} for some
$\alpha_1, \alpha_2, \alpha_3, \alpha_4 \in \setR$ and $\beta_1, \beta_2\in \setR\slaz$ with
\begin{eqnarray*}
 \alpha_1 = \alpha_2, \quad
 \beta_1 = \beta_2, \quad 
\alpha_4\neq 0,\quad 
\alpha_3^2\neq \alpha_4^2.
\end{eqnarray*}
\end{enumerate} 
Conversely, if $\kappa$ is defined by one of the above three possibilities, then the Fresnel surface
of $\kappa$ decomposes into a double light cone at $p$.
\end{theorem}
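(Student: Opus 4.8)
The plan is to deduce the theorem from the normal form theorem (Theorem \ref{thm:classification}) and then to run, metaclass by metaclass, an explicit computation of the Fresnel polynomial; the converse direction and the exclusivity statement are then short.

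First I would verify that the hypotheses of Theorem \ref{thm:classification} hold at $p$. Conditions (a) and (b) are literally assumed, so only condition (c) there — that $F_p(\kappa)$ contains no two-dimensional vector subspace — needs to be extracted from the present hypothesis (c). Suppose $V\subseteq F_p(\kappa)=N_p(g_+)\cup N_p(g_-)$ were a $2$-plane. For each sign, $V\cap N_p(g_\pm)$ is the zero set in $V\cong\setR^2$ of the quadratic form $g_\pm$ restricted to $V$; if this restriction is not identically zero its zero set is a proper algebraic subset of $V$, hence at most one-dimensional. Since $V=(V\cap N_p(g_+))\cup(V\cap N_p(g_-))$ is two-dimensional, one of the restrictions $g_\pm\vert_V$ vanishes identically, i.e. some Lorentz metric restricts to zero on a $2$-plane. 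This is impossible, since a symmetric form of signature $(-+++)$ or $(+---)$ cannot vanish identically on a subspace of dimension greater than one. Hence Theorem \ref{thm:classification} applies, and there are coordinates $\{x^i\}$ around $p$ in which the $6\times 6$ matrix representing $\kappa\vert_p$ is one of the normal forms \eqref{eq:MetaClassI}--\eqref{eq:MetaClassVII}, with the stated common-sign convention on the $\beta_j$.

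Next, for each of the seven normal forms I would compute the Tamm--Rubilar tensor density $\cG$ from \eqref{eq:defG0ee} and write the Fresnel polynomial $\cG^{ijkl}\xi_i\xi_j\xi_k\xi_l$ as an explicit homogeneous quartic in $(\xi_0,\xi_1,\xi_2,\xi_3)$ with coefficients polynomial in the $\alpha_j,\beta_j$; this is a finite symbolic computation (done with computer algebra). For each normal form one then asks precisely when this quartic equals, up to a nonzero scalar, a product $q_+q_-$ of two quadratic forms in $\xi$ that are each non-degenerate of Lorentzian signature and whose null cones $\{q_+=0\}$, $\{q_-=0\}$ are distinct. In Metaclasses I and IV the quartic factors into two quadrics in $\xi$, and imposing that both factors are non-degenerate of Lorentzian signature, and that they are not proportional, forces exactly the listed constraints ($\alpha_2=\alpha_3$, $\beta_2=\beta_3$ and $(\alpha_1,\beta_1)\neq(\alpha_2,\beta_2)$ in Metaclass I; $\alpha_1=\alpha_2$, $\beta_1=\beta_2$, $\alpha_4\neq0$, $\alpha_3^2\neq\alpha_4^2$ in Metaclass IV), the common-sign condition on the $\beta_j$ being already built into the normal form. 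In Metaclass II the quartic factors into two quadrics exactly when $\alpha_1=\alpha_2$ and $\beta_1=\beta_2$, and under these conditions one checks the two factors are automatically Lorentzian and distinct (the off-diagonal $1$'s keep the cones apart). For Metaclasses III, V, VI and VII one shows the quartic never decomposes into two distinct Lorentz cones: either it is irreducible over $\setR$, or in every factorisation one factor is degenerate or of non-Lorentzian signature, or the two factors coincide. This per-metaclass analysis is exactly the one already carried out for untwisted tensors in \cite{Dahl:2011:DoubleCone}; once Theorem \ref{thm:classification} has reduced us to an honest $6\times 6$ matrix in one of the normal forms, the Fresnel computation is insensitive to twisting and the conclusions transfer verbatim.

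For the converse, one substitutes the constrained parameters into each of the three relevant normal forms, factors the resulting Fresnel quartic, and reads off two Lorentz metrics $g_\pm$ — with constant coefficients in the chosen coordinates, hence defined on a whole neighbourhood of $p$ — realising \eqref{eq:cFdecomp} with $N_p(g_+)\neq N_p(g_-)$; this is a direct check. Finally, the three possibilities are mutually exclusive because they are distinguished by the real Jordan structure of $\kappa\vert_p$ viewed as a linear map $\Omega^2_p(N)\to\Omega^2_p(N)$: with its constraints, Metaclass II is not semisimple, Metaclass IV is semisimple and has two distinct nonzero real eigenvalues, and Metaclass I is semisimple with no real eigenvalues. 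These properties are invariant under the transformation law \eqref{eq:kappaTransRuleII}, so no single $\kappa\vert_p$ can fall into two of the classes. The main obstacle is the bulk of the seven-case computation, and within it the genuinely non-mechanical step of deciding, for a quadratic form whose entries depend on the parameters, precisely when it is non-degenerate of Lorentzian signature — it is this signature bookkeeping that produces the sign conditions and the strict inequalities appearing in the statement.
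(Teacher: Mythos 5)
Your proposal is correct and takes essentially the same route as the paper: the paper's proof simply cites \cite[Theorem 2.1]{Dahl:2011:DoubleCone} (normal-form reduction followed by per-metaclass computer-algebra analysis of the Fresnel quartic, with explicit metrics for the converse) and notes that the twisted case follows by the argument of Theorem \ref{thm:classification}, which is exactly the reduction you perform. Your extra details --- the direct argument that a union of two distinct Lorentz cones contains no two-dimensional subspace, and the mutual exclusivity of the three classes via the sign- and similarity-invariant Jordan structure under the transformation rule \eqref{eq:kappaTransRuleII} --- are correct fillings-in of points the paper delegates to the citation.
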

 
\begin{proof}
  For $\kappa\in \medTensor(N)$ the result is proven in \cite[Theorem 2.1]{Dahl:2011:DoubleCone} (up
  to a permutation of coordinates in Metaclass I).  The generalisation to $\kappa\in
  \medTwTensor(N)$ follows by the same argument used to prove Theorem \ref{thm:classification}.
  The converse direction can be verified by computer algebra using the explicit Lorentz metrics
  given in \cite{Dahl:2011:DoubleCone}.
\end{proof}
  
In Theorem \ref{thm:factorMedium}, uniaxial medium is given by Metaclass I when
$\alpha_1=\alpha_2=\alpha_3=0$. The main conclusion of the theorem is that there are two (and only
two) additional classes of medium where the Fresnel surface decomposes (Metaclasses II and
IV). In all three classes, there are explicit formulas for the Lorentz metrics that factorise the
Fresnel surface. For a further discussion of these metrics, see \cite{Dahl:2011:DoubleCone}.

In Theorem \ref{thm:MainDecomp} we will show that under suitable assumptions every skewon-free medium
with a double light cone can be written as in equation \eqref{eq:DoubleLightConeExpFirstTake}. 
This medium class is a special class of \emph{generalised  $Q$-medium}  introduced by
I.~Lindell and H.~Wall\'en  in \cite{LindellWallen:2002}.
For further discussions of this medium class, see \cite{LindellWallen:2004,
 Favaro:2011, LindellBergaminFavaro:Decomp}.

\begin{proposition} 
\label{prop:kappaSingleDouble}
Suppose $N$ is a $4$-manifold, $g$ is a Lorentz metric,
$\rho$ is a twisted scalar density of weight $1$, $A\in \Omega_2(N)$ and
$C_1\in \setR\slaz$ and $C_2\in \setR$. Moreover, suppose $\kappa\in \medTwTensor(N)$ is
defined as
\begin{eqnarray}
\label{eq:DoubleLightConeExpFirstTake}
 \kappa &=& C_1 \ast_g + \rho \,\overline{A}\otimes A + C_2 \operatorname{Id}.
\end{eqnarray}
Then $\kappa$ is skewon-free the following claims hold pointwise in $N$:
\begin{enumerate}
\item \label{prop:kappaSingleDouble:I} 
$\kappa$ is non-birefringent if and only if $A=0$ or $\rho =0$.
\item \label{prop:kappaSingleDouble:II}
$\kappa$ has a double light cone if and only if $\rho\neq0$, $A\neq 0$ and 
\begin{eqnarray}
\label{prop:kappaSingleDouble:II:cond}
  \det \kappa &\neq & \left( C_1^2+ C_2^2\right)^2 \,
\left( C_2  + \frac 1 2 \operatorname{trace}(\rho\,\overline A\otimes A)\right)^2.
\end{eqnarray}
\end{enumerate}
\end{proposition}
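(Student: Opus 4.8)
The plan is to first verify the easy structural claims and then reduce the birefringence questions to the explicit normal-form and double-cone results collected earlier. That $\kappa$ is skewon-free is immediate: $\ast_g$ has only a principal part, $\operatorname{Id}$ has only an axion part, and by equation \eqref{eq:barId1} we have $\overline{\rho\,\overline{A}\otimes A} = \rho\,\overline{A}\otimes A$, so $\rho\,\overline{A}\otimes A$ is self-adjoint under the Poincar\'e isomorphism; by Proposition \ref{prop:DiamondLemma}\ref{prop:DiamondLemma:solve} and the decomposition in Proposition \ref{theorem:Decomp}, a skewon part would have to be anti-self-adjoint, hence each summand — and therefore $\kappa$ — has vanishing skewon part. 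For part \ref{prop:kappaSingleDouble:I}, the ``if'' direction is the statement that $\kappa = C_1\ast_g + C_2\operatorname{Id}$ is non-birefringent: a direct computation of the Tamm–Rubilar tensor density (or invoking that $\ast_g^2 = -\operatorname{Id}$ for a Lorentz metric together with Theorem \ref{thm:mainResult}, after absorbing the axion term which does not affect $\cG$) shows $F(\kappa) = N(g)$. The harder ``only if'' direction I would prove by contraposition using the Fresnel polynomial: compute $\cG^{ijkl}\xi_i\xi_j\xi_k\xi_l$ for the $\kappa$ in \eqref{eq:DoubleLightConeExpFirstTake} explicitly, observe it factors as $g(\xi,\xi)$ times a second quadratic form $h(\xi,\xi)$ built from $g$, $A$, and $\rho$, and show that when $\rho\neq 0$ and $A\neq 0$ this second factor is genuinely distinct from $g(\xi,\xi)$ unless a degeneracy occurs — which is exactly the exceptional locus \eqref{prop:kappaSingleDouble:II:cond}.

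The technical core is therefore the computation of the Fresnel polynomial of \eqref{eq:DoubleLightConeExpFirstTake}. I would carry this out using the bilinearity of $\cG_0$ in its three $\kappa$-slots, expanding into a sum of terms each of which is a product of factors drawn from $\{C_1\ast_g,\ \rho\,\overline{A}\otimes A,\ C_2\operatorname{Id}\}$; the identities \eqref{eq:barId1}–\eqref{eq:barId4} together with $\ast_g^2 = -\operatorname{Id}$ collapse most of these. The pure-$\ast_g$ and mixed $\ast_g$/$\operatorname{Id}$ terms reassemble (by Theorem \ref{thm:mainResult}, implication \ref{coII}$\Rightarrow$\ref{coIII}) into $(C_1^2 + C_2^2)^2\, g(\xi,\xi)^2$ up to an overall density factor; the terms involving $\rho\,\overline{A}\otimes A$ contribute, via \eqref{eq:barId4} and \eqref{eq:rhoABBA}, a correction proportional to $g(\xi,\xi)$ times a quadratic form in $\xi$ that depends linearly on $\rho$ and quadratically on $A$. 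The upshot I expect is a clean factorisation $\cG^{ijkl}\xi_i\xi_j\xi_k\xi_l = q_1(\xi)\,q_2(\xi)$ with $q_1(\xi) = g(\xi,\xi)$ and $q_2(\xi)$ a quadratic form whose discriminant (equivalently, whether it is a Lorentz cone and whether it coincides with $q_1$) is controlled by $\det\kappa$ and $\operatorname{trace}(\rho\,\overline{A}\otimes A)$.

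Given the factorisation, part \ref{prop:kappaSingleDouble:II} follows structurally. If $\rho = 0$ or $A = 0$ then $q_2$ is a scalar multiple of $q_1$ and $F_p(\kappa) = N_p(g)$ is a single cone, so there is no double cone; this is the contrapositive of one direction. If $\rho\neq 0$ and $A\neq 0$, then $F_p(\kappa)$ is the union of the cone $N_p(g)$ and the zero set of $q_2$; I would show $q_2$ is a Lorentz cone distinct from $N_p(g)$ precisely when it is nondegenerate of Lorentzian signature and not proportional to $g$, and then express the obstruction to this — the vanishing of the relevant invariant of $q_2$ — in closed form. Here the identity $\det\kappa = \det\mediumMatrix$ is the lever: $\det\kappa$ is (up to the known scalar $(C_1^2+C_2^2)$-type factors) the product of the discriminants of $q_1$ and $q_2$, so the condition ``$q_2$ degenerate or $q_2 \propto g$'' becomes exactly the equality \eqref{prop:kappaSingleDouble:II:cond}; ruling out the degenerate (parabolic/non-Lorentzian) sub-case uses assumption (c)-type reasoning or the observation that such $\kappa$ would violate the double-cone classification of Theorem \ref{thm:factorMedium}. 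The main obstacle is purely the bookkeeping in the Fresnel-polynomial expansion — keeping track of signs and the density weight — and this is where computer algebra is the natural tool, as the paper notes for its other computations; the conceptual content, once the factorisation is in hand, reduces to elementary facts about pencils of quadratic forms and the relation $N_p(g)\subset N_p(h)\Rightarrow g = Ch$ cited after Definition \ref{def:bire}.
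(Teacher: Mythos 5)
Your outline for part \ref{prop:kappaSingleDouble:II} is essentially the paper's route: one computes the factorisation \eqref{eq:kappaFresnelDoubleExp}, $\cG^{ijkl}\xi_i\xi_j\xi_k\xi_l=-C_1^2\,(g^{ij}\xi_i\xi_j)(H^{ij}\xi_i\xi_j)$ with $H^{ij}=C_1g^{ij}-2\rho A^{ia}g_{ab}A^{bj}$, relates $\det H$ to $\det\kappa$ as in \eqref{eq:lskdjlad09as}, and rules out $H\propto g$ by inspecting the diagonal equations. However, your plan contains a genuine gap, concentrated in part \ref{prop:kappaSingleDouble:I} and in your identification of the exceptional locus. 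Equality in \eqref{prop:kappaSingleDouble:II:cond} is, by \eqref{eq:lskdjlad09as}, exactly the condition $\det H=0$, i.e.\ the second quadratic factor is \emph{degenerate}; it is not the locus where the second factor is proportional to $g$ (that proportionality in fact never occurs for $\rho\neq0$, $A\neq0$, and must be excluded by a separate argument). Consequently your proposed proof of the ``only if'' half of part \ref{prop:kappaSingleDouble:I} --- ``the second factor is genuinely distinct from $g(\xi,\xi)$ unless the exceptional locus occurs'' --- does not deliver the claim. When $\rho\neq0$, $A\neq0$ and equality holds in \eqref{prop:kappaSingleDouble:II:cond}, the factor $H$ is degenerate, and being ``distinct from $g$ as a quadratic form'' does not prevent its zero set from being contained in $N_p(g)$ (a degenerate quadric can reduce to a point, a line, or a plane pair, and could in principle sit inside the cone), in which case $F_p(\kappa)=N_p(g)$ would be a single light cone. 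Part \ref{prop:kappaSingleDouble:I} asserts precisely that this cannot happen, and that assertion is not a consequence of the factorisation alone. The paper closes this with a different input: non-birefringence together with Theorem \ref{thm:mainResult} forces the closure condition $(\kappa-\tfrac16\operatorname{trace}\kappa\,\operatorname{Id})^2=-\lambda\operatorname{Id}$ with $\lambda>0$, and a Gr\"obner-basis elimination in the components of $A$, $\rho$ then yields $A=0$ or $\rho=0$. Your proposal never invokes the closure condition, so this direction is missing. Note also that part \ref{prop:kappaSingleDouble:II} quietly relies on part \ref{prop:kappaSingleDouble:I} to conclude $A\neq0$, $\rho\neq0$ from the existence of a double cone, so the gap propagates.

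A secondary inaccuracy: you expect the pure-$\ast_g$ and mixed $\ast_g/\operatorname{Id}$ terms of the Fresnel polynomial to reassemble into $(C_1^2+C_2^2)^2\,g(\xi,\xi)^2$. This cannot be right: $\cG_0$ is cubic in $\kappa$ and the axion part $C_2\operatorname{Id}$ does not contribute to the Fresnel polynomial at all, so no factor $(C_1^2+C_2^2)^2$ can appear there (the metric term is $-C_1^3\,(g(\xi,\xi))^2$, consistent with \eqref{eq:kappaFresnelDoubleExp}); the quantity $(C_1^2+C_2^2)^2$ enters through $\det\kappa$, as in \eqref{eq:kappaDetDoubleCone}. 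This is minor since you defer the computation to computer algebra, but it signals that the role of the axion term in your bookkeeping needs correcting.
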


\begin{proof}
We restrict the analysis to a point $p\in N$, and let $\{x^i\}_{i=0}^3$ be coordinates around 
$p$ such that 
the Lorentz metric has components $g=\pm \operatorname{diag}(-1,1,1,1)$ at $p$.  For claim
\ref{prop:kappaSingleDouble:I}, let us note that the axion component of $\kappa$ does not influence
the Fresnel polynomial. See for example \cite{Obu:2003}. Thus $\kappa$ is non-birefringent when
$A=0$ or $\rho=0$. For the converse direction, suppose $\kappa$ is non-birefringent.  Then Theorem
\ref{thm:mainResult} implies that $(\kappa-\frac 1 6 \operatorname{trace} \kappa\,
\operatorname{Id})^2 = -\lambda \operatorname{Id}$ for some $\lambda >0$. Writing out the last
equation and solving the associated Gr\"obner basis equations (see \cite{CoxLittleOShea:2007,
  Dahl:2011:Closure}) shows that $A=0$ or $\rho = 0$.
For claim \ref{prop:kappaSingleDouble:II}, let us write
$A=\frac 1 2 A^{ij} \pd{}{x^i}\wedge \pd{}{x^j}$. 
Then the Fresnel polynomial at $p$ is given by
\begin{eqnarray}
\label{eq:kappaFresnelDoubleExp}
 \cG^{ijkl}\xi_i \xi_j \xi_k \xi_l &=& - C_1^2 \left(g^{ij} \xi_i \xi_j\right) \, \left(H^{ij} \xi_i \xi_j\right),
\end{eqnarray}
where $g^{ij}=(g^{-1})_{ij}$ and $H^{ij} = C_1 g^{ij} - 2\rho A^{ia} g_{ab} A^{bj}$ (see 
\cite{LindellWallen:2002, LindellBergaminFavaro:Decomp}). Moreover, 
\begin{eqnarray}
\label{eq:kappaDetDoubleCone}
 \det \kappa &=& \left( C_1^2+ C_2^2\right)^2\, 
\left( 
C_1^2+ C_2^2 + E + C_2 \operatorname{trace}(\rho\,\overline A\otimes A)
\right),
\end{eqnarray}
where $E\in \setR$ is an expression that depends on $\rho, C_1$ and $A$.  We will not need the
explicit expression for $E$. However, by computer algebra we see that the same $E$ also appears
in $\det H$ for matrix $H=(H^{ij})_{ij}$. 
Then
equation \eqref{eq:kappaDetDoubleCone} yields
\begin{eqnarray}
 \det H &=& \nonumber - \left( 
C_1^2 + E - \frac 1 4 \left( \operatorname{trace}(\rho\,\overline A\otimes A)\right)^2
\right)^2 \\
&=& \label{eq:lskdjlad09as}
-\left(
\frac{\det \kappa}{ (C_1^2 + C_2^2)^2} - 
\left( C_2 + \frac 1 2\operatorname{trace}(\rho\,\overline A\otimes A)\right)^2
\right)^2.
\end{eqnarray}
If $\kappa$ has a double light cone, claim \ref{prop:kappaSingleDouble:I} implies that $A\neq 0$ and
$\rho \neq 0$. Moreover, by Proposition 1.5 in \cite{Dahl:2011:DoubleCone} and since polynomials
have a unique factorisation into irreducible factors \cite [Theorem 5 in Section
3.5]{CoxLittleOShea:2007}, we have $\det H<0$ and equation \eqref{eq:lskdjlad09as} implies
inequality \eqref{prop:kappaSingleDouble:II:cond} for $\det \kappa$.
Conversely, if the inequalities in claim \ref{prop:kappaSingleDouble:II} are satisfied, then
equation \eqref{eq:lskdjlad09as} shows that $\det H<0$, so $g$ and $H$ both have Lorentz signature
at $p$. To complete the proof we need to show that there is no constant $C\in \setR\slaz$ such that
$g^{ij} = C H^{ij}$. Since $A\neq 0$ and $\rho\neq 0$, this follows by inspecting equations $g^{ii}
= C H^{ii}$ for $i=0,\ldots, 3$.
\end{proof}

\section{Decomposable media}
\label{eq:twoChar}
In this section we first describe the class of decomposable medium introduced in
\cite{LindellBergaminFavaro:Decomp}. In particular, in Theorem \ref{eq:decompRel} we describe the
sufficient conditions derived in \cite{LindellBergaminFavaro:Decomp} that imply that a medium is
decomposable.
In Theorem \ref{thm:MainDecomp} these conditions will play a key role. 
In Section \ref{sec:44Factorisability} we will describe some results that suggest that condition
\ref{eq:decompRel:I} in Theorem \ref{eq:decompRel} is a general factorisability condition
for the Fresnel polynomial.
Following
\cite{LindellBergaminFavaro:Decomp} we restrict the analysis to $\setR^4$ so that we can work with
plane waves.
 
\subsection{Plane waves in $\setR^4$}
\label{sec:PlaneWavesInR4}
We say
that a tensor $T$ on $\setR^4$ is \emph{constant} if there are global coordinates for $\setR^4$
where components for $T$ are constant.  If we assume that many tensors are constant, we assume that
they are constant with respect to the same choice of coordinates.  Below we also use notation
$\Omega^k(N, \setC)$ to denote the space of $k$-forms on a manifold $N$ with possibly complex
coefficients.

Suppose $\kappa\in \medTensor(\setR^4)$ is constant and $F,G\in \Omega^2(\setR^4)$ are defined as
\begin{eqnarray}
\label{eq:FGplanewave}
 F = \operatorname{Re}\{ e^{i\Phi} X\}, \quad
 G = \operatorname{Re}\{ e^{i\Phi} Y\}, 
\end{eqnarray}
where $\Phi$ is a function $\Phi\colon \setR^4\to \setR$ such that $d\Phi$ is constant and non-zero,
$X,Y\in \Omega^2(\setR^4,\setC)$ are constant and not both zero.  If $F$ and $G$ solve the
sourceless Maxwell's equations we say that $F$ and $G$ is a \emph{plane wave}.


\begin{proposition} 
\label{prop:HodgeAxionPW:polarisations}
Suppose $\kappa\in \medTensor(\setR^4)$ is constant and
$\Phi$ is a function
$\Phi\colon \setR^4\to \setR$ such that $d\Phi$ is constant and non-zero.
Moreover, suppose $X,Y$ are constant $2$-forms $X,Y\in \Omega^2(\setR^4,\setC)$.
If $F$ and $G$ are defined by equations \eqref{eq:FGplanewave},
then the following conditions are equivalent:
\begin{enumerate}
\item \label{prop:HodgeAxionPW:polarisations:i}
$F$ and $G$ is a plane wave.
\item \label{prop:HodgeAxionPW:polarisations:ii}
$d\Phi\in F(\kappa)$ and there exists a constant
$\alpha \in \Omega^1(\setR^4,\setC)$ such that $d\Phi\wedge\alpha\neq 0$, $d\Phi\wedge\kappa(d\Phi\wedge\alpha)=0$ and
\begin{eqnarray}
X &=& \label{prop:decompEx:proofX}
d\Phi \wedge \alpha, \\ 
Y &=& \label{prop:decompEx:proofY} 
\kappa(d\Phi\wedge \alpha).
\end{eqnarray} 
\end{enumerate}
\end{proposition}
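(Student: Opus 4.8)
The plan is to unwind the definitions and use the pre-metric Maxwell's equations directly. Write $\xi = d\Phi$, which is a constant nonzero $1$-form. Since $X,Y$ are constant and $d\Phi$ is constant, the exterior derivatives of $F = \operatorname{Re}\{e^{i\Phi}X\}$ and $G = \operatorname{Re}\{e^{i\Phi}Y\}$ are computed by $dF = \operatorname{Re}\{i e^{i\Phi}\,\xi\wedge X\}$ and $dG = \operatorname{Re}\{i e^{i\Phi}\,\xi\wedge Y\}$. Because $e^{i\Phi}$ and $ie^{i\Phi}$ are pointwise linearly independent over $\setR$ as functions into $\setC$ (equivalently, $\Phi$ is real-valued and $d\Phi\neq 0$ so $\Phi$ is non-constant), the real-part expressions vanish identically if and only if the complex coefficients vanish. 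Hence the sourceless equations $dF=0$, $dG=0$ are equivalent to $\xi\wedge X = 0$ and $\xi\wedge Y = 0$. Likewise the constitutive law $G=\kappa(F)$ together with constancy and the same linear-independence argument is equivalent to $Y = \kappa(X)$.

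Next I would translate the two conditions $\xi\wedge X=0$ and $Y=\kappa(X)$ into the form appearing in \ref{prop:HodgeAxionPW:polarisations:ii}. For a fixed nonzero covector $\xi$ on a $4$-dimensional space, a $2$-form $X$ satisfies $\xi\wedge X = 0$ if and only if $X = \xi\wedge\alpha$ for some (constant, complex) $1$-form $\alpha$; this is the standard exactness statement for the Koszul/wedge complex of a nonzero vector, and it is where the assumption that we work on $\setR^4$ with constant objects is used so that $\alpha$ may be chosen constant. So $dF=0$ gives exactly $X=\xi\wedge\alpha$ with $\xi\wedge\alpha \ne 0$ when $X\ne0$ (if $X=0$ one checks the statement degenerates consistently — this edge case should be noted but handled by allowing $\alpha$ with $\xi\wedge\alpha=0$, and I would argue that the "not both zero" hypothesis forces the genuinely nonzero branch, so that $d\Phi\wedge\alpha\neq0$ as claimed). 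Then $Y = \kappa(X) = \kappa(\xi\wedge\alpha)$, which is \eqref{prop:decompEx:proofY}, and the remaining equation $\xi\wedge Y = 0$ becomes $\xi\wedge\kappa(\xi\wedge\alpha) = 0$, i.e. $d\Phi\wedge\kappa(d\Phi\wedge\alpha)=0$. Finally, $\xi\in F(\kappa)$: this must follow from the existence of a nonzero such $\alpha$. I would invoke the standard fact (from the derivation of the Fresnel surface, e.g. \cite{ObuFukRub:00, Obu:2003}) that the Fresnel polynomial $\cG^{ijkl}\xi_i\xi_j\xi_k\xi_l$ is, up to normalisation, the determinant of the linear map $\alpha\mapsto \xi\wedge\kappa(\xi\wedge\alpha)$ acting on the $3$-dimensional quotient space $\{\alpha\}/\langle\xi\rangle$ (the "Fresnel/Tamm-Rubilar" construction); hence a nontrivial solution $\alpha$ with $\xi\wedge\alpha\ne0$ exists precisely when this determinant vanishes, which is $\xi\in F(\kappa)$. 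Conversely, given the data in \ref{prop:HodgeAxionPW:polarisations:ii}, setting $X,Y$ by \eqref{prop:decompEx:proofX}--\eqref{prop:decompEx:proofY} and reversing the above steps shows $F,G$ solve sourceless Maxwell, so they form a plane wave.

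The main obstacle I anticipate is the bookkeeping in the direction \ref{prop:HodgeAxionPW:polarisations:ii}$\Rightarrow$\ref{prop:HodgeAxionPW:polarisations:i} only insofar as one must be careful that $d\Phi\in F(\kappa)$ is genuinely needed — it is not automatic from $\xi\wedge\kappa(\xi\wedge\alpha)=0$ unless one already knows $\xi\wedge\alpha\ne0$, which is part of the hypothesis, so the implication is in fact clean. The slightly delicate point is the clean identification of the Fresnel polynomial with the determinant of $\alpha\mapsto\xi\wedge\kappa(\xi\wedge\alpha)$ on the quotient by $\langle\xi\rangle$; rather than re-deriving this I would cite it as the defining property of $\cG$ from \cite{ObuFukRub:00, Rubilar2002, Obu:2003}, and note that with this identification both the existence of the required $\alpha$ and the condition $\xi\in F(\kappa)$ are the same statement. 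The only other thing to watch is the real/complex linear-independence step, which is routine but should be stated explicitly since $F,G,\Phi$ are real while $X,Y,\alpha$ are complex.
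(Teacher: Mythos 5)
Your proposal is correct and follows essentially the same route as the paper's proof: reduce the plane-wave condition to $\xi\wedge X=0$, $\xi\wedge Y=0$, $Y=\kappa(X)$ with $\xi=d\Phi$, write $X=\xi\wedge\alpha$, note $\xi\wedge\alpha\neq 0$ because $X,Y$ are not both zero, deduce $\xi\wedge\kappa(\xi\wedge\alpha)=0$, cite the standard characterisation of the Fresnel surface to get $\xi\in F(\kappa)$, and verify the converse directly. The additional detail you supply (the real/complex independence step and the determinant description of the Fresnel polynomial on the quotient by $\langle\xi\rangle$) only makes explicit what the paper leaves implicit or delegates to the cited references.
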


\begin{proof} 
Let $\xi=d\Phi$.  
If $F$ and $G$ is a plane wave then $\xi\neq 0$ implies that
\begin{eqnarray}
\label{eq:dxiXYkappa}
\xi \wedge X = 0, \quad 
\xi \wedge Y = 0, \quad 
Y=\kappa(X).
\end{eqnarray}
The first equation in equation \eqref{eq:dxiXYkappa} implies that there exists a
constant $1$-form $\alpha\in \Omega^1(\setR^4, \setC)$ such that $X = \xi\wedge \alpha$.  It is
clear that $\alpha$ and $\xi\wedge \alpha$ are both non-zero, since otherwise $X=Y=0$.  Combining
the latter two equations in equation \eqref{eq:dxiXYkappa} 
implies that
\begin{eqnarray}
\label{eq:xiVkappaxiAlpha=0}
\xi\wedge\kappa(\xi\wedge\alpha)&=&0.
\end{eqnarray} 
Since this linear equation for $\alpha$ has a non-zero solution, it follows that 
$\xi\in F(\kappa)$. See for example, 
\cite{ObuFukRub:00, Rubilar2002, Obu:2003,  Dahl:2011:Closure}.
This completes the proof of implication 
\ref{prop:HodgeAxionPW:polarisations:i} $\Rightarrow$
\ref{prop:HodgeAxionPW:polarisations:ii}.
For the converse implication it suffices to verify that equations
\eqref{eq:FGplanewave}--\eqref{prop:decompEx:proofY} define a solution to Maxwell's equations.
\end{proof}

\subsection{Decomposable medium}
\label{sec:decompo}
The next definition and theorem  are from \cite{LindellBergaminFavaro:Decomp}.
It is not known if the converse of  Theorem \ref{eq:decompRel} is also true
\cite{LindellBergaminFavaro:Decomp}.


%

\begin{definition}
\label{def:decomposables}
Suppose $\kappa \in \medTensor(\setR^4)$ is constant. Then we say that $\kappa$ is
\emph{decomposable} if there exist non-zero and constant $A,B\in \Omega_2(\setR^4)$ such that
if $F,G$ is a plane wave solution to Maxwell's equations, then
\begin{eqnarray}
\label{eq:FAFB0}
 F(A) = 0 \quad\mbox{or}\quad    F(B)=0.
\end{eqnarray}
\end{definition}

%

\begin{theorem}
\label{eq:decompRel}
Suppose $\kappa\in \medTensor(\setR^4)$ is constant. Furthermore, suppose
\begin{enumerate}
\item \label{eq:decompRel:I}
there exists constant tensors
$A,B\in \Omega_2(\setR^4)$ and a constant scalar density $\rho$ of weight $1$ such that 
\begin{eqnarray}
\label{eq:eq44}
\alpha \operatorname{Id} + \beta\left( \kappa+ \overline{\kappa} \right)+ \gamma \overline{\kappa}\circ \kappa &=& 
\rho \left(\overline A \otimes B +\overline B\otimes A\right)
\end{eqnarray}
for constants $\alpha, \beta, \gamma\in \setR$ and $\beta,\gamma$ are not both zero.
\item \label{eq:decompRel:II}
the right hand side in equation \eqref{eq:eq44} is non-zero.
\end{enumerate}
Then $\kappa$ is decomposable (and condition \eqref{eq:FAFB0} holds for the same $A$ and $B$
as in condition \eqref{eq:eq44}).
\end{theorem}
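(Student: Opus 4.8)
The plan is to test the identity \eqref{eq:eq44} against the wedge pairing of $2$-forms, evaluated on the polarisation $2$-form of an arbitrary plane wave: I will show that wedging the left-hand side of \eqref{eq:eq44}, applied to such a polarisation $X$, against $X$ itself kills every term, while the right-hand side survives as a nonzero multiple of $\rho\, X(A)\, X(B)$ times a volume form. Forcing the two to agree then gives $X(A)=0$ or $X(B)=0$, which is exactly what decomposability (Definition \ref{def:decomposables}) asks for.

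First I would unwind the definition. Let $F,G$ be any plane wave. By the definition in Section \ref{sec:PlaneWavesInR4} we have $F=\operatorname{Re}\{e^{i\Phi}X\}$ with $X,Y$ constant and, since $G=\kappa(F)$, not both zero, so $X\neq 0$. Proposition \ref{prop:HodgeAxionPW:polarisations} then furnishes a constant $1$-form $\eta$ with $\xi:=d\Phi$, $X=\xi\wedge\eta\neq 0$, $Y=\kappa(X)$ and $\xi\wedge\kappa(X)=0$. Since $F=\operatorname{Re}\{e^{i\Phi}X\}$ and $A,B$ are real bivectors, $F(A)=\operatorname{Re}\{e^{i\Phi}X(A)\}$ vanishes identically as soon as $X(A)=0$, so it is enough to prove $X(A)=0$ or $X(B)=0$, where $X(A)$ is the natural pairing of the $2$-form $X$ with the bivector $A$.

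The key observation is that both $X=\xi\wedge\eta$ and $Y=\kappa(X)$ are simple $2$-forms divisible by $\xi$: for $Y$ this follows from $\xi\wedge Y=0$ with $\xi\neq 0$, exactly the linear-algebra fact used inside the proof of Proposition \ref{prop:HodgeAxionPW:polarisations}. Hence $X\wedge X=X\wedge Y=Y\wedge Y=0$. Combining this with the adjointness identity \eqref{eq:diamondTranspose} and the commutativity of the wedge of $2$-forms, I would check
\begin{align*}
X\wedge\operatorname{Id}(X)=X\wedge X=0,&\qquad X\wedge\kappa(X)=X\wedge Y=0,\\
X\wedge\overline{\kappa}(X)=\kappa(X)\wedge X=0,&\qquad X\wedge(\overline{\kappa}\circ\kappa)(X)=\kappa(X)\wedge\kappa(X)=0,
\end{align*}
so that wedging the whole left-hand side of \eqref{eq:eq44} (applied to $X$) with $X$ yields $0$. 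For the right-hand side I would use the coordinate formula \eqref{eq:rhoABBA} for $\rho\,\overline{A}\otimes B$ together with the Levi-Civita contraction identities \eqref{eq:epsIdentities}; a short computation in the global coordinates of $\setR^4$ should give
\begin{eqnarray*}
X\wedge\big(\rho(\overline{A}\otimes B+\overline{B}\otimes A)\big)(X) &=& c\,\rho\,X(A)\,X(B)\; dx^0\wedge dx^1\wedge dx^2\wedge dx^3
\end{eqnarray*}
for a nonzero numerical constant $c$. Equating the two sides forces $\rho\,X(A)\,X(B)=0$; since condition \ref{eq:decompRel:II} guarantees $\rho\neq 0$ (and also $A\neq 0$, $B\neq 0$, since otherwise the right-hand side of \eqref{eq:eq44} would vanish), this gives $X(A)=0$ or $X(B)=0$, hence $F(A)=0$ or $F(B)=0$. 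As $F,G$ was an arbitrary plane wave and $A,B$ are fixed nonzero constant bivectors, $\kappa$ is decomposable in the sense of Definition \ref{def:decomposables}, witnessed by the same $A$ and $B$.

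The only genuinely non-routine point is the vanishing displayed above: the reason \eqref{eq:eq44} collapses to such a clean scalar statement is that $X$ and $\kappa(X)$ are both simple and share the factor $d\Phi$, so every term built from $\operatorname{Id}$, $\kappa$, $\overline{\kappa}$ dies once one moves $\overline{\kappa}$ across the wedge via \eqref{eq:diamondTranspose}. After that the argument is bookkeeping with the permutation symbols, and all the non-degeneracy input is supplied by condition \ref{eq:decompRel:II}.
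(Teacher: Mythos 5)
Your proposal is correct and takes essentially the same route as the paper's proof: both wedge equation \eqref{eq:eq44} applied to the polarisation $X$ against $X$ itself, use Proposition \ref{prop:HodgeAxionPW:polarisations} together with the adjointness relation \eqref{eq:diamondTranspose} to annihilate every term on the left, and reduce the right-hand side to a multiple of $\rho\, X(A)\, X(B)$ times the coordinate volume form, whence $F(A)=0$ or $F(B)=0$ for every plane wave.
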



Before the proof, let us note that by Lemma \ref{lemma:ABBA=0}, the right hand side in equation
\eqref{eq:eq44} is non-zero if and only if $A,B$ and $\rho$ are all non-zero.

\begin{proof}
(Following \cite{LindellBergaminFavaro:Decomp}.)
Suppose condition \eqref{eq:eq44} holds for some $\alpha, \beta, \gamma,\rho,A,B$.
%
Moreover, suppose $F,G$ is an arbitrary plane wave for $\kappa$ as in equation
\eqref{eq:FGplanewave}.  To prove the claim we need to show that condition \eqref{eq:FAFB0} holds.
Proposition \ref{prop:HodgeAxionPW:polarisations} implies that $Y=\kappa(X)$ and 
\begin{eqnarray*}
 X\wedge X = 0, \quad
 X\wedge Y = 0, \quad
 Y\wedge X = 0, \quad
Y\wedge Y = 0,
\end{eqnarray*}
whence equation \eqref{eq:diamondTranspose} implies that  
\begin{eqnarray}
\label{eq:iou31io}
0&=& X\wedge \left( \alpha \operatorname{Id} + \beta (\kappa + \overline{\kappa})+ \gamma \overline{\kappa} \circ \kappa\right)(X).
\end{eqnarray} 
Let $\{x^i\}_{i=0}^3$ be coordinates for $\setR^4$ where all the aforementioned tensors are constant. Then
\begin{eqnarray*}
0&=& X\wedge 
\rho \left(\overline A \otimes B + \overline B\otimes A \right)
(X) \\
&=&X(A) X(B) \,\, \rho dx^0\wedge dx^1\wedge dx^2\wedge dx^3.
\end{eqnarray*}
Here, the first equality follows by condition \eqref{eq:eq44} and \eqref{eq:iou31io}, and the latter
equality follows by a computation in coordinates.  Since $A$ and $B$ are real, it follows that
$F(A)=0$ or $F(B)=0$.
\end{proof}

In Theorem \ref{thm:MainDecomp} we will see that all the medium tensors in Theorem
\ref{thm:factorMedium} are decomposable. In particular, uniaxial medium is decomposable.
The next proposition shows that isotropic medium determined by
a Hodge star operator is never decomposable.

\begin{proposition} 
\label{prop:decompEx}
Suppose $\kappa\in \medTensor(\setR^4)$ is defined as
\begin{eqnarray*}
\kappa &=& C_1 \ast_g \,\,+\,\, C_2 \operatorname{Id},
\end{eqnarray*}
where $C_1\in \setR\slaz$, $C_2\in \setR$ and $g$ is a constant indefinite pseudo-Riemann metric on $\setR^4$.
Then $\kappa$ is not decomposable. 
\end{proposition}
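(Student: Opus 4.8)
The plan is a proof by contradiction. Assume $\kappa=C_1\ast_g+C_2\operatorname{Id}$ is decomposable, so there are non-zero constant bivectors $A,B\in\Omega_2(\setR^4)$ such that every plane wave $F$ satisfies $F(A)=0$ or $F(B)=0$ (Definition \ref{def:decomposables}). Work in global coordinates $\{x^i\}_{i=0}^3$ with $g=\operatorname{diag}(\epsilon_0,\ldots,\epsilon_3)$, $\epsilon_i\in\{\pm1\}$ not all equal; indefiniteness is used here and, crucially, at the very end.

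First I would catalogue the plane waves. By (the proof of) Proposition \ref{prop:HodgeAxionPW:polarisations}, for any constant covectors $\xi\neq0$ and $\alpha$ with $\xi\wedge\alpha\neq0$ and $\xi\wedge\kappa(\xi\wedge\alpha)=0$, the $2$-form $X=\xi\wedge\alpha$ arises as the amplitude of a genuine plane wave $F=\operatorname{Re}(e^{i\xi_jx^j}X)$, and automatically $\xi\in F(\kappa)$. For $\kappa=C_1\ast_g+C_2\operatorname{Id}$ we have $\xi\wedge\kappa(\xi\wedge\alpha)=C_1\,\xi\wedge\ast_g(\xi\wedge\alpha)$ since $\xi\wedge\xi\wedge\alpha=0$; using the standard identity $\xi\wedge\ast_g\omega=c\,\ast_g(\iota_{\xi^\sharp}\omega)$ for $2$-forms $\omega$ (with $c\neq0$) together with $\iota_{\xi^\sharp}(\xi\wedge\alpha)=g(\xi,\xi)\,\alpha-g(\xi,\alpha)\,\xi$, this becomes $g(\xi,\xi)\,\alpha=g(\xi,\alpha)\,\xi$. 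Restricting to $g$-null $\xi$, the polarisation condition reduces simply to $g(\xi,\alpha)=0$. Hence: for every $g$-null covector $\xi$ and every $\alpha$ in the $3$-dimensional space $\xi^{\perp_g}=\{\alpha:g(\xi,\alpha)=0\}$ with $\xi\wedge\alpha\neq0$, the $2$-form $\xi\wedge\alpha$ is the amplitude of a plane wave for $\kappa$.

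Next I would convert the decomposability dichotomy into an eigenvalue statement. Fix a $g$-null $\xi$. As $\alpha$ ranges over $\xi^{\perp_g}$ the forms $\xi\wedge\alpha$ fill a $2$-dimensional real subspace $V_\xi\subset\Omega^2(\setR^4)$ (the map $\alpha\mapsto\xi\wedge\alpha$ has kernel $\setR\xi\subset\xi^{\perp_g}$). Every non-zero element of $V_\xi$ comes from a plane wave, so decomposability forces $V_\xi\subseteq\{X:X(A)=0\}\cup\{X:X(B)=0\}$; since a real vector space is never the union of two proper subspaces, either $V_\xi\subseteq\{X(A)=0\}$ or $V_\xi\subseteq\{X(B)=0\}$. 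A short computation in coordinates gives $(\xi\wedge\alpha)(A)=c'\,\alpha_j\,\xi_i A^{ij}$ with $c'\neq0$, so $V_\xi\subseteq\{X(A)=0\}$ iff the vector $\iota_\xi A$ (components $\xi_iA^{ij}$) annihilates $\xi^{\perp_g}$, i.e. iff $\iota_\xi A\in\setR\,\xi^\sharp$. Lowering both indices of $A$ with $g$ to get a $2$-form $\hat A$ and setting $S_A:=g^{-1}\circ\hat A$ (a $g$-skew-adjoint endomorphism of $\setR^4$, non-zero because $A\neq0$), this says exactly that $\xi^\sharp$ is a real eigenvector of $S_A$. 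Thus decomposability implies that every $g$-null vector in $\setR^4$ is a real eigenvector of $S_A$ or of $S_B$.

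Finally comes the dimension count, which I expect to be routine. The real eigenvectors of a non-zero $g$-skew-adjoint endomorphism $S$ of $\setR^4$ lie in the union of its finitely many real eigenspaces $\ker(S-\mu\operatorname{Id})$, and each of these has dimension at most $2$: if $\dim\ker(S-\mu\operatorname{Id})\ge3$, then $g$-skewness gives $2\mu\,g(u,v)=0$ for all $u,v$ in that $\ge3$-dimensional subspace, whence $\mu=0$ (as $g$ is non-degenerate and no $3$-dimensional subspace of $(\setR^4,g)$ is totally isotropic), so $S$ has rank $\le1$, contradicting that a skew form has even rank. Consequently the real eigenvectors of $S_A$ and of $S_B$ together lie in a finite union of linear subspaces of dimension $\le2$. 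But the $g$-null cone $\{v\in\setR^4:g(v,v)=0\}$, $g$ being indefinite, is a cone of dimension $3$ (its smooth part is a $3$-manifold), so it is not contained in any finite union of subspaces of dimension $\le2$. This contradicts the previous paragraph, proving that $\kappa$ is not decomposable. The only delicate step is the middle one --- carrying out the index computations so that ``decomposable'' translates precisely into ``every $g$-null vector is an eigenvector of $S_A$ or $S_B$'' --- after which the obstruction is elementary linear algebra.
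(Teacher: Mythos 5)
Your proof is correct, but it follows a genuinely different route from the paper. The paper's own argument sets up the same starting point (via Proposition \ref{prop:HodgeAxionPW:polarisations}, decomposability forces $(\xi\wedge\alpha)(A)\,(\xi\wedge\alpha)(B)=0$ for all admissible $\xi,\alpha$), but then discretises the null cone into $19$ points with coordinates in $\{0,1,\sqrt2,\sqrt3\}$, obtains $38$ quadratic equations in the components of $A$ and $B$, and concludes $A=0$ or $B=0$ by a Gr\"obner basis computation in Mathematica, repeating the argument separately for signature $(--++)$. You instead stay conceptual: for this medium the polarisation condition for null $\xi$ reduces to $g(\xi,\alpha)=0$, the admissible amplitudes for fixed $\xi$ form a $2$-dimensional space $V_\xi$, the ``union of two subspaces'' trick upgrades the pointwise dichotomy to $V_\xi\subseteq\{X(A)=0\}$ or $V_\xi\subseteq\{X(B)=0\}$, and this translates into the statement that every null vector is a real eigenvector of the $g$-skew-adjoint maps $S_A$ or $S_B$; since real eigenspaces of such maps have dimension at most $2$ (nonzero eigenvalues give totally isotropic eigenspaces, and the kernel is small because a nonzero $2$-form has even rank), the $3$-dimensional null cone cannot be covered, which is the contradiction. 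What your approach buys: no computer algebra, a uniform treatment of all indefinite signatures, and some structural insight into why decomposability fails here; what the paper's approach buys is brevity given that the Gr\"obner-basis machinery is already used throughout, at the cost of opacity and a case split over signatures. The only steps you should spell out in a final write-up are routine: the identity $\xi\wedge\ast_g\omega=c\,\ast_g(\iota_{\xi^\sharp}\omega)$ with $c\neq 0$ (valid in any signature, so that the polarisation condition really is $g(\xi,\xi)\alpha=g(\xi,\alpha)\xi$), the fact that null covectors do lie in $F(\kappa)$ for this $\kappa$ (either by the standard fact already invoked in the proof of Proposition \ref{prop:HodgeAxionPW:polarisations}, or since the Fresnel polynomial here is proportional to $(g(\xi,\xi))^2$), and a Baire or tangent-space argument for the final claim that a finite union of planes cannot contain the smooth $3$-dimensional null cone.
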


\begin{proof} 
 Let us first assume that $g$ is a Lorentz metric and let $\{x^i\}_{i=0}^3$ be coordinates such
that $g= k\operatorname{diag}(-1,1,1,1)$ for some $k\in \{-1,1\}$. At $0\in \setR^4$, it follows that
\begin{eqnarray*}
  F_0(\kappa) &=& \{ \xi \in T^\ast_0(\setR^4) : -\xi_0^2 +\xi_1^2 +\xi_2^2 +\xi_3^2 = 0\}.
\end{eqnarray*}
 For a contradiction, suppose $\kappa$ is decomposable. By Proposition
 \ref{prop:HodgeAxionPW:polarisations} there exists a non-zero and constant $A,B\in
\Omega_2(\setR^4)$ such that
\begin{eqnarray} 
\label{eq:AxiValphaYY}
(\xi \wedge \alpha)(A) \,\,   (\xi\wedge\alpha)(B) &=& 0
\end{eqnarray} 
for all $\xi, \alpha \in T^\ast_0(\setR^4)$ that satisfy $\xi\in F_0(\kappa)$ and 
\begin{eqnarray} 
\label{eq:gxixixialphapropTT}
\xi\wedge \alpha \neq 0,\quad \xi\wedge\kappa(\xi\wedge\alpha ) =0.
\end{eqnarray}
%
Let $G$ is the subset $G\subset F_0(\kappa)\slaz$ for which each coordinate belongs to $\{0,1,
\sqrt 2, \sqrt 3\}$. That is, one can think of $G$ as a discretisation of $F_0(\kappa)$ in one
quadrant of $T_0^\ast(\setR^4)$. In total there are $19$ such points, and for each $\xi\in G$, we
can find two linearly independent $\alpha\in T^\ast_0(\setR^4)$ such conditions
\eqref{eq:gxixixialphapropTT} holds, \emph{cf.} \cite{Dahl:2011:Closure}.
Insisting that equation \eqref{eq:AxiValphaYY} holds for all such $\xi$ and $\alpha$ gives $19\times
2=38$ second order polynomial equations for variables in $A$ and $B$. Computing a Gr\"obner basis for these
equations and solving implies that either $A=0$ or
$B=0$. See \cite{CoxLittleOShea:2007}. 
Hence $\kappa$ is not decomposable.
When $g$ has signature $(--++)$ the claim follows by repeating the above argument.
\end{proof}

\subsection{Factorisability of the Fresnel polynomial}
\label{sec:44Factorisability}
In what follows condition \ref {eq:decompRel:I} in Theorem \ref {eq:decompRel} will play a key
role. Let us therefore introduce the following definition.

\begin{definition}
\label{eq:newDecompDef}
If $\kappa\in \medTensor(\setR^4)$ is constant and satisfies condition  \ref{eq:decompRel:I} 
in Theorem \ref{eq:decompRel}, then we say that $\kappa$ is 
\emph{\newdecomp{}}.
\end{definition}

In \cite{LindellBergaminFavaro:Decomp}, I.~Lindell, L.~Bergamin and A.~Favaro showed that if
$\kappa$ is \newdecomp{} (plus some additional assumptions), then the Fresnel polynomial of $\kappa$
always factorises into the product of two quadratic forms.  In this section we summarise this result
in Theorem \ref{thm:LBFFactorTheorem}. Moreover, we will see that 
for  \newdecomp{} medium, the Fresnel polynomial seems to
factorise even when the additional assumptions in Theorem \ref{thm:LBFFactorTheorem} are not
satisfied.  
These results suggest (but do not prove) that the definition of \newdecomp{} medium might be a sufficient
condition for the Fresnel polynomial to factorise. 
%

Let us first note that the class of \newdecomp{} media contains a number medium classes as special
cases.  If $\kappa$ is purely skewon, then $\kappa+\overline{\kappa}=0$ and $\kappa$ is
\newdecomp. Also, if $\kappa$ satisfies the \emph{mixed closure condition} $\overline{\kappa}\circ
\kappa=\lambda \operatorname{Id}$ 
\cite{LindellBergaminFavaro:Decomp, Favaro:2011}, then $\kappa$ is
\newdecomp{}.  
If $\kappa$ has no skewon part, then $\overline{\kappa} = \kappa$ and the definition of \newdecomp{}
medium simplifies. Thus, if $\kappa$ has no skewon part and if $\kappa$ is a \emph{self-dual medium}
(so that $\alpha \operatorname{Id} + \beta \kappa + \gamma \kappa^2=0$) \cite{Lindell:SD:2008}, then
$\kappa$ is  \newdecomp.  In particular, skewon-free medium that satisfies the \emph{closure
 condition} $\kappa^2 = \lambda \operatorname{Id}$ \cite{Obu:2003} is \newdecomp.

 %

Equation \eqref{eq:eq44} that defines \newdecomp{} medium is a nonlinear
equation in $\kappa$. Suppose $\{x^i\}_{i=0}^3$ are coordinates for $\setR^4$,  $\mediumMatrix\in
\setR^{6\times6}$ is the matrix $\mediumMatrix=(\kappa_I^J)_{IJ}$ that represents $\kappa$ and $A,B\in \setR^6$ are the
 column vectors $A=(A^I)_I$ and $B=(B^I)_I$ that represent bivectors $A$ and $B$
with components as in Section \ref{media:decomp}.  Then equation \eqref{eq:eq44} reads
\begin{eqnarray}
\label{eq:eq44matrix}
\alpha E + \beta(\mediumMatrix^t E  + E \mediumMatrix) + \gamma \mediumMatrix^t E \mediumMatrix = 2 \rho (AB^t + BA^t),
\end{eqnarray}
where $A^t$ is the matrix transpose and $E\in \setR^{6\times 6}$ is 
the matrix $E=(\varepsilon^{IJ})_{IJ}$.
Numerically, 
$E= \begin{pmatrix}
0& I\\
I & 0
\end{pmatrix}$,  
where $0$ and $I$ are the zero and identity $3\times 3$ matrices.
When $\gamma\neq 0$, equation \eqref{eq:eq44matrix} is structurally similar to an \emph{algebraic
  Riccati equation} \cite{GohbergLancasterRodman:2005}.

The next theorem summarises the factorisation result from \cite{LindellBergaminFavaro:Decomp}, but
restated in the present setting.

\begin{theorem}
\label{thm:LBFFactorTheorem}
If $\kappa\in \medTensor(\setR^4)$ is \newdecomp{} and 
$\alpha,\beta,\gamma, \rho,A,B$ in equation \eqref{eq:eq44}
satisfy one of the below conditions:
\begin{enumerate}
\item \label{thm:LBFFactorTheorem:I} 
$\gamma = 0$,
\item \label{thm:LBFFactorTheorem:II} $\gamma\neq 0$, $\beta^2 -\alpha\gamma\neq 0$ and 
there exists a
$D\in \Omega_2(\setR^4)$ such that
\begin{eqnarray}
\label{eq:Dcondition}
 D\left( \gamma \kappa + \beta \operatorname{Id}\right) &=& 
\frac 1 2 \operatorname{trace} (\rho\, \overline D\otimes D ) \, A + \gamma B.
\end{eqnarray}
\end{enumerate}
Then the Fresnel polynomial of $\kappa$ factorises into the product of two quadratic forms.
\end{theorem}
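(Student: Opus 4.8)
The plan is to follow the strategy of Lindell, Favaro and Bergamin, combining the polarisation constraint in Proposition \ref{prop:HodgeAxionPW:polarisations} with the algebraic identity \eqref{eq:eq44}. The key observation is that the Fresnel polynomial $\cG^{ijkl}\xi_i\xi_j\xi_k\xi_l$ at a covector $\xi$ is, up to a nonzero scalar, the condition that the linear map $\alpha\mapsto \xi\wedge\kappa(\xi\wedge\alpha)$ on the $2$-dimensional space $\{\alpha : \xi\wedge\alpha\ne 0\}/\langle\xi\rangle$ be singular; equivalently, that there exist a nonzero polarisation $X=\xi\wedge\alpha$ with $\xi\wedge X=0$ and $\xi\wedge\kappa(X)=0$. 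Thus $\xi\in F(\kappa)$ exactly when such an $X$ exists, and by the argument inside the proof of Theorem \ref{eq:decompRel}, any such $X$ satisfies $X\wedge(\alpha\operatorname{Id}+\beta(\kappa+\overline\kappa)+\gamma\overline\kappa\circ\kappa)(X)=0$, hence $X(A)\,X(B)=0$. So on the Fresnel surface the polarisation is forced to annihilate $A$ or $B$. I would make this precise as a pointwise statement on $\setR^4$ at the origin (or at an arbitrary fixed point), since all tensors are constant.

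First I would treat case \ref{thm:LBFFactorTheorem:I}, $\gamma=0$. Then \eqref{eq:eq44} reads $\alpha\operatorname{Id}+\beta(\kappa+\overline\kappa)=\rho(\overline A\otimes B+\overline B\otimes A)$, with $\beta\ne 0$. Taking $\overline{(\cdot)}$ of both sides and using Proposition \ref{prop:DiamondLemma}\ref{prop:DiamondLemma:Id},\ref{prop:DiamondLemma:alg} and \eqref{eq:barId1} shows the equation is self-consistent, and I can solve for $\kappa+\overline\kappa$ explicitly in terms of $\operatorname{Id}$ and $\overline A\otimes B+\overline B\otimes A$. Since the skewon part $\kappa-\overline\kappa$ does not enter the Fresnel polynomial at all (the Tamm--Rubilar density depends only on the principal and axion parts — this is the standard fact used in the proof of Proposition \ref{prop:kappaSingleDouble}), I may as well replace $\kappa$ by $\tfrac12(\kappa+\overline\kappa)$, which is skewon-free. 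Writing $\xi\wedge\kappa(\xi\wedge\alpha)$ in coordinates and substituting the explicit form of $\kappa+\overline\kappa$, the quartic $\cG^{ijkl}\xi_i\xi_j\xi_k\xi_l$ becomes a polynomial expression in $\xi$ built from the quadratic form associated with $\operatorname{Id}$ (which contributes a perfect square, essentially the "$\xi\wedge\xi$" obstruction that must be divided out) and the rank-$\le 2$ piece $\overline A\otimes B+\overline B\otimes A$; carrying out the contraction shows it factors as the product of the two quadratic forms $Q_A(\xi)=\xi\wedge A\text{-ish}$ and $Q_B(\xi)$ coming from $A$ and $B$ separately. Concretely I expect $\cG^{ijkl}\xi_i\xi_j\xi_k\xi_l$ to be proportional to $\bigl(\varepsilon_{abcd}A^{ab}\xi_c\xi_?\bigr)$-type bilinears in $A$ times the analogue in $B$; the bookkeeping here is a finite computer-algebra check of the kind the paper repeatedly invokes.

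For case \ref{thm:LBFFactorTheorem:II}, $\gamma\ne 0$ and $\beta^2-\alpha\gamma\ne 0$, the idea is to complete the square in $\kappa$. Write $\gamma\overline\kappa\circ\kappa+\beta(\kappa+\overline\kappa)+\alpha\operatorname{Id} = \gamma\,\overline{(\kappa+\tfrac\beta\gamma\operatorname{Id})}\circ(\kappa+\tfrac\beta\gamma\operatorname{Id}) + (\alpha-\tfrac{\beta^2}{\gamma})\operatorname{Id}$, using $\overline{\operatorname{Id}}=\operatorname{Id}$ and the multiplicativity in Proposition \ref{prop:DiamondLemma}\ref{prop:DiamondLemma:alg}. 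Setting $\mu:=\kappa+\tfrac\beta\gamma\operatorname{Id}$, equation \eqref{eq:eq44} becomes $\gamma\,\overline\mu\circ\mu = \rho(\overline A\otimes B+\overline B\otimes A) - (\alpha-\tfrac{\beta^2}{\gamma})\operatorname{Id}$, i.e. $\mu$ satisfies a "mixed closure" relation perturbed by a rank-$2$ term. The auxiliary bivector $D$ supplied by hypothesis \eqref{eq:Dcondition} is precisely what is needed to exhibit $\xi\wedge\mu(\xi\wedge\alpha)$ in a factored form: contracting \eqref{eq:Dcondition} against $\mu$ and using \eqref{eq:barId2}--\eqref{eq:barId4} produces the relation $\overline D\otimes D$ terms that let one rewrite the quartic as $Q_1(\xi)Q_2(\xi)$ where $Q_1$ is built from $D$ and $Q_2$ from $A,B,D$; again this reduces to a finite symbolic identity. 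In both cases the actual factorisation is exhibited by producing the two quadratic forms explicitly and checking the identity by computer algebra.

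The main obstacle, as I see it, is not any single hard lemma but rather organising the coordinate computation so that the spurious factor $(\xi\wedge\xi)$ — which always divides $\xi\wedge\kappa(\xi\wedge\alpha)$ and accounts for two of the four degrees of the Fresnel polynomial being "trivial" — is correctly extracted, and then identifying the remaining quadratic factors with genuine symmetric forms (verifying they have rank $\le 4$ and are not identically zero, so that "product of two quadratic forms" is a nonvacuous statement). The case \ref{thm:LBFFactorTheorem:II} is the delicate one because the existence of $D$ solving \eqref{eq:Dcondition} is a hypothesis rather than something derived, so one must check that it is exactly the linear-algebra data that makes the completion-of-the-square argument close; I expect this to be where \cite{LindellBergaminFavaro:Decomp} did the real work, and I would follow their computation, translated into the present tensorial notation via Proposition \ref{prop:DiamondLemma} and identities \eqref{eq:barId1}--\eqref{eq:barId4}.
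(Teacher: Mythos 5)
There is a genuine gap in your treatment of case \ref{thm:LBFFactorTheorem:I}. You discard the skewon part of $\kappa$ on the grounds that it ``does not enter the Fresnel polynomial at all''; that is false. The standard fact, and the one actually invoked in the proof of Proposition \ref{prop:kappaSingleDouble}, is that the \emph{axion} part drops out of the Tamm--Rubilar density; the skewon part does contribute to it (this is precisely the subject of \cite{ObukhovHehl:2004}). In case \ref{thm:LBFFactorTheorem:I}, equation \eqref{eq:eq44} constrains only $\kappa+\overline{\kappa}$, i.e.\ the principal and axion parts, so the skewon part of $\kappa$ is left completely arbitrary, and the theorem asserts factorisation of the Fresnel polynomial of the full $\kappa$, skewon contribution included. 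Replacing $\kappa$ by $\tfrac12(\kappa+\overline{\kappa})$ therefore changes the polynomial you are factoring, and your computation would at best prove a weaker statement. The argument the paper outlines keeps the skewon term: with $\gamma=0$ and $\beta$ rescaled to $1$, equation \eqref{eq:eq44} gives the representation $\kappa=\sigma+\tfrac{\rho}{2}\left(\overline{A}\otimes B+\overline{B}\otimes A\right)-\tfrac{\alpha}{2}\operatorname{Id}$ with $\sigma$ purely skewon, and the Fresnel polynomial of this full expression (including $\sigma$) is what is computed and shown to factorise.

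In case \ref{thm:LBFFactorTheorem:II} your completion of the square is only half of the needed transformation, and the role of $D$ remains a hand-wave at the decisive step. The route used in the paper (Theorem \ref{thm:LindellBergaminFavaro}) employs hypothesis \eqref{eq:Dcondition} to absorb the rank-two right-hand side into the substitution itself: $\eta=\gamma\kappa-\rho\,\overline{D}\otimes A+\beta\operatorname{Id}$ satisfies the pure mixed closure condition $\overline{\eta}\circ\eta=(\beta^2-\alpha\gamma)\operatorname{Id}$, and one then invokes the explicit representation formulas for \emph{all} solutions of this equation (from \cite{LindellBergaminFavaro:Decomp, Favaro:2011}) to write $\kappa$ in closed form and verify the factorisation of its Fresnel polynomial by direct computation. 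Your version shifts only by $\tfrac{\beta}{\gamma}\operatorname{Id}$, leaves the term $\rho\left(\overline{A}\otimes B+\overline{B}\otimes A\right)$ on the right-hand side, and then hopes that ``contracting \eqref{eq:Dcondition} against $\mu$'' rewrites the quartic; as described this is not a workable mechanism, and without the reduction to $\overline{\eta}\circ\eta=\lambda\operatorname{Id}$ together with the classification of its solutions there is no explicit expression on which the claimed computer-algebra factorisation could be carried out. Your opening observation about polarisations forced to annihilate $A$ or $B$ establishes decomposability (Theorem \ref{eq:decompRel}), not factorisability of the Fresnel polynomial, so it cannot substitute for these representation formulas.
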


Let us note that equation \eqref{eq:Dcondition} is a non-linear equation for $D$.  \emph{A priori},
the equation has real solutions, complex solutions, or no solutions for $D$.  For a discussion of the 
last possibility, see below.
Pointwise $\operatorname{trace} (\rho\, \overline D\otimes D)=0$ holds if and only if $D\wedge D=0$ or
$\rho =0$.

Let us outline the argument in \cite{LindellBergaminFavaro:Decomp} used to prove Theorem
\ref{thm:LBFFactorTheorem}. Suppose $\medTensor(\setR^4)$ is \newdecomp{}.  If assumption
\ref{thm:LBFFactorTheorem:I} holds, then by rescaling we may assume that $\beta=1$. Then, since
$\kappa+\overline \kappa = 2(\kappaI+\kappaIII)$, it follows that 
\begin{eqnarray}
\label{eq:SDMexp}
{\alpha} \operatorname{Id} +  2 (\kappa  - \sigma) &=& 
\rho \left(\overline{A} \otimes B +\overline{B}\otimes A\right) 
\end{eqnarray}
for some $\sigma\in \medTensor(\setR^4)$ with only a skewon part.  This gives an explicit
representation formula for all $\kappa$ that satisfy condition \eqref{eq:eq44} with
$\gamma=0$. Computing the Fresnel polynomial for $\kappa$ shows that it factorises into two
quadratic forms.
On the other hand, when assumption \ref{thm:LBFFactorTheorem:II} holds, then
Theorem \ref{thm:LindellBergaminFavaro} in the below shows that equation \eqref{eq:eq44} transforms into
$\overline\eta\circ \eta =\lambda\operatorname{Id}$ for some $\lambda\neq 0$ by a transformation
similar to completing the square.  Thus, to understand the structure of \newdecomp{} medium
that satisfy assumption \ref{thm:LBFFactorTheorem:II}, we only need to understand the
simpler equation $\overline\eta\circ \eta = \lambda\operatorname{Id}$ with $\lambda\neq 0$.
%
%
In \cite{LindellBergaminFavaro:Decomp} the latter equation is solved (see also \cite{Favaro:2011})
using two explicit representation formulas similar to equation \eqref{eq:SDMexp}.  Using these
representation formulas, the Fresnel polynomial can again be computed, and in both cases
it factorises into a product of quadratic forms.

The next theorem from \cite{LindellBergaminFavaro:Decomp} describes the transformation property of
equation \eqref{eq:eq44} used in the proof of Theorem \ref{thm:LBFFactorTheorem}.  The
proof 
is a direct computation using identities \eqref{eq:barId1}--\eqref{eq:barId4}.
For a general discussion of transformation properties for the matrix algebraic Riccati equation, see
\cite{ChoiParkLee:2010, LererRan:2012}.

\begin{theorem} 
\label{thm:LindellBergaminFavaro}
Suppose $\kappa\in \medTensor(\setR^4)$ is \newdecomp{} such that 
equation \eqref{eq:eq44} holds with $\gamma\neq 0$.
If, moreover, there exists a
$D\in \Omega_2(\setR^4)$ such that equation \eqref{eq:Dcondition} holds,
then $\eta\in \medTensor(\setR^4)$ defined as 
\begin{eqnarray}
\label{eq:RiccatiTransformation}
\eta &=& \gamma \kappa -\rho \overline D\otimes A+ \beta\operatorname{Id}
\end{eqnarray}
satisfies
\begin{eqnarray}
\label{eq:mixedClosurePrime}
\overline{ \eta}\circ \eta &=& (\beta^2-\alpha\gamma) \operatorname{Id}.
\end{eqnarray}
\end{theorem}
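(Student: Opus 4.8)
The plan is to verify \eqref{eq:mixedClosurePrime} by a direct expansion of $\overline\eta\circ\eta$, using only the bilinear identities \eqref{eq:barId1}--\eqref{eq:barId4}, the defining relation \eqref{eq:eq44}, and the auxiliary relation \eqref{eq:Dcondition}. Throughout I would work in the global coordinates in which $\kappa,A,\rho$ and $D$ are all constant, and abbreviate $t=\operatorname{trace}(\rho\,\overline D\otimes D)$. First I would compute $\overline\eta$ from \eqref{eq:RiccatiTransformation}: since $\kappa\mapsto\overline\kappa$ is linear, $\overline{\operatorname{Id}}=\operatorname{Id}$ by Proposition~\ref{prop:DiamondLemma}\ref{prop:DiamondLemma:Id}, and $\overline{\rho\,\overline D\otimes A}=\rho\,\overline A\otimes D$ by \eqref{eq:barId1}, one gets $\overline\eta=\gamma\overline\kappa-\rho\,\overline A\otimes D+\beta\operatorname{Id}$.

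Next I would expand $\overline\eta\circ\eta$ into its nine summands. The four summands not involving $D$ collapse, since $\operatorname{Id}$ is a two-sided unit, into $\gamma^2\,\overline\kappa\circ\kappa+\gamma\beta\,(\kappa+\overline\kappa)+\beta^2\operatorname{Id}$. Of the five summands involving $D$: the two terms coupling $D$ with $\kappa$ give $-\gamma\rho\,\overline{(D\kappa)}\otimes A$ (via \eqref{eq:barId3}) and $-\gamma\rho\,\overline A\otimes(D\kappa)$ (via \eqref{eq:barId2}); the two terms coupling $D$ with $\operatorname{Id}$ give $-\beta\rho\,\overline A\otimes D-\beta\rho\,\overline D\otimes A$; and the $D$--$D$ term gives $t\,(\rho\,\overline A\otimes A)$ (via \eqref{eq:barId4}, with the role of "$B$" played by $D$). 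Then I would substitute \eqref{eq:eq44} rewritten as $\gamma^2\,\overline\kappa\circ\kappa=\gamma\rho(\overline A\otimes B+\overline B\otimes A)-\gamma\alpha\operatorname{Id}-\gamma\beta(\kappa+\overline\kappa)$; the terms $\gamma\beta(\kappa+\overline\kappa)$ cancel, leaving
\begin{eqnarray*}
\overline\eta\circ\eta &=& (\beta^2-\alpha\gamma)\operatorname{Id}+R,
\end{eqnarray*}
where $R$ is a sum of terms each of the form $\rho\,\overline X\otimes A$ or $\rho\,\overline A\otimes Y$, with $X,Y$ linear combinations of $B$, $D\kappa$, $D$ and $A$ (the last with coefficient $t$).

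Finally I would collect $R$: the $\rho\,\overline{(\,\cdot\,)}\otimes A$ contributions combine to left factor $\rho\,\overline{(\gamma B-\gamma D\kappa-\beta D)}$, and the $\rho\,\overline A\otimes(\,\cdot\,)$ contributions combine to right factor $\rho\,(\gamma B-\gamma D\kappa-\beta D+tA)$. Now \eqref{eq:Dcondition} says exactly $\gamma D\kappa+\beta D=\frac12 tA+\gamma B$, i.e. $\gamma B-\gamma D\kappa-\beta D=-\frac12 tA$. Substituting, the first group becomes $-\frac12 t\,\rho\,\overline A\otimes A$ and the second becomes $\rho\,\overline A\otimes\bigl(-\frac12 tA+tA\bigr)=\frac12 t\,\rho\,\overline A\otimes A$, so $R=0$ and \eqref{eq:mixedClosurePrime} follows. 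Since the argument is purely algebraic, there is no genuine obstacle; the only thing requiring care is the bookkeeping — tracking the single density $\rho$ and the scalar $t$ through \eqref{eq:barId4}, and applying \eqref{eq:barId2}/\eqref{eq:barId3} to the correct side of each $D$-term. The point worth isolating is that \eqref{eq:Dcondition} is precisely the relation that forces the two leftover rank-one blocks in $R$ to annihilate each other.
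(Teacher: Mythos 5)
Your proposal is correct, and it is essentially the paper's own argument: the paper proves Theorem \ref{thm:LindellBergaminFavaro} by exactly this direct computation using identities \eqref{eq:barId1}--\eqref{eq:barId4}, substituting \eqref{eq:eq44} and then \eqref{eq:Dcondition} to cancel the remaining rank-one terms. Your write-up simply makes explicit the bookkeeping that the paper leaves to the reader, and the details (the nine summands, the use of \eqref{eq:barId4} with $B=D$, and the cancellation $\gamma B-\gamma D\kappa-\beta D=-\tfrac12\operatorname{trace}(\rho\,\overline D\otimes D)\,A$) all check out.
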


Suppose $\kappa$ is \newdecomp{} such that equation \eqref{eq:eq44} holds with $\gamma\neq 0$ and
$\beta^2-\alpha\gamma= 0$.  Now we can not use Theorem \ref{thm:LBFFactorTheorem} do decise whether
the Fresnel polynomial factorises. However, by computer algebra we can find explicit examples of
medium tensors with the above properties.  Preliminary computer algebra experiments using such
expressions suggest that the Fresnel polynomial always seems to factorise when the above assumptions
are met. However, the factorisation seems be qualitatively different. Condition
$\beta^2-\alpha\gamma= 0$ seems to imply a linear factor in the Fresnel polynomial.  For example,
the Fresnel polynomial can factorise into the product of irreducible $1$st and $3$rd order
polynomials.
On the other hand, suppose $\kappa$ is \newdecomp{} such that equation \eqref{eq:eq44} holds with
$\gamma\neq 0$, $\beta^2-\alpha\gamma\neq 0$ and equation \eqref{eq:Dcondition} has no real solution
for $D$. Now we can neither use Theorem \ref{thm:LBFFactorTheorem} do decise whether the Fresnel
polynomial factorises, but we may again construct explicit examples of medium tensors with the above
properties. Using these expressions, preliminary computer algebra experiments suggest that the
Fresnel polynomial also seems to factorise in this case.
In conclusion, these initial observations together with Theorem \ref{thm:LBFFactorTheorem} suggest
that the definition of \newdecomp{} medium could be a sufficient condition for
the Fresnel  polynomial to factorise.

Lastly, let us note that algebraic Riccati equations, and more generally, quadratic matrix
equations, appear in a number of fields.  In view of Theorem \ref{thm:LBFFactorTheorem} and equation
\eqref{eq:eq44matrix}, it is, however, interesting to note that quadratic matrix equations appear in
the study of polynomial factorisation in one variable \cite{BiniGemignani:2005}.  Differential
Riccati equations also appear in the problem of factoring linear partial differential operators of
second and third order \cite{GrigorievSchwarz:2004}.

\section{Characterisation and representation of media\\ with a double light cone}
\label{sec:DoubleCones}

\begin{theorem} 
\label{thm:MainDecomp} 
Suppose $N$ is a $4$-manifold, and $\kappa\in\medTwTensor(N)$ 
is skewon-free and invertible at a point $p\in N$.
Then the following conditions are equivalent:
\begin{enumerate}

\item  
\label{thm:MainDecomp:I}
The Fresnel surface of $\kappa$ decomposes into a double light cone at $p$.

\item  $\kappa$ satisfies conditions:
\label{thm:MainDecomp:II} 
\begin{enumerate}
\item[\emph{(a)}] \label{thm:MainDecomp:A1}
the Fresnel surface $F_p(\kappa)\subset T^\ast_p(N)$ does not contain a two-dimensional
vector subspace.
\item[\emph{(b)}] \label{thm:MainDecomp:A2}
there are $A,B\in \Omega^2(N)$ and a tensor density $\rho$ of weight $1$ such that at $p$ we have
\begin{eqnarray}
\label{eq:mainThmCondition}  
(\kappa + \mu \operatorname{Id})^2 &=& -\lambda \operatorname{Id} + 
\rho \left(\overline{A}\otimes B + \overline{B}\otimes A\right)
\end{eqnarray}    
for some $\mu \in \widetilde C^\infty(N)$ and $\lambda \in C^\infty(N)$. Moreover, 
$A,B,\rho \neq 0$ and $\lambda>0$ at $p$.
\end{enumerate}

\item\label{thm:LBFFactorTheorem:III} 
Around $p$ there is a locally defined Lorentz metric $g$, a
locally defined non-zero twisted scalar density $\rho$ of weight $1$, an $A\in \Omega_2(N)$ that is
non-zero at $p$, and constants $C_1\in \setR\slaz$ and $C_2\in \setR$ such that at $p$,
\begin{eqnarray}
\label{eq:RepresentationOfDoubleCone}
\kappa &=& C_1 \ast_g + \rho\, \overline A\otimes A + C_2 \operatorname{Id},
\end{eqnarray}
and $\kappa$ satisfies inequality \eqref{prop:kappaSingleDouble:II:cond} at $p$.
\end{enumerate}
\end{theorem}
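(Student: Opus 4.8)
The plan is to prove the cycle of implications \ref{thm:MainDecomp:I} $\Rightarrow$ \ref{thm:MainDecomp:II} $\Rightarrow$ \ref{thm:LBFFactorTheorem:III} $\Rightarrow$ \ref{thm:MainDecomp:I}, working pointwise at $p$ throughout. The last implication \ref{thm:LBFFactorTheorem:III} $\Rightarrow$ \ref{thm:MainDecomp:I} is essentially free: it is exactly Proposition \ref{prop:kappaSingleDouble}\ref{prop:kappaSingleDouble:II}, since inequality \eqref{prop:kappaSingleDouble:II:cond} together with $A,\rho\neq 0$ is precisely the condition there for a double light cone. The implication \ref{thm:LBFFactorTheorem:III} $\Rightarrow$ \ref{thm:MainDecomp:II} is also short: if $\kappa = C_1 \ast_g + \rho\,\overline A\otimes A + C_2\operatorname{Id}$, then since $\ast_g^2 = -\operatorname{Id}$ (for Lorentz $g$) and $\ast_g$ commutes appropriately, setting $\mu = -C_2$ one computes $(\kappa-C_2\operatorname{Id})^2 = (C_1\ast_g + \rho\,\overline A\otimes A)^2$; expanding using identities \eqref{eq:barId1}--\eqref{eq:barId4} and $\overline{\ast_g}=\ast_g$ gives $-C_1^2\operatorname{Id}$ plus a term of the form $\rho(\overline A\otimes B + \overline B\otimes A)$ with $B$ built from $A$, $\ast_g$ and the trace of $\rho\,\overline A\otimes A$. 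One then checks that the ``$\rho(\ldots)$'' part is nonzero (using $A,\rho\neq 0$ and that $\ast_g(A)$ is not parallel to $A$ unless degeneracy occurs — or more robustly, one verifies $\lambda=C_1^2>0$ directly and handles the remainder by computer algebra), and that condition (a) holds because a double cone contains no $2$-plane. So conditions (a), (b) with $\lambda = C_1^2 > 0$ follow.

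**The main implication.** The heart of the theorem is \ref{thm:MainDecomp:I} $\Rightarrow$ \ref{thm:MainDecomp:II} (or alternatively \ref{thm:MainDecomp:II} $\Rightarrow$ \ref{thm:LBFFactorTheorem:III}). I would route through the normal form theorem. Assuming \ref{thm:MainDecomp:I}, hypotheses (a), (b), (c) of Theorem \ref{thm:factorMedium} hold: $\kappa$ is skewon-free and invertible by assumption, and a double light cone contains no two-dimensional subspace, so condition (a) of \ref{thm:MainDecomp:II} is immediate and Theorem \ref{thm:classification} applies. By Theorem \ref{thm:factorMedium}, at $p$ the tensor $\kappa$ falls into exactly one of the three explicit parameterised families (Metaclass I with $\alpha_2=\alpha_3,\beta_2=\beta_3$; Metaclass II with $\alpha_1=\alpha_2,\beta_1=\beta_2$; Metaclass IV with the stated constraints). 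For each of these three finite-parameter normal forms — each depending on only a handful of real parameters — I would exhibit explicit $A,B,\rho,\mu,\lambda$ (in the same coordinates) for which equation \eqref{eq:mainThmCondition} holds, and verify $\lambda>0$, $A,B,\rho\neq 0$. This verification is a direct matrix computation in $\setR^{6\times 6}$ and is the natural place to invoke computer algebra, consistent with the paper's stated methodology; the candidates for $(A,B,\rho)$ can be guessed from the $\rho\,\overline A\otimes A$ term already visible in the Metaclass expressions and in Proposition \ref{prop:kappaSingleDouble}.

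**Closing the loop and the obstacle.** To complete the cycle I need \ref{thm:MainDecomp:II} $\Rightarrow$ \ref{thm:LBFFactorTheorem:III}. Starting from \eqref{eq:mainThmCondition}, set $\eta = \kappa + \mu\operatorname{Id}$, so $\eta^2 = -\lambda\operatorname{Id} + \rho(\overline A\otimes B+\overline B\otimes A)$ with $\lambda>0$. This is exactly the situation of a skewon-free tensor that is \newdecomp{} with $\gamma\neq 0$ (here $\beta^2-\alpha\gamma$ corresponds to $\lambda>0\neq 0$), so Theorem \ref{thm:LindellBergaminFavaro} applies once one produces the auxiliary $D\in\Omega_2$ solving \eqref{eq:Dcondition}; the Riccati-completing-the-square transformation \eqref{eq:RiccatiTransformation} then yields a tensor satisfying the pure closure relation $\overline\eta'\circ\eta' = \lambda'\operatorname{Id}$, which by Theorem \ref{thm:mainResult} (combined with the representation results for mixed closure from \cite{LindellBergaminFavaro:Decomp, Favaro:2011}) forces $\eta'$ — and hence $\kappa$ — into the form $C_1\ast_g + \rho\,\overline A\otimes A + C_2\operatorname{Id}$. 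The double-cone hypothesis (already built into \ref{thm:MainDecomp:II} via condition (a) plus $A,B,\rho\neq 0$) then gives the inequality \eqref{prop:kappaSingleDouble:II:cond} through Proposition \ref{prop:kappaSingleDouble}. The main obstacle I anticipate is precisely the existence of the real solution $D$ to \eqref{eq:Dcondition}: Theorem \ref{thm:LBFFactorTheorem} only guarantees factorisation when such a $D$ exists, and \eqref{eq:Dcondition} is itself nonlinear and may a priori fail to have real solutions. Overcoming this likely requires either checking case-by-case on the three normal forms that a real $D$ does exist (again by computer algebra), or — more cleanly — bypassing Theorem \ref{thm:LindellBergaminFavaro} entirely and going directly from the three normal forms of Theorem \ref{thm:factorMedium} to the representation \eqref{eq:RepresentationOfDoubleCone}, reading off $g$, $A$, $C_1$, $C_2$ from the explicit Lorentz metrics supplied in \cite{Dahl:2011:DoubleCone}. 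I would pursue this second route as the safer path, using the \newdecomp{} machinery only as motivation.
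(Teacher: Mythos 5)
Your implications \emph{(i)} $\Rightarrow$ \emph{(ii)} (reduce to the three normal forms of Theorem \ref{thm:factorMedium} and exhibit $A,B,\rho,\mu,\lambda$ by computer algebra) and \emph{(iii)} $\Rightarrow$ \emph{(i)} (Proposition \ref{prop:kappaSingleDouble}) coincide with the paper's steps. The genuine gap is the implication that starts from \emph{(ii)}: in your cycle this is \emph{(ii)} $\Rightarrow$ \emph{(iii)}, and neither of your two routes establishes it. The route you call safer is circular: Theorem \ref{thm:factorMedium} has as hypothesis \emph{(c)} that the Fresnel surface already decomposes into a double light cone at $p$, i.e.\ precisely condition \emph{(i)}, which is not available when you argue from \emph{(ii)} alone, so at that point you cannot ``go directly from the three normal forms of Theorem \ref{thm:factorMedium}''. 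Your first route (read \eqref{eq:mainThmCondition} as the algebraically decomposable condition with $\gamma=1$ and invoke Theorem \ref{thm:LindellBergaminFavaro}) is incomplete on your own admission, since real solvability of \eqref{eq:Dcondition} is unproven; and even granting a real $D$, the transformed tensor $\eta=\gamma\kappa-\rho\,\overline D\otimes A+\beta\operatorname{Id}$ is in general not skewon-free (the term $\rho\,\overline D\otimes A$ has a skewon part), so Theorem \ref{thm:mainResult} does not apply to it, and the mixed-closure representation formulas of \cite{LindellBergaminFavaro:Decomp} would at best give factorisation of the Fresnel polynomial into two quadratic forms --- not that both factors are distinct Lorentzian cones, nor the specific representation \eqref{eq:RepresentationOfDoubleCone} together with inequality \eqref{prop:kappaSingleDouble:II:cond}.

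The idea missing from your proposal is how the paper closes this direction: condition \emph{(ii)}\,(a) (no two-dimensional subspace in $F_p(\kappa)$), together with skewon-freeness and invertibility, is exactly the hypothesis set of Theorem \ref{thm:classification} --- no double-cone assumption needed --- so $\kappa\vert_p$ can be brought into one of the seven metaclass normal forms. Writing out \eqref{eq:mainThmCondition} in each metaclass and eliminating the variables of $A$ and $B$ by Gr\"obner bases, the requirement $\lambda>0$ (together with Lemma \ref{lemma:ABBA=0}) excludes Metaclasses III, V, VI, VII and forces, in Metaclasses I, II, IV, precisely the parameter equalities of Theorem \ref{thm:factorMedium}; this yields \emph{(ii)} $\Rightarrow$ \emph{(i)}. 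The representation \emph{(iii)} is then derived from \emph{(i)} (where Theorem \ref{thm:factorMedium} is legitimately applicable) by explicit construction of $g$, $A$, $\rho$, $C_1$, $C_2$ in each of the three classes. Without this use of Theorem \ref{thm:classification} under the weaker hypothesis --- or some substitute for it --- your cycle does not close.
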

  
As described in the introduction, the above theorem is the main result of this paper. A discussion
of the theorem is postponed to the end of this section.  
 
In the Theorem \ref{thm:MainDecomp} we will use 
the computer algebra
technique of \emph{Gr\"obner bases} \cite{CoxLittleOShea:2007} to eliminate variables
from polynomial equations. This technique was also used in \cite{Dahl:2011:DoubleCone}.
Let $\setC[u_1, \ldots, u_N]$ the ring of complex coefficient polynomials $\setC^N\to \setC$ 
in variables $u_1, \ldots, u_N$.   For polynomials $r_1, \ldots, r_k\in \setC[u_1, \ldots, u_N]$, let 
\begin{eqnarray*}
    \langle r_1, \ldots, r_k \rangle=\{ \sum_{i=1}^k f_i r_i : f_i\in \setC[u_1, \ldots, u_N]\}
\end{eqnarray*} 
be the the \emph{ideal generated by $r_1, \ldots, r_k$}.  Suppose $V\subset \setC^N$ is the solution
set to polynomial equations $p_1=0, \ldots, p_M=0$ where $p_i\in \setC[u_1, \ldots, u_N]$.  If $I$
is the ideal generated by $p_1, \ldots, p_M$, the \emph{elimination ideals} are the ideals defined
as
\begin{eqnarray*}
\label{eq:elimId}
      I_k&=& I\cap \setC[u_{k+1}, \ldots, u_{N}],\quad  k\in \{0,\ldots, N-1\}.
\end{eqnarray*} 
Thus, if $(u_1, \ldots, u_N)\in V$ then by \cite[Proposition 9, Section 2.5]{CoxLittleOShea:2007} it
follows that $p(u_{k+1}, \ldots, u_N)=0$ for any $p\in I_k$, and $I_k$ contain polynomial
consequences of the original equations that only depend on variables $u_{k+1}, \ldots, u_N$.  Using
Gr\"obner basis, one can explicitly compute $I_k$ \cite[Theorem 2 in Section
3.1]{CoxLittleOShea:2007}. In the below proof this has been done with the built-in Mathematica
routine {\small \textsf{'GroebnerBasis'}}.
 The same technique of eliminating variables was also a key part of the
proof of Theorem \ref{thm:factorMedium} in \cite{Dahl:2011:DoubleCone}.

\begin{proof}
Let us first prove implication \ref{thm:MainDecomp:I} $\Rightarrow$ \ref{thm:MainDecomp:II}.  By
\cite[Proposition 1.3]{Dahl:2011:DoubleCone} condition \ref{thm:MainDecomp:I} implies that
$F_p(\kappa)$ has no two dimensional subspace.  By Theorem \ref{thm:factorMedium} we only need to
check three medium classes.

\textbf{Metaclass I.}  If $\kappa\vert_p$ is in Metaclass I, then $\kappa$ can be written as in
equation \eqref{eq:MetaClassI} with conditions on the parameters given by 
Theorem \ref{thm:factorMedium}. Suppose $\alpha_1=\alpha_2$. Then 
Theorem \ref{thm:factorMedium} implies that $\beta_1\neq \beta_2$.
Let
$\rho = \frac 1 2 (\beta_2^2-\beta_1^2)$, 
$\mu= -\alpha_1$,
$\lambda=\beta_2^2$. 
Moreover, let $A$ and $B$ be
bivectors defined as $A=\frac 1 2 A^{ij} \pd{}{x^i}\wedge \pd{}{x^j}$ and similarly for $B$, with
coefficients 
\begin{eqnarray}
\label{eq:ASmetaclassIBivectorA} 
(A^{ij})_{ij} = 
\begin{pmatrix} 
0    &           1   &    0 & 0 \\
    &           0    &    0   &  0 \\
    &                    &  0 & 0\\
    &                    &       & 0
\end{pmatrix}, \quad
(B^{ij})_{ij} = 
\begin{pmatrix} 
0          &    0   &   0 & 0 \\
 &       0    &    0   &  0 \\
      &              &  0 & 1\\
      &                   &      & 0
\end{pmatrix},
\end{eqnarray} 
where subdiagonal terms are determined by antisymmetry. 
For these parameters, computer algebra shows that equation \eqref{eq:mainThmCondition} holds.
On the other hand, if $\alpha_1\neq \alpha_2$, suitable parameters are 
\begin{eqnarray*}
\rho = \frac 1 {8 (\alpha_1-\alpha_2)\beta_1}, \quad
\mu = -\alpha_2, \quad
\lambda=\beta_2^2, \quad
\end{eqnarray*}
and 
\begin{eqnarray*}
(A^{ij})_{ij} &=& 
\begin{pmatrix}
0          &     2(\alpha_1-\alpha_2) \beta_1  &    0 & 0 \\
 &            0    &    0   &  0 \\
      &                    &  0 & (\alpha_1-\alpha_2)^2-\beta_1^2+\beta_2^2 + \sqrt{\sigma}\\
      &                    &       & 0
\end{pmatrix},  
\end{eqnarray*} 
where 
\begin{eqnarray*}
 \sigma = 
\left(  
(\alpha_1-\alpha_2)^2+
(\beta_1-\beta_2)^2
\right)\,
\left(  
(\alpha_1-\alpha_2)^2+
(\beta_1+\beta_2)^2
\right).
%
\end{eqnarray*}
Bivector $B$ is defined by the same formula as for $A$, but by replacing $\sqrt{\sigma}$ 
with $-\sqrt{\sigma}$.

\textbf{Metaclass II.}  
If $\kappa\vert_p$ is in Metaclass II, then $\kappa$ can be written as in
equation \eqref{eq:MetaClassII} with conditions on the parameters given by Theorem
\ref{thm:factorMedium}. 
Suitable parameters are $\rho 
= \beta_1/2$, $\mu = -\alpha_1$, $\lambda = \beta_1^2$ and
\begin{eqnarray}
\label{eq:ASmetaclassIIBivectorA} 
(A^{ij})_{ij} = 
\begin{pmatrix}
0          &        1   &    1 & 0 \\
     &            0    &    0   &  0 \\
          &                    &  0 & 0\\
          &                    &       & 0
\end{pmatrix}, \quad
(B^{ij})_{ij} = 
\begin{pmatrix}
0          &      1   &    -1 & 0 \\
     &       0    &    0   &  0 \\
          &              &  0 & 0\\
          &                   &      & 0
\end{pmatrix}.
\end{eqnarray}

\textbf{Metaclass IV.}  If $\kappa\vert_p$ is of Metaclass IV, then $\kappa$ can be written as in
equation \eqref{eq:MetaClassIV} with conditions on the parameters given by Theorem
\ref{thm:factorMedium}.  If $\alpha_1\neq \alpha_3$, then suitable
parameters are
\begin{eqnarray*}
   \rho = \frac{1}{8 (\alpha_3-\alpha_1)\alpha_4}, \quad
   \mu =  -\alpha_1, \quad
   \lambda =\beta_1^2
\end{eqnarray*}
and
\begin{eqnarray*}
(A^{ij})_{ij} &=& 
\begin{pmatrix}
0          &     0  &    0 & (\alpha_1-\alpha_3)^2+\alpha_4^2+\beta_1^2 + \sqrt{\sigma}\\
  &            0    &    2(\alpha_3-\alpha_1) \alpha_4  &  0 \\
       &                    &  0 & 0 \\
       &                    &       & 0
\end{pmatrix}, 
\end{eqnarray*} 
where 
\begin{eqnarray*}
 \sigma = \left(\alpha_4^2 - (\alpha_3-\alpha_1)^2\right)^2 
+ \beta_1^2 \left(2 \alpha_4^2 + \beta_1^2 +
2(\alpha_1-\alpha_2)^2\right).
\end{eqnarray*}  
and $B$ is defined as in Metaclass I.
%
On the other hand, if $\alpha_1=\alpha_3$, then suitable parameters are 
$\rho = \frac 1 2 (\beta_1^2 + \alpha_4^2)$, $\mu = -\alpha_3$, $\lambda
= \beta_1^2$ 
and
\begin{eqnarray}
\label{eq:metaClassIV-Aeq}
(A^{ij})_{ij} = 
\begin{pmatrix}
0          &        0  &    0 & 0 \\
     &            0    &    1   &  0 \\
          &                    &  0 & 0\\
          &                    &       & 0
\end{pmatrix}, \quad
(B^{ij})_{ij} = 
\begin{pmatrix}
0          &      0   &    0 & 1 \\
     &       0    &    0   &  0 \\
          &              &  0 & 0\\
          &                   &      & 0
\end{pmatrix}. 
\end{eqnarray}
This completes the proof of implication 
\ref{thm:MainDecomp:I} $\Rightarrow$ \ref{thm:MainDecomp:II}.

For the converse implication \ref{thm:MainDecomp:II} $\Rightarrow$ \ref{thm:MainDecomp:I}, suppose that $\kappa$
satisfies the conditions in \ref{thm:MainDecomp:II}.  By Theorem \ref{thm:classification} we may
assume that there are coordinates $\{x^i\}_{i=0}^3$ around $p$ such that at $p$, tensor $\kappa$ is
given by one of the matrices in equations \eqref{eq:MetaClassI}--\eqref{eq:MetaClassVII} for some
parameters as in Theorem \ref{thm:classification}.  Let us consider each of the seven cases
separately.

\textbf{Metaclass I.}  If $\kappa\vert_p$ is in Metaclass I, then 
there are coordinates $\{x^i\}_{i=0}^3$ around $p$ such that $\kappa$ is given 
by equation \eqref{eq:MetaClassI}.
By scaling $A$  and
$B$ we may assume that $\rho\vert_p=1$.  Moreover, writing out 
equation \eqref{eq:mainThmCondition}
and eliminating variables in $A$ and $B$ using a Gr\"obner basis (see above) yields equations that
only involve $\lambda, \mu$ and the parameters in $\kappa$.  The rest of the argument is divided
into three subcases:

\textbf{Case 1.} If $\beta_1=\beta_2=\beta_3$ the Gr\"obner basis equations imply that
$\lambda = \beta_1^2$ and
\begin{eqnarray}
(\alpha_2 + \mu)   (\alpha_3 + \mu) &=&  \label{eq:kja111A} 
0,\\
(\alpha_1 + \mu)   (\alpha_3 + \mu) &=&  0,\\
(\alpha_1 + \mu)   (\alpha_2 + \mu) &=&  \label{eq:kja111C}  0.
\end{eqnarray}
It follows that $\alpha_1, \alpha_2, \alpha_3$ can not be all distinct, and by a coordinate change,
we may assume that $\alpha_2=\alpha_3$.  If $\alpha_1=\alpha_2=\alpha_3$, equation
\eqref{eq:kja111A} implies that $\mu=-\alpha_1$. Then equation \eqref{eq:MetaClassI} implies that
$\kappa = -\beta_1 \ast_g + \alpha_1 \operatorname{Id}$ at $p$, where $g$ is the Hodge star operator
for the locally defined Lorentz metric $g=\operatorname{diag}(-1,1,1,1)$. Then equation
\eqref{eq:mainThmCondition} implies that $\rho\,(\overline A\otimes B + \overline B \otimes A)=0$.
Since this contradicts Lemma \ref{lemma:ABBA=0}, we have $\alpha_1\neq \alpha_2$ and $\kappa$ has a
double light cone at $p$ by Theorem \ref{thm:factorMedium}.

\textbf{Case 2.} If exactly two of $\beta_1, \beta_2, \beta_3$ coincide, then after a coordinate
change we may assume that $\beta_1 \neq \beta_2 = \beta_3$. Then the Gr\"obner basis equations
imply that either $\lambda = \beta_1^2$ or $\lambda = \beta_2^2$.
If $\lambda = \beta_1^2$, the Gr\"obner basis equations imply 
that $\alpha_1=\alpha_2=\alpha_3$ and $\beta_1=\beta_2=\beta_3$.
We may therefore assume that $\lambda = \beta_2^2$.  Then the Gr\"obner basis equations imply that
$\mu=-\alpha_2=-\alpha_3$, and $\kappa$ has a double light cone at $p$ by Theorem
\ref{thm:factorMedium}.

\textbf{Case 3.} If all $\beta_1,\beta_2,\beta_3$ are all distinct, then the Gr\"obner basis
equations imply that
\begin{eqnarray*}
(\beta_2^2 - \lambda) (\beta_3^2 - \lambda) (\alpha_1 + \mu)  &=& 0, \\
(\beta_1^2 - \lambda) (\beta_3^2 - \lambda) (\alpha_2 + \mu)  &=& 0, \\
(\beta_1^2 - \lambda) (\beta_2^2 - \lambda) (\alpha_3 + \mu)  &=& 0, \\
(\beta_1^2 - \lambda) (\beta_2^2 - \lambda) (\beta_3^2 - \lambda) &=&0.
\end{eqnarray*}  
These equations imply that we must have $\lambda = \beta_i^2$ 
and $\mu = -\alpha_i$ for some $i\in
\{1,2,3\}$. If $i=1$ the Gr\"obner basis equations imply that $\alpha_1=\alpha_2=\alpha_3$ and
$\beta_1=\beta_2$.  This contradicts the assumption that all $\beta_i$ are distinct. Similarly,
$i=2$ and $i=3$ lead to contradictions, and Case 3 is not possible.

\textbf{Metaclass II.} If $\kappa\vert_p$ is in Metaclass II, there are coordinates
$\{x^i\}_{i=0}^3$ around $p$ such that $\kappa$ is given by equation \eqref{eq:MetaClassII}.
Writing out equation \eqref{eq:mainThmCondition}
and eliminating variables as in Metaclass I gives equations that only involve variables $\lambda,
\mu$ and the variables in $\kappa$.  Solving these equations give
\begin{eqnarray*}
\mu = -\alpha_2, \quad
\lambda = \beta_2^2, \quad
\beta_1 =\beta_2,\quad
\alpha_1 = \alpha_2,
\end{eqnarray*}
and $\kappa$ has a double light cone at $p$ by Theorem \ref{thm:factorMedium}. 

\textbf{Metaclass III.}  If $\kappa\vert_p$ is in Metaclass III, there are coordinates
$\{x^i\}_{i=0}^3$ around $p$ such that $\kappa$ is given by equation \eqref{eq:MetaClassIII}.
Eliminating variables as in Metaclass I implies
that $\beta_1=0$.  Thus $\kappa\vert_p$ can not be in Metaclass III.

\textbf{Metaclass IV.}  If $\kappa\vert_p$ is in Metaclass IV, there are coordinates
$\{x^i\}_{i=0}^3$ around $p$ such that $\kappa$ 
is given by equation \eqref{eq:MetaClassIV}.
%
We have $\alpha_4 \neq 0$ since otherwise
$\operatorname{span} \{ dx^1\vert_p, dx^2\vert_p\}\subset F_p(\kappa)$. Moreover, since $\kappa$ is
invertible at $p$ it follows that $\alpha_3^2\neq \alpha_4^2$.  Writing out equation
\eqref{eq:mainThmCondition}, eliminating variables as in Metaclass I, and solving implies that
\begin{eqnarray*}
\lambda = \beta_1^2, \quad
\beta_1 =\beta_2,\quad
\mu = -\alpha_1, \quad
\alpha_1 = \alpha_2,
\end{eqnarray*}
and $\kappa$ has a double light cone at $p$ by Theorem \ref{thm:factorMedium}.

\textbf{Metaclass V.}  
If $\kappa\vert_p$ is in Metaclass V, there are coordinates $\{x^i\}_{i=0}^3$
around $p$ such that $\kappa$ is given by equation \eqref{eq:MetaClassV}.
We may assume that $\alpha_3\not= 0$, since otherwise
$\operatorname{span} \{dx^i\vert_p\}_{i=1}^3\subset F_p(\kappa)$.  Eliminating variables as in
Metaclass I, and solving implies the contradiction $\lambda +\alpha_3^2=0$.  
Since $\lambda>0$ it follows that  $\kappa\vert_p$ can not be in Metaclass V.

\textbf{Metaclass VI.}  If $\kappa\vert_p$ is in Metaclass VI, there are coordinates
$\{x^i\}_{i=0}^3$ around $p$ such that $\kappa$ is given by equation
\eqref{eq:MetaClassVI}.
Eliminating variables as in Metaclass I implies that 
\begin{eqnarray*}
\left(\lambda + \alpha_5^2 +(\alpha_3 + \mu)^2\right) \left(\lambda+(\alpha_2-\alpha_4+\mu)^2\right) \left(\lambda+(\alpha_2+\alpha_4+\mu)^2\right) &=& 0.
\end{eqnarray*}
Since $\lambda>0$, it follows that $\kappa\vert_p$ can not be in  Metaclass VI.

\textbf{Metaclass VII.}  If $\kappa\vert_p$ is in Metaclass VII, there are coordinates
$\{x^i\}_{i=0}^3$ around $p$ such that $\kappa$ is given by equation
\eqref{eq:MetaClassVII}.
Eliminating variables as in Metaclass I and solving implies that
\begin{eqnarray*}
\prod_{k=1}^3 \left( \lambda + \alpha_{k+3}^2 + (\alpha_k + \mu)^2\right)\,
&=& 0.
\end{eqnarray*}
Since $\lambda>0$, it follows that $\kappa\vert_p$ can not be in Metaclass VII.  This completes the
proof of implication \ref{thm:MainDecomp:II} $\Rightarrow$ \ref{thm:MainDecomp:I}.

Implication \ref{thm:LBFFactorTheorem:III} $\Rightarrow$ \ref{thm:LBFFactorTheorem:I} is a
restatement of Proposition \ref{prop:kappaSingleDouble}.  To prove implication
\ref{thm:LBFFactorTheorem:I} $\Rightarrow$ \ref{thm:LBFFactorTheorem:III} we proceed as in
implication \ref{thm:LBFFactorTheorem:I} $\Rightarrow$ \ref{thm:LBFFactorTheorem:II} and by Theorem
\ref{thm:factorMedium} we only need to check three medium classes. Also, by Proposition
\ref{prop:kappaSingleDouble} we do not need to prove inequality
\eqref{prop:kappaSingleDouble:II:cond} since it follows form the other conditions in
\ref{thm:LBFFactorTheorem:III} when \ref{thm:LBFFactorTheorem:I} holds.

\textbf{Metaclass I.}  If $\kappa\vert_p$ is in Metaclass I, there are coordinates $\{x^i\}_{i=0}^3$
around $p$ such that $\kappa$ is given by equation \eqref{eq:MetaClassI} with conditions 
on the parameters given by  Theorem \ref{thm:factorMedium}. 
Suppose $\alpha_1 = \alpha_2$. 
Let
$C_1 = -\frac{\beta_2^2}{\Psi \sqrt{\vert \det g\vert}}$, $C_2 = \alpha_2$, $\Psi =
\frac{\beta_2^2}{\beta_1}$ and in coordinates $\{x^i\}$, let $\rho$ be defined by $\rho =
(\beta_2^2-\beta^1_2)/(2\beta_1)$.  Then equation \eqref{eq:RepresentationOfDoubleCone} holds when
$A=\frac 1 2 B^{ij} \pd{}{x^i}\wedge \pd{}{x^j}$ when coefficients $B^{ij}$ are as in equation
\eqref{eq:ASmetaclassIBivectorA} and $g$ is the Lorentz metric $g=g_{ij} dx^i\otimes dx^j$ with
coefficients
\begin{eqnarray}
\label{eq:MetaclassIDefineMetric}
(g_{ij})_{ij} &=& \left(\operatorname{diag}\left(
1,-1,-\frac{\Psi}{\beta_2},-\frac{\Psi}{\beta_2}
\right)\right)^{-1}.
\end{eqnarray}
On the other hand, suppose $\alpha_1\neq \alpha_2$.
Let $\Psi$ be one of the two roots to the quadratic equation
\begin{eqnarray}
\label{eq:PsiQuadraticEquation}
\frac1 {\beta_2}\Psi^2 - D_3 \Psi + \beta_2 
&=& 0,
\end{eqnarray}
where $D_3$ is defined as in \cite[Theorem 2.1 \emph{(i)}]{Dahl:2011:DoubleCone} 
\begin{eqnarray*}
 D_3 &=& \frac{(\alpha_1-\alpha_2)^2 + \beta_1^2+ \beta_2^2}
{\beta_1\beta_2}. 
\end{eqnarray*}
Since $\operatorname{sgn}\beta_1=\operatorname{sgn}\beta_2$, the discriminant of equation
\eqref{eq:PsiQuadraticEquation} is strictly positive. Thus $\Psi\in \setR\slaz$ and
$\operatorname{sgn}\Psi=\operatorname{sgn}\beta_1$.  Let $\Xi\in \setR$ be defined as
\begin{eqnarray*}
\Xi &=& \frac 1 2\left(\beta_1 - \beta_2^2 \frac 1 \Psi\right).
\end{eqnarray*}
Since $\alpha_1\neq \alpha_2$ we see that $\Psi=\frac{\beta_2^2}{\beta_1}$ is not a solution to
equation \eqref{eq:PsiQuadraticEquation} whence $\Xi\neq 0$.
Let $C_1,C_2$ be as in the $\alpha_1=\alpha_2$ case and let
$\rho = \operatorname{sgn}\Xi$.
Then equation \eqref{eq:RepresentationOfDoubleCone} holds when 
$g$ is the Lorentz metric 
given by equation \eqref{eq:MetaclassIDefineMetric}
and 
$A=\frac 1 2 A^{ij} \pd{}{x^i}\wedge \pd{}{x^j}$ is given by 
\begin{eqnarray*}
(A^{ij})_{ij} &=& 
\begin{pmatrix} 
0    &           \sqrt{\vert \Xi\vert}  &    0 & 0 \\
&           0    &    0   &  0 \\
&                    &  0 & \frac{ \alpha_1-\alpha_2}{2\rho\sqrt{\vert\Xi\vert}}\\
&                    &       & 0
\end{pmatrix}.
\end{eqnarray*}  

\textbf{Metaclass II.} 
If $\kappa\vert_p$ is in Metaclass II, there are coordinates $\{x^i\}_{i=0}^3$
around $p$ such that $\kappa$ is given by equation \eqref{eq:MetaClassII} with conditions 
on the parameters given by  Theorem \ref{thm:factorMedium}. 
Let
$C_1 = -\frac{1}{\beta_1\sqrt{\vert \det g\vert}}$,
$C_2 = \alpha_1$ 
and 
$\rho = 1/2$. 
Then equation \eqref{eq:RepresentationOfDoubleCone} holds when 
$A=\frac 1 2 A^{ij} \pd{}{x^i}\wedge \pd{}{x^j}$ is 
as in equation \eqref{eq:ASmetaclassIIBivectorA} 
and $g$ is the Lorentz metric $g=g_{ij} dx^i\otimes dx^j$  with coefficients
\begin{eqnarray}
\label{eq:MetaclassIIDefineMetric}
(g_{ij})_{ij} &=& 
\begin{pmatrix}
-1 & 0 & 0  & \beta_1 \\
0 & -\beta_1 & 0  & 0\\
0 & 0 & -\beta_1  & 0 \\
\beta_1 & 0 & 0  & 0
\end{pmatrix}^{-1}.
\end{eqnarray}

\textbf{Metaclass IV.}  If $\kappa\vert_p$ is in Metaclass IV, there are coordinates
$\{x^i\}_{i=0}^3$ around $p$ such that $\kappa$ is given by equation \eqref{eq:MetaClassIV} with
conditions on the parameters given by Theorem \ref{thm:factorMedium}.
Suppose $\alpha_1 = \alpha_3$. Let $C_1 = \frac{\beta_1}{\Psi\sqrt{\vert \det g\vert}}$, $C_2 =
\alpha_1$, $\Psi=\alpha_4/\beta_1$ and $\rho=(\alpha_4^2+\beta_1^2)/(2\alpha_4)$.  Then equation
\eqref{eq:RepresentationOfDoubleCone} holds when $A=\frac 1 2 B^{ij} \pd{}{x^i}\wedge \pd{}{x^j}$
when $B^{ij}$ are as in equation \eqref{eq:metaClassIV-Aeq} and $g$ is the Lorentz metric $g=g_{ij}
dx^i\otimes dx^j$ with coefficients
\begin{eqnarray}
\label{eq:MetaclassIVDefineMetric}
(g_{ij})_{ij} &=& \left(\operatorname{diag}(1,\Psi,\Psi,-1)\right)^{-1}.
\end{eqnarray}
On the other hand, suppose $\alpha_1\neq \alpha_3$.
Let $\Psi$ be one of the two roots to the quadratic equation
\begin{eqnarray}
\label{eq:PsiQuadraticEquationIV}
\Psi^2 +  D_1 \Psi -1 &=& 0,
\end{eqnarray}
where (see \cite[Theorem 2.1 \emph{(iii)}]{Dahl:2011:DoubleCone}),
\begin{eqnarray*}
  D_1 &=& \frac{(\alpha_2-\alpha_3)^2 + \beta_2^2-\alpha_4^2}{\beta_2\alpha_4}.
\end{eqnarray*}
Then $\Psi\in \setR\slaz$ and since $\alpha_1\neq \alpha_3$ 
equation \eqref{eq:PsiQuadraticEquationIV} implies that
$\Psi\neq \frac{\alpha_4}{\beta_1}$. Thus $\Xi\in \setR\slaz$ when
\begin{eqnarray*}
\Xi &=& \frac 1 2\left(\alpha_4 - \beta_1 \Psi\right).
\end{eqnarray*}
Let $C_1,C_2$ be as in  the $\alpha_1=\alpha_3$ case and let
%
$\rho = \operatorname{sgn}\Xi$.
Then equation \eqref{eq:RepresentationOfDoubleCone} holds when 
$g$ is the Lorentz metric in equation \eqref{eq:MetaclassIVDefineMetric}
and $A$ is the bivector 
$A=\frac 1 2 A^{ij} \pd{}{x^i}\wedge \pd{}{x^j}$ with
coefficients 
\begin{eqnarray*}
(A^{ij})_{ij} &=& 
\begin{pmatrix} 
0    & 0 &    0 & \frac{ \alpha_3-\alpha_1}{2\rho\sqrt{\vert\Xi\vert}}\\
&           0    &               \sqrt{\vert \Xi\vert}    &  0 \\
&                    &  0 & 0 \\
&                    &       & 0
\end{pmatrix}.
\end{eqnarray*}  
This completes the proof of implication
\ref{thm:LBFFactorTheorem:I} $\Rightarrow$ \ref{thm:LBFFactorTheorem:III}.
\end{proof}


Let us first emphasise that the conditions in Theorem \ref{thm:MainDecomp} are written analogously
to the conditions in Theorem \ref{thm:mainResult}.  In each theorem, condition
\ref{thm:LBFFactorTheorem:I} is the dynamical description of the medium, condition
\ref{thm:LBFFactorTheorem:II} is a characterisation of the medium and condition
\ref{thm:LBFFactorTheorem:III} is a general representation formula. Let us also emphasise that in
suitable limits,  condition \eqref{eq:mainThmCondition} in Theorem \ref{thm:MainDecomp} reduces
to the closure condition $\kappa^2 = -\lambda \operatorname{Id}$ in Theorem \ref{thm:mainResult},
and representation formula \eqref {eq:RepresentationOfDoubleCone} in Theorem \ref{thm:MainDecomp}
reduces to $\kappa = f\ast_g$ in Theorem \ref{thm:mainResult}. Let us also emphasise that in both 
theorems, all conditions are tensorial, and do not depend on coordinate expressions.
A difference between the theorems is that 
Theorem \ref{thm:mainResult} is a global result, while Theorem \ref{thm:MainDecomp} is
a pointwise result.

All the mediums in Theorem \ref{thm:MainDecomp} satisfy the technical assumptions in Theorem
\ref{thm:LBFFactorTheorem} with either $D=A$ or $D=B$ when $A$ and $B$ are as in equation
\eqref{eq:mainThmCondition}. 

As described in the introduction, condition \ref{thm:LBFFactorTheorem:II} in Theorem
\ref{thm:MainDecomp} is a slight strengthening of the conditions derived in
\cite{LindellBergaminFavaro:Decomp} (see Theorem \ref{eq:decompRel} in the above).  
Representation formula \eqref{eq:RepresentationOfDoubleCone} in Theorem \ref{thm:MainDecomp} is also
adapted from \cite{LindellBergaminFavaro:Decomp}.  For constant medium tensors on $\setR^4$, Theorem
\ref{thm:MainDecomp} implies that if $\kappa$ is invertible, skewon-free and has a double light
cone, then $\kappa$ is algebraically decomposable, and hence decomposable by
\cite{LindellBergaminFavaro:Decomp} (see Theorem \ref{eq:decompRel}).  In this setting, Theorem
\ref{thm:MainDecomp} explicitly shows that the behaviour of signal-speed imposes a constraint on the
behaviour of polarisation.  This can be seen as somewhat unexpected.  However, the explanation is
that polarisation and signal speeds are not independent for a propagating wave, but constrained
by equation \eqref{eq:xiVkappaxiAlpha=0}.  For a further discussion, see \cite{Dahl:2011:Closure}.
It is also instructive to note that condition \eqref{eq:mainThmCondition} is a second order polynomial
constraint on the coefficients in $\kappa$, but the definition of a double light cone involves the
Fresnel surface, which is a constraint involving third order polynomials of the coefficients in
$\kappa$.  The same phenomenon appears in equivalence \ref{coIII} $\Leftrightarrow$ \ref{coI} in
Theorem \ref{thm:mainResult}.

Part of condition \ref{thm:LBFFactorTheorem:II} is condition \emph{(a)}, that states that
the Fresnel surface of $\kappa$ contains no two dimensional subspace. Let us describe five results
where this condition also appears. First, if the Fresnel surface of a $\kappa\in \medTwTensor(N)$
can be written as $F_p(\kappa)=\{\xi \in T^\ast_p(N) : (g(\xi,\xi))^2=0\}$ for a pseudo-Riemann
metric $g$, then condition \emph{(a)} is satisfied if and only if $g$ has signature $(--++)$. This
follows by a result of J.~Montaldi \cite{Montaldi:2006}.  For example, if
$g=\operatorname{diag}(-1,-1,1,1)$, then $F_p(\kappa)$ contains the $2$-dimensional subspace
$\operatorname{span}\{\pd{}{x^0}+ \pd{}{x^3}, \pd{}{x^1}+ \pd{}{x^2}\}$.
Second, one can prove that condition \emph{(a)} is always satisfied if $\kappa$
decomposes into a double light cone (Proposition 1.3 in \cite{Dahl:2011:DoubleCone}).  
Third, in matter dynamics systems, condition \emph{(a)} can be motivated by the behaviour of energy
\cite{RSS:2011}.  In the terminology of \cite{RSS:2011}, condition \emph{(a)} can be replaced by the
stronger condition that $\kappa$ is \emph{bihyperbolic}.
Fourth, condition \emph{(a)} also appears in the study of the well posedness of Maxwell's equations as an
initial value problem \cite{Schuller:2010}.  
Lastly, in the normal form representation of skewon-free medium tensors in \cite{Schuller:2010},
condition \emph{(a)} simplifies the representation since the condition excludes all but the first 7
coordinate representations. 
See \cite{Schuller:2010} and Section \ref{sec:normalSchuller} in the above.

%
%

When equivalence holds in Theorem \ref{thm:MainDecomp}, there does not seem to be a simple relation
between parameters $C_1, C_2, \rho, A, g$ in equation \eqref {eq:RepresentationOfDoubleCone} and
parameters $\mu, \lambda, \rho,A,B$ in equation \eqref{eq:mainThmCondition}. However, if equation
\eqref {eq:RepresentationOfDoubleCone} holds for an $A$ such that $A\wedge A =0$
(that is, $A$ is \emph{decomposable} or \emph{simple} \cite[p.~185]{Cohn:2005}), 
then 
equation \eqref{eq:mainThmCondition} holds for parameters
\begin{eqnarray*}
   \mu = -C_2, \quad \lambda = -C_1^2, \quad B = A(\ast_g).
\end{eqnarray*}
Using a Gr\"obner basis argument one can show that the tensor $\kappa$ defined by equation
\eqref{eq:MetaClassI} when $\beta_1=\beta_2=\beta_3=1$, $\alpha_1 = 1$ and $\alpha_2=\alpha_3=2$ is
invertible and has a double light cone. However, it can not be written as in equation \eqref
{eq:RepresentationOfDoubleCone} for an $A$ such that $A\wedge A = 0$. 

\subsection*{Acknowledgements}
This work has been supported by the Academy of Finland (project 13132527) and by the Institute of
Mathematics at Aalto University. I would like to thank Luzi Bergamin, Alberto Favaro and Ismo
Lindell for useful discussions on this topic.

\providecommand{\bysame}{\leavevmode\hbox to3em{\hrulefill}\thinspace}
\providecommand{\MR}{\relax\ifhmode\unskip\space\fi MR }
\providecommand{\MRhref}[2]{%
  \href{http://www.ams.org/mathscinet-getitem?mr=#1}{#2}
}
\providecommand{\href}[2]{#2}

\end{document}